\newcommand\overmat[3][0pt]{%
  \makebox[0pt][l]{$\smash{\overbrace{\phantom{%
    \begin{matrix}\phantom{\rule{0pt}{#1}}#3\end{matrix}}}^{\text{#2}}}$}#3}
\newcommand{\parabf}[1]{\smallskip\noindent\textbf{#1}.}
\newcommand{\parait}[1]{\smallskip\noindent\textit{#1}.}
\newcommand{\object}{\ensuremath{o}\xspace}
\newcommand{\secret}{\ensuremath{s}\xspace}
\newcommand{\objectspace}{\ensuremath{\mathbb{O}}\xspace}
\newcommand{\secretspace}{\ensuremath{\mathbb{S}}\xspace}
\newcommand{\guessspace}{\ensuremath{\mathbb{W}}\xspace}
\newcommand{\priors}{\ensuremath{\pi}\xspace}
\newcommand{\prior}[1]{\ensuremath{\pi_{{#1}}}\xspace}
\newcommand{\channel}{\ensuremath{\mathcal{C}}\xspace}
\newcommand{\ChA}{\channel^1}
\newcommand{\ChB}{\channel^2}
\newcommand{\unipriors}{\ensuremath{\upsilon}\xspace}
\newcommand{\system}{\ensuremath{(\pi, \channel)}\xspace}
\newcommand{\nsecrets}{\ensuremath{n}\xspace}
\newcommand{\adversary}{\ensuremath{\mathcal{A}}\xspace}
\newcommand{\advantage}{\ensuremath{\mathsf{Adv}}\xspace}
\newcommand{\distset}[1]{\mathcal{D}(#1)}
\newcommand{\distsetcorner}{\ensuremath{\mathcal{U}\xspace}}
\newcommand{\lnorm}[1]{\ensuremath{L_{{#1}}}\xspace}  %
\newcommand{\TotalVar}{\mathrm{tv}}
\renewcommand{\norm}[1]{\left\lVert #1 \right\rVert}
\newcommand{\Diam}[1]{\mathrm{diam}(#1)}
\newcommand{\ConvHull}[1]{\mathrm{ch}(#1)}
\newcommand{\Reals}{\mathbb{R}}
\newcommand{\Wide}[1]{~#1~}
\newcommand{\eqdef}{\ensuremath{\;\stackrel{\rm def}{=}\;}}
\newcommand{\mulleakage}{\ensuremath{\mathcal{L}^{\times}}\xspace}
\newcommand{\mulcapacity}{\ensuremath{\mathcal{M\!L}^{\times}}\xspace}
\newcommand{\addleakage}{\ensuremath{\mathcal{L}^{+}}\xspace}
\newcommand{\diffp}{\ensuremath{\varepsilon}\xspace}
\newcommand{\bayesrisk}{\ensuremath{R^{*}}\xspace}
\newcommand{\errorguesspriors}{\ensuremath{G}\xspace}
\newcommand{\minbeta}{\ensuremath{\beta^{*}}\xspace}
\newcommand{\minchannel}{\ensuremath{\channel^*}\xspace}
\newcommand{\minprior}{\ensuremath{\priors^*}\xspace}
\newcommand{\cmark}{\ding{51}}%
\newcommand{\xmark}{\ding{55}}%
\newcommand{\census}{\texttt{Census1990}\xspace}
\newtheorem{corollary}{Corollary}
\newtheorem{lemma}{Lemma}
\newtheorem{theorem}{Theorem}
\newtheorem{proposition}{Proposition}
\newtheorem{fact}{Fact}
\theoremstyle{definition}
\newtheorem{definition}{Definition}
\newtheorem{game}{Game}
  \def\beta{beta}%
  \def\minbeta{min-beta}%
\begin{document}

\title{Bayes Security: A Not So Average Metric} %

\author{
	\IEEEauthorblockN{Konstantinos Chatzikokolakis\textsuperscript{$\dagger$}}
	\IEEEauthorblockA{University of Athens\\
		kostasc@di.uoa.gr}
	\and
	\IEEEauthorblockN{Giovanni Cherubin\textsuperscript{$\dagger$}}
	\IEEEauthorblockA{Microsoft Research\\
		gcherubin@microsoft.com}
	\and
	\IEEEauthorblockN{Catuscia Palamidessi\textsuperscript{$\dagger$}}
	\IEEEauthorblockA{INRIA, École Polytechnique\\
		catuscia@lix.polytechnique.fr}
	\and
	\IEEEauthorblockN{Carmela Troncoso\textsuperscript{$\dagger$}}
	\IEEEauthorblockA{EPFL\\
		carmela.troncoso@epfl.ch}
}

\IEEEpubid{\begin{minipage}{\textwidth}\ \\[12pt]
		$\dagger$ Equal contribution.
\end{minipage}}

\maketitle

\begin{abstract}
Security system designers favor worst-case security metrics,
such as those derived from differential privacy (DP), due to the
strong guarantees they provide. On the downside, these guarantees
result in a high penalty on the system's performance.
In this paper, we study Bayes security, a security metric
inspired by the cryptographic advantage.
Similarly to DP, Bayes security
i) is
independent of an adversary's prior knowledge,
ii) it captures the worst-case scenario for the two most
vulnerable secrets (e.g., data records);
and iii) it is easy to compose, facilitating security analyses.
Additionally, Bayes security
iv) can be consistently estimated in
a black-box manner, contrary to DP, which is
useful when a formal analysis is not feasible;
and v) provides a better utility-security
trade-off in high-security regimes because it quantifies the risk
for a specific threat model as opposed to threat-agnostic
metrics such as DP.\\
We formulate a theory around Bayes security, and we provide
a thorough comparison with respect to well-known metrics,
identifying the scenarios where Bayes Security is advantageous for
designers.
\end{abstract}

\begin{IEEEkeywords}
Leakage, Quantitative Information Flow,
Bayes risk, Bayes security metric, Local differential privacy
\end{IEEEkeywords}

\maketitle

\section{Introduction}\label{sec:intro}

Quantifying the level of protection given by security
and privacy-preserving mechanisms is a fundamental
process in secure system engineering.
To perform a quantitative analysis, one needs to define
appropriate metrics that capture the adversary's gain,
and, ultimately,
what are the risks for the system's users.

A common way of evaluating threats in security and privacy applications
is to quantify
the probability that an adversary guesses some secret
information; this metric is referred to as the
\textit{success rate} or \textit{accuracy} of an attacker.
For example, membership inference attacks against machine
learning (ML) models~\cite{shokri2017membership},
where the attacker aims at guessing
if a data record was used for training the model,
have been for long evaluated w.r.t. the attacker's accuracy.

\parabf{Average-case metrics}
The success rate (or accuracy) has a very clear interpretation:
it measures the probability that an adversary succeeds in the attack.
An important special case, the \textit{Bayes vulnerability} (e.g., \cite{smith2009foundations}), is the
accuracy of the (Bayes) optimal adversary, who has maximal information about the underlying uncertainty.
Both Bayes vulnerability and accuracy rely on the
the \textit{prior} probability of the secret information
that the attacker is trying to guess;
unfortunately, this can result
in misleading conclusions about an attack's strength~\cite{smith2009foundations}.
In the membership inference example, if the prior probability
that a data record is low (say, $0.1$),
a strawman attack that always guesses ``non-member''
will achieve $90\%$ accuracy;
yet, this is a rather weak attack.
This shows that the accuracy metric does not characterize well the risk of this attack.

An alternative criterion, used to evaluate cryptographic primitives,
is the \textit{advantage} (e.g.,~\cite{bellare2005introduction}).
Advantage defines the prior probability over the secrets
to be uniform, by construction, and it relates
this prior probability to the
probability that the adversary succeeds after having
access to the model (accuracy).
\textit{Intuitively}, this metric disregards the contribution of
the prior, and it quantifies the information
leakage of the algorithm itself;
however, to the best of our knowledge, no known result
shows that this
metric is
prior-independent.

Both cryptographic advantage and Bayes vulnerability are \textit{threat-specific},
i.e., they are connected to the threat model under which security is quantified.
This gives a precise interpretation of what attacks they protect against.
However, they are rarely used to study complex real-world
systems, such as ML training algorithms.
The main reason is that, due to complexity, one often
needs to evaluate the security of individual parts of the algorithm and
then \textit{compose them};
this is not known to be possible with these metrics.

\parabf{Worst-case metrics}
At the other end of the spectrum,
Differential Privacy (DP)
has become the golden standard in privacy analysis~
\cite{dwork2006differential}.
In DP, a parameter $\varepsilon$ bounds the probability
that an algorithm's output \textit{leaks any information}.
There are several reasons why DP is generally preferred
over other metrics:
1) DP is easy to compose analytically;
e.g., if two
algorithms are resp. $\varepsilon_1$- and $\varepsilon_2$-DP,
their cascade composition is $(\varepsilon_1+\varepsilon_2)$-DP.
2) DP is \textit{prior-independent}:
it measures the risk of releasing a secret
via the algorithm, independently of the secret's prior
probability;
3) DP protects against virtually any threat model: its guarantees hold whether the
adversary wishes to learn an entire data record or just
one bit of information;
we refer to this property as being \textit{threat-agnostic}.
4) DP considers the \textit{worst-case} scenario over the outputs, ensuring
robustness against any threat: it bounds the best gain an adversary can have, even if their maximum gain
is achieved with negligible probability.

DP, however, also comes with disadvantages.
First, DP is often too strict of a requirement:
in many security settings, such as traffic analysis, side
channel protection, and privacy-preserving ML (PPML), DP mechanisms that provide high
protection levels incur severe utility loss.
This mostly comes from the fact that DP is threat-agnostic.
Second, it is theoretically impossible to estimate empirically
$\varepsilon$-DP in a consistent manner (e.g.,~\cite{cherubin2019fbleau}) for black-box mechanisms;
for example, an empirical DP estimate would fail to properly
assess a mechanism that violates $\varepsilon$-DP
with negligible probability.

\parabf{Bayes security}
In this paper, we study the multiplicative risk leakage ($\beta$),
a metric that generalizes
the cryptographic advantage defined for a
Bayes-optimal adversary~\cite{cherubin2017bayes}.
Specifically,
we focus on the minimizer of $\beta$ across all prior
probability distributions, which we call
the \textit{Bayes security metric} ($\beta^*$).
We identify several properties about $\beta^*$ that
make it suitable
for studying security and privacy threats of complex algorithms:
1)~similarly to the Bayes vulnerability and the advantage,
Bayes security is threat-specific;
2)~differently from them, it quantifies the
risk for the two most vulnerable secrets (\textit{worst-case}).
3)~Similarly to DP,
the \textit{composition} of Bayes secure algorithms is easy to study;
4)~the formal analysis of complex algorithms via Bayes security
is further aided by its direct relation with the total variation
distance between the output distributions of the algorithm;
5)~when a formal analysis is not possible (e.g., one needs to study
a black-box system),
there are consistent methods for estimating Bayes security
(\autoref{sec:fast-minimization}).

Finally, because of its construction, Bayes security captures
the \textit{average} (i.e., expected) risk for the \textit{worst-case}
pair of inputs (e.g., data records, users). In this sense,
it can be regarded as a middle way between average and worst-case
security metrics;
we argue that it gains benefits from both.

We summarize our contributions as follows:

\begin{itemize}[leftmargin=*]

	\item[\cmark] We study multiplicative risk leakage~\cite{cherubin2017bayes},
		$\beta$, a generalization of the advantage.
		We show	that it reaches its least secure setting, $\beta^*$,
		when assigning a uniform prior to the two most vulnerable
		secrets.
		We call this minimum Bayes security, which captures
		the expected risk for the two secrets that are the easiest
		to distinguish for an optimal adversary (\autoref{sec:main-result}).

	\item[\cmark] We study compositionality rules for
		algorithms that satisfy Bayes security (\autoref{sec:composing}).

	\item[\cmark] We study the relation between Bayes security and
		mainstream security and privacy notions.
		We provide a game-based interpretation of Bayes security, which is equivalent to IND-CPA,
		and one for local DP.
		We derive bounds w.r.t. DP and local DP, which
		improve on the bound
		by Yeom et al.~\cite{Yeom:18:CSF}.
		Our analysis shows Bayes security is in between
		worst-case metrics (e.g., DP) and average ones
		(\autoref{sec:other-notions}).
		This enables system designers to explore new
		security-utility trade-offs (\autoref{sec:discussion}).

	\item[\cmark] We derive the Bayes security of three mainstream
		privacy mechanisms: Randomized Response, and the Laplace
		and Gaussian mechanisms (\autoref{sec:case-study}),
		and discuss suitable applications (\autoref{sec:discussion}).

	\item[\cmark] Finally, we provide efficient means to compute $\minbeta$ in white- and black-box scenarios
		(\autoref{sec:fast-minimization}).

\end{itemize}

\noindent
Due to space constraints, proofs are given in the appendix.

\section{Preliminaries}\label{sec:preliminaries}

We consider a \textit{system} \system, where
a \textit{channel} or \textit{mechanism} $\channel$
protects secrets $s \in \secretspace$.
Let $\distset{S}$ be the set of probability distributions over a set $S$.
Secrets are selected as inputs to the channel according to a \textit{prior}
probability distribution $\priors\in\distset{\secretspace}$; we write $\pi_s \eqdef P(s)$.
The channel is a matrix defining the posterior probability of
observing an output $o \in \objectspace$ given an input $s\in\secretspace$:
$\channel_{s,o} \eqdef P(o \mid s)$.
We denote by $\channel_s \in \distset{\objectspace}$ the $s$-th row
of $\channel$ (which is a distribution over $\objectspace$), and
by $\channel_\secretspace$ the set of all rows of $\channel$.
\autoref{tab:symbols-table} summarizes our notation.

\parabf{Adversarial Goal}
We consider a passive adversary $\adversary$ who, given an output $o$, aims at inferring which secret $s$ was input to the mechanism. We model this adversary with the following indistinguishability game,
which we call IND-BAY:

\begin{game}[Indistinguishibility Game]
\centering
\procedure[linenumbering,syntaxhighlight=off]{IND-BAY$_\channel^\adversary$}{%
	\adversary \gets \channel,\priors\\
	s \overset{\priors}{\gets} \secretspace\\
	o \overset{P(o|s)}{\gets} \objectspace\\
	s' \gets \adversary(o)\\
	\pcreturn s = s'
}
\label{game:IND-BAY}
\end{game}

We consider an optimal adversary that has perfect knowledge of
the channel $\channel$ and of the prior distribution over the secret inputs $\pi$ (line 1).
A challenger samples a secret $s$ according
to the prior $\priors$ (line 2), and inputs it to the channel $\channel$
to obtain an observable output $o$ (line 3). For simplicity, the game
considers an individual observation, but we note that sequences of
observations (e.g., representing multiple uses of the same channel to
hide one secret, or simultaneous use of two channels with the same secret)
can be accounted for by redefining $o$ to be a vector.
Upon observing the output $o$.
the adversary produces a prediction $s'$ (line 4).
The adversary wins if they guess the secret correctly: $s=s'$ (line 5).
We evaluate the adversary $\adversary$ with respect to their expected prediction
error according to the 0-1 loss function: $R^\adversary \eqdef P(s \neq \adversary(o)) = P(s \neq s')$.
Extensions to further loss functions are possible, but out of the scope of this paper.

This formulation is different from typical cryptographic games because of the following reasons.
First, we assume an optimal adversary:
instead of providing them with knowledge of the cryptographic algorithm except
for the key, and let them query the primitive to learn its statistical behavior,
we assume that the adversary has perfect knowledge of the probabilistic behavior of the channel.
Second, we compute the advantage with respect to the adversary's error, while cryptographic games
compute the adversary's probability of success.
Third, this game captures an \textit{eavesdropping} adversary that
\emph{cannot} influence the secret used by the challenger to produce the observable output.
This is considered to be a weak adversary in cryptography,
where typically the adversary is allowed to provide inputs to the algorithm under attack.
However, it corresponds to many security and privacy problems where the adversary
cannot influence the secret and only observes channel outputs: website fingerprinting~\cite{hayes2016kfingerprinting,wang2013website},
privacy-preserving distribution estimation~\cite{murakami2018distribution,Murakami019,erlingsson2014rappor}, side channel attacks~\cite{Kocher96,koepf2012automatic,StandaertA08},
or pseudorandom number generation.

\parabf{Adversarial Models} In this paper, we consider the \textit{Bayes adversary}, an idealized adversary
who knows both prior $\priors$ and channel matrix $\channel$, and guesses according to the Bayes rule:
$$s'=\adversary^*(o, \priors, \channel) = \argmax_{\secret \in \secretspace} P(s|o) = \argmax_{\secret \in \secretspace} \channel_{\secret, \object} \prior{\secret} \,.$$
The expected error of the Bayes adversary (\textit{Bayes risk}) is:
$$R^{\adversary^*}(\priors, \channel) = \bayesrisk(\priors, \channel) = 1 - \sum_{\object \in \objectspace}
\max_{\secret \in \secretspace} \channel_{\secret, \object} \prior{s} \,.$$

When this adversary is confronted with a perfect channel,
whose outputs leak nothing about the inputs, their best strategy is to guess according to priors:
$s' = \arg\max_{\secret \in \secretspace} \prior{s}$.
The expected error of this strategy is the \textit{random guessing error}:
$\errorguesspriors(\priors) = 1 - \max_{\secret \in \secretspace} \prior{s}.$
Naturally, $\errorguesspriors(\priors) \geq \bayesrisk(\priors, \channel)$.

\parabf{Multiplicative Bayes risk leakage}
We study the properties of a metric, $\beta$, defined as~\cite{cherubin2017bayes}:
$$\beta(\priors, \channel) \eqdef \frac{\bayesrisk(\priors, \channel)}{\errorguesspriors(\priors)} \,,$$
where it is assumed that $\errorguesspriors(\priors) > 0$;
we refer to $\beta$ as the \textit{multiplicative Bayes risk leakage}.
Inspired from the cryptographic advantage (\autoref{sec:Adv}),
$\beta$ captures how much better than random guessing an adversary
can do. It takes values in $[0,1]$; $\beta=1$ when the system is perfectly
secure (i.e., it exhibits no leakage), and $\beta=0$ when the adversary always
guesses the secret correctly.
In the next section, we define the Bayes security metric $\beta^*(\channel)$
to be the minimizer (i.e., least secure configuration)
of $\beta(\pi, \channel)$ for any prior $\pi$.
We then study its properties, which we argue make it suitable
for analyzing the security of complex real-world algorithms.

We note that $\beta$ is closely related to the
\emph{multiplicative Bayes vulnerability leakage}~\cite{braun2009quantitative}, which
is defined as: $\mulleakage(\pi, \channel) \eqdef \frac{1-\bayesrisk(\priors, \channel)}{1-G(\priors)}$.
Differently from $\beta$, $\mulleakage$ is defined
for the
adversary's probability of \emph{success} (i.e.,
Bayes \textit{vulnerability}) rather than failure,
and it takes values in $[0, n]$.
Despite their similar definition,
these two metrics behave very differently (\autoref{sec:other-notions}).
An important result for
$\mulleakage$
is that it takes its worse value (i.e., least secure configuration)
on a uniform prior over the secrets~\cite{braun2009quantitative}.
Therefore, even when the real priors are unknown, a security analyst can easily
compute a bound on the security of the system.
In the next section, we derive the counterpart result for the
$\beta$: it reaches its least secure configuration when
setting a uniform prior on the
two most vulnerable secrets, and a $0$ prior probability elsewhere;
the proof is substantially more involved
than the one for $\mulleakage$.
We further discuss the relation between the Bayes Security metric and
multiplicative leakage in \autoref{sec:MultLeak}.

\begin{table}[t]
	\centering
	\caption{Notation}
	\begin{tabularx}{\linewidth}{lX}
		\textbf{Symbol} & \textbf{Description} \\
		$\secretspace = \{1, ..., n\}$ & The secret space.\\
		$\objectspace = \{1, ..., m\}$ & The output space.\\
		$\channel_{s, o}$ (abbr. $\channel$) & A channel matrix,
			where $\channel_{s, o} = P(o\mid s)$ for $s \in \secretspace$,
			$o \in \objectspace$.\\
		$\channel_{s}$ & The $s$-th row of a channel matrix. It corresponds
			to the probability distribution $P(o \mid s), \forall o \in \objectspace$.\\
		$\priors \in [0, 1]^n$ & A vector of prior probabilities over the secret space.
			The $i$-th entry of the vector is $\prior{i}$.\\
		$\prior{ij}$ & A prior vector with exactly 2 non-zero entries, in position $i$ and $j$,
			with $i \neq j$.\\
		$\unipriors = (\nicefrac{1}{n}, ... \nicefrac{1}{n})$ & Uniform priors
			for a secret space of size $|\secretspace| = n$.\\
		$\bayesrisk(\priors, \channel)$ (abbr. $\bayesrisk$) & The Bayes risk of a channel.\\
		$\errorguesspriors(\priors)$ (abbr. $\errorguesspriors$) & The random guessing error
			(error when only priors' knowledge is available).\\
		$\beta(\priors, \channel)$ & Bayes security of a channel.\\
		$\minbeta(\channel)$ (abbr. $\minbeta$) & Min Bayes security of a channel.
	\end{tabularx}
	\label{tab:symbols-table}
\end{table}

\section{The Bayes security metric}\label{sec:main-result}

Leakage notions based on the Bayes risk generally depend
on the prior distribution over the secrets.
This makes them unsuitable for measuring
security in real-world applications where the true priors are unknown, e.g., traffic analysis\cite{dyer2012peek-a-boo,wright2007language} or membership inference attacks~\cite{shokri2017membership}, and result in an overestimation of a mechanism's security
if the real prior implies more leakage than the prior considered in the analysis.

Given the similarity between multiplicative \textit{risk} leakage and multiplicative \textit{vulnerability} leakage, one could expect that the uniform
prior also represents the worst case for the latter~\cite{braun2009quantitative}.
Unfortunately, this is not the case: \autoref{thm:bayes-inconsistency} (Appendix~\ref{sec:bayes-inconsistency})
shows that, for secret spaces $|\secretspace| > 2$,
there exists a prior $\priors$ for which $\beta$
is smaller than the one achieved for a uniform prior.

\parabf{Prior-independence for $\beta$}
In this section, we show that the multiplicative Bayes risk
leakage, $\beta(\pi, \channel)$, for a channel $\channel$,
is minimized when the prior $\pi$ assigns equal weight to
the two secrets that are maximally distant (according to
posterior distribution), and 0 to all other secrets.
We refer to this minimizer, representing the highest risk
for the channel w.r.t. adversary's prior knowledge,
as the \textit{Bayes security metric};
we denote it with $\beta^*(\channel)$
(omitting the argument if no confusion arises).
This result makes the Bayes security metric prior-independent:
for any prior knowledge the attacker may have in practice,
$\beta^*$ bounds their success.

For simplicity, we present our result in the \textit{one-try} attack scenario,
as formalized by the IND-BAY game: the adversary observes just \textit{one} output of the system before guessing
the secret input. In Section~\ref{sec:composing} we extend this result
to cases where the adversary can collect more observations.

\begin{restatable}{theorem}{ThmMinBeta}
	\label{thm:minbeta}
	Consider a channel $\channel$ on a secret space with
	$|\secretspace| \geq 2$.
	There exists a prior vector $\minprior \in \distset{\secretspace}$
	of the form
	$$\minprior = \{0, ..., 0, \nicefrac{1}{2}, 0, ..., 0, \nicefrac{1}{2}, 0, ..., 0\}$$
	such that
	$$\minbeta(\channel) = \beta(\minprior, \channel) = \min_{\priors\in\distset{\secretspace}} \beta(\priors, \channel) \,.$$
\end{restatable}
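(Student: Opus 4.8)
The plan is to pin down $\minbeta(\channel)$ in closed form and then exhibit a prior of the required two‑point shape that attains it. For two rows of $\channel$ write $\TotalVar(\channel_i,\channel_j)=\tfrac{1}{2}\sum_o|\channel_{i,o}-\channel_{j,o}|$ for their total variation distance, put $d\eqdef\max_{i\neq j}\TotalVar(\channel_i,\channel_j)$ (there is a single pair to consider when $|\secretspace|=2$), and let $(i^\star,j^\star)$ attain this maximum. Let $\minprior$ be the prior that assigns mass $\nicefrac{1}{2}$ to each of $i^\star$ and $j^\star$ and $0$ to every other secret; this has exactly the form in the statement. I will establish (a) $\beta(\minprior,\channel)=1-d$, and (b) $\beta(\priors,\channel)\geq 1-d$ for every prior $\priors$ with $\errorguesspriors(\priors)>0$; together these give $\minbeta(\channel)=\beta(\minprior,\channel)=\min_{\priors}\beta(\priors,\channel)$, and in passing the identity $\minbeta(\channel)=1-\max_{i\neq j}\TotalVar(\channel_i,\channel_j)$.

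Claim (a) is a short computation. For the uniform prior on a pair $\{i,j\}$ we have $\bayesrisk=1-\tfrac{1}{2}\sum_o\max(\channel_{i,o},\channel_{j,o})$ and $\errorguesspriors=\nicefrac{1}{2}$, and since $\sum_o\max(a_o,b_o)=1+\TotalVar(a,b)$ for any probability vectors $a,b$, this yields $\beta=1-\TotalVar(\channel_i,\channel_j)$; specialising to $(i^\star,j^\star)$ gives (a).

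Claim (b) is the heart of the argument. Fix a prior $\priors$ with $\errorguesspriors(\priors)>0$ and choose an index $k$ with $\prior{k}=\max_s\prior{s}$, so that $\errorguesspriors(\priors)=1-\prior{k}>0$. Set $g_o\eqdef\max_s\channel_{s,o}\prior{s}-\channel_{k,o}\prior{k}\geq 0$. Using $\sum_o\channel_{k,o}=1$ one rewrites $\bayesrisk(\priors,\channel)=(1-\prior{k})-\sum_o g_o$, so that
$$\beta(\priors,\channel)=1-\frac{\sum_o g_o}{1-\prior{k}}\,,$$
and it remains to show $\sum_o g_o\leq d\,(1-\prior{k})$. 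Observe that $g_o=\max_{s\neq k}(\channel_{s,o}\prior{s}-\channel_{k,o}\prior{k})^+$. Since $k$ is a most‑likely secret we have $\channel_{k,o}\prior{k}\geq\channel_{k,o}\prior{s}$ for every $s$, hence each bracketed difference is at most $\channel_{s,o}\prior{s}-\channel_{k,o}\prior{s}=\prior{s}(\channel_{s,o}-\channel_{k,o})$; taking positive parts, then the maximum over $s\neq k$, and finally bounding that maximum by the corresponding sum, we obtain $g_o\leq\sum_{s\neq k}\prior{s}(\channel_{s,o}-\channel_{k,o})^+$. Summing over $o$ and exchanging the order of summation,
\begin{align*}
\sum_o g_o \;&\leq\; \sum_{s\neq k}\prior{s}\sum_o(\channel_{s,o}-\channel_{k,o})^+ \;=\; \sum_{s\neq k}\prior{s}\,\TotalVar(\channel_s,\channel_k)\\
&\leq\; d\sum_{s\neq k}\prior{s} \;=\; d\,(1-\prior{k})\,,
\end{align*}
where we used $\sum_o(a_o-b_o)^+=\TotalVar(a,b)$ and $\TotalVar(\channel_s,\channel_k)\leq d$. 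This proves (b) and hence the theorem.

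The genuinely delicate point is the inequality chain ending in $g_o\leq\sum_{s\neq k}\prior{s}(\channel_{s,o}-\channel_{k,o})^+$: a priori the per‑output maximisers defining $\sum_o g_o$ may involve different secrets, so $\sum_o g_o$ cannot be compared directly with a single pairwise total variation. Substituting $\prior{s}$ for $\prior{k}$ in the subtracted term — legitimate precisely because $\prior{k}$ is the largest prior mass — bounds $g_o$ by a quantity linear in $\prior{s}$, at which point the maximum collapses into a sum and the summations can be exchanged. A heavier alternative would be to first argue, via concavity of $\priors\mapsto\bayesrisk(\priors,\channel)$ along edges of the simplex, that a minimiser can be taken with support of size two, and then optimise the single mixing weight within a fixed pair — where convexity of $p\mapsto\sum_o\max(\channel_{i,o}p,\channel_{j,o}(1-p))$ ensures the minimum is attained at $p=\nicefrac{1}{2}$; the argument above bypasses this reduction.
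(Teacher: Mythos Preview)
Your proof is correct, and it takes a genuinely different route from the paper's.

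The paper proceeds in two structural steps. First, it uses that $\bayesrisk(\cdot,\channel)$ is concave and $\errorguesspriors$ is convexly generated by the ``corner points'' $\distsetcorner=\bigcup_k\distsetcorner^{(k)}$ (uniform priors on $k$-subsets) to conclude that the minimum of the ratio $\beta$ is attained at some corner point. Second, it shows by induction on $|\secretspace|$ that among corner points the minimum is always achieved at order~$2$: given the uniform prior on $n{+}1$ secrets, it exhibits a uniform prior on $n$ of them with no larger $\beta$, via a careful accounting of column maxima and second maxima.

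You bypass both steps. Instead of first localising the minimiser and then identifying its support size, you compute the candidate value $1-d$ directly (your claim (a)) and then prove the global lower bound $\beta(\priors,\channel)\ge 1-d$ for every admissible prior (your claim (b)). The key move---replacing $\prior{k}$ by $\prior{s}$ in the subtracted term, legitimate because $k$ is a most-likely secret---linearises the per-output contribution in $\prior{s}$, which lets you swap the max for a sum, exchange summations, and land on $\sum_{s\neq k}\prior{s}\,\TotalVar(\channel_s,\channel_k)\le d(1-\prior{k})$. This is short, elementary, and as a bonus it proves the identity $\minbeta(\channel)=1-\max_{i\neq j}\TotalVar(\channel_i,\channel_j)$ (the paper's \autoref{thm:beta-star-as-diameter}) in the same breath. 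The paper's approach, on the other hand, is more modular: the corner-point reduction (\autoref{lemma:min-of-ratio}) is stated for arbitrary ratios of a concave numerator over a convexly generated denominator, which may be reusable for variants $\beta_\ell$ with other loss functions where a closed-form minimum like $1-d$ is not available.
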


In the following, we provide an intuition of the concepts involved with this
proof. %

We denote with $\distsetcorner^{(k)} \subset \distset{\secretspace}$, for $k = 1, ..., |\secretspace|$,
the set of distributions whose support has cardinality $k$,
and with a uniform distribution over its
non-zero components: %
$$
\distsetcorner^{(k)} \eqdef
\Big\{u \in \distset{\secretspace} \mid u_s \in \{0,\frac{1}{k}\}
\text{ for all } s\in\secretspace \Big\} \,.
$$

For example, if $n = 3$, then:
$\distsetcorner^{(1)} = \{(1, 0, 0), ..., (0, 0, 1)\}$,
$\distsetcorner^{(2)} = \{(\nicefrac{1}{2}, \nicefrac{1}{2}, 0),(\nicefrac{1}{2}, 0, \nicefrac{1}{2}), (0, \nicefrac{1}{2}, \nicefrac{1}{2})\}$,
and $\distsetcorner^{(3)} = \{(\nicefrac{1}{3}, \nicefrac{1}{3}, \nicefrac{1}{3})\}$.

For a fixed channel $\channel$, the proof of \autoref{thm:minbeta} is based on
demonstrating the following two steps:
\begin{enumerate}
	\item the function
		$\beta(\priors, \channel) = \nicefrac{\bayesrisk(\priors, \channel)}{\errorguesspriors(\priors)}$
		has its minimum in the set
		$\distsetcorner = \distsetcorner^{(1)} \cup ... \cup \distsetcorner^{(|\secretspace|)}$.
		The elements of $\distsetcorner$ are known in the literature as the
		\textit{corner points} of $\errorguesspriors(\priors)$;
	\item the minimizing prior $\minprior$ of $\beta(\priors, \channel)$ has cardinality 2;
		that is, $\minprior \in \distsetcorner^{(2)}$.
\end{enumerate}

The proof for the first step comes from the observation that the
function $\beta$ is the ratio
between a concave function, $\bayesrisk$, and a function $\errorguesspriors$
that is convexly generated by $\distsetcorner$.
Lemma~\ref{lemma:min-of-ratio} (Appendix~\ref{sec:proofminbeta}) shows that the minima of this
ratio exist, and that they must come from the set of
corner points of $\errorguesspriors$ (i.e., the set $\distsetcorner$).
This determines the form of the minimizing priors.

For the second step, under the constraints given by Lemma~\ref{lemma:min-of-ratio}, the Bayes risk
$\bayesrisk(\priors, \channel)$ decreases quicker than $\errorguesspriors(\priors)$
as we increase the number of 0'es in $\priors \in \distsetcorner$.
By excluding the solution $\minprior \in \distsetcorner^{(1)}$,
which would force the denominator $\errorguesspriors = 0$,
it follows that the minimizer of $\beta(\priors, \channel)$,
$\minprior$, has exactly 2 non-zero elements; that is,
$\minprior \in \distsetcorner^{(2)}$.

\parabf{Discussion}
\autoref{thm:minbeta} has several consequences.
First, the fact that $\beta^*(\channel)$ does not depend on
a prior means that it captures the actual leakage of the channel,
excluding any prior knowledge that the adversary may have.
Conveniently, after Bayes security is computed for the channel,
one can recover the success rate of the Bayes optimal adversary for desired
levels of attacker's knowledge (\autoref{sec:MultLeak}).
Second, the fact that Bayes security represents the
\textit{risk for the two leakiest secrets} means that:
\begin{itemize}
	\item If the two leakiest secrets can be determined a priori,
		this makes the security analysis straightforward (\autoref{sec:case-study});
	\item If the two leakiest secrets cannot be determined a priori,
		one only needs $O(n^2)$ computations (instead of $O(n!)$)
		to recover them.
\end{itemize}
Finally, \autoref{thm:minbeta} suggests that Bayes security
can be interpreted as a middle way between worst-case and
average-case security metrics:
it represents the expected (i.e., average) risk for the
two most vulnerable (i.e., worst-case) secrets.
We argue that this,
paired with the fact that Bayes security is threat-specific,
favors the interpretability of this metric.
In the next sections, we prove properties about Bayes
security which make it suitable for studying complex mechanisms.

\begin{figure}[ht!]
	\centering
	\includegraphics[width=0.45\linewidth]{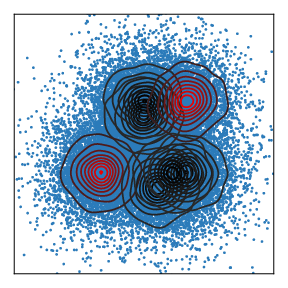}
	\caption{Posterior probability distribution for 5 secrets obfuscated
	with a two-dimensional Laplace. The Bayes security metric is
	the complement of the total variation distance between the posterior of the most distinguishable secrets (shown in red).}
	\label{fig:result-intuition}
\end{figure}

\parabf{Bayes security and total variation}
We now introduce an important result for $\beta^*$ which
helps analyzing mechanisms in practice:
Bayes security is the complement of the total
variation of the two maximally distant rows of the
channel:

\begin{restatable}{theorem}{ThmBetaStarAsDiameter}\label{thm:beta-star-as-diameter}
	For any channel $\channel$, it holds that
	\[
	\beta^*(\channel)
	~=~ 1 - \frac{1}{2}\max_{a,b\in\secretspace}\norm{\channel_a - \channel_b}_1
	~=~ 1 - \max_{a,b\in\secretspace}\TotalVar(\channel_a, \channel_b)
	~.
	\]
\end{restatable}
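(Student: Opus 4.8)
The plan is to reduce the statement directly to \autoref{thm:minbeta}. That theorem tells us the minimum $\beta^*(\channel) = \min_{\priors\in\distset{\secretspace}}\beta(\priors,\channel)$ is attained at some prior $\minprior \in \distsetcorner^{(2)}$, i.e.\ a prior of the form $\prior{ab}$ putting mass $\nicefrac{1}{2}$ on two distinct secrets $a,b$ and $0$ elsewhere. So it suffices to evaluate $\beta(\prior{ab},\channel)$ in closed form and then take the minimum over pairs $(a,b)$.

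First I would write out the two ingredients of $\beta(\prior{ab},\channel) = \nicefrac{\bayesrisk(\prior{ab},\channel)}{\errorguesspriors(\prior{ab})}$. The denominator is immediate: $\errorguesspriors(\prior{ab}) = 1 - \max_s(\prior{ab})_s = \nicefrac{1}{2} > 0$, so in particular $\beta$ is well defined at this prior. For the numerator, $\bayesrisk(\prior{ab},\channel) = 1 - \sum_{o\in\objectspace}\max\!\bigl(\nicefrac{1}{2}\,\channel_{a,o},\,\nicefrac{1}{2}\,\channel_{b,o}\bigr) = 1 - \nicefrac{1}{2}\sum_{o}\max(\channel_{a,o},\channel_{b,o})$. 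Now I use the elementary identity $\max(x,y) = \nicefrac{1}{2}\,(x+y+|x-y|)$ together with the fact that each row of $\channel$ is a probability distribution, hence $\sum_o \channel_{a,o} = \sum_o \channel_{b,o} = 1$; this gives $\sum_o \max(\channel_{a,o},\channel_{b,o}) = 1 + \nicefrac{1}{2}\norm{\channel_a - \channel_b}_1$, and therefore $\bayesrisk(\prior{ab},\channel) = \nicefrac{1}{2} - \nicefrac{1}{4}\norm{\channel_a - \channel_b}_1$.

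Dividing by $\errorguesspriors(\prior{ab}) = \nicefrac{1}{2}$ yields $\beta(\prior{ab},\channel) = 1 - \nicefrac{1}{2}\norm{\channel_a - \channel_b}_1$. Taking the minimum over all pairs of distinct secrets and invoking \autoref{thm:minbeta}, $\beta^*(\channel) = \min_{a\neq b}\bigl(1 - \nicefrac{1}{2}\norm{\channel_a - \channel_b}_1\bigr) = 1 - \nicefrac{1}{2}\max_{a\neq b}\norm{\channel_a - \channel_b}_1$; since $\norm{\channel_a - \channel_a}_1 = 0$ and $|\secretspace|\geq 2$, the constraint $a\neq b$ may be dropped from the maximum without changing its value. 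The second equality in the statement is then just the definition $\TotalVar(\channel_a,\channel_b) = \nicefrac{1}{2}\norm{\channel_a - \channel_b}_1$.

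There is no serious obstacle once \autoref{thm:minbeta} is available: the only points needing care are checking that $\errorguesspriors(\prior{ab})>0$ so that the ratio defining $\beta$ makes sense, and the short computation converting $\sum_o\max(\channel_{a,o},\channel_{b,o})$ into an $L_1$ norm. Were one to seek a self-contained argument not relying on \autoref{thm:minbeta}, essentially all the difficulty would move into re-establishing that the minimizing prior has support of size exactly two --- the genuinely hard step, handled there via the corner-point analysis of $\errorguesspriors$.
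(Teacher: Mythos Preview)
Your proof is correct and follows essentially the same approach as the paper: both reduce to \autoref{thm:minbeta}, compute $\beta(\prior{ab},\channel)$ explicitly, and convert $\sum_o\max(\channel_{a,o},\channel_{b,o})$ into $1+\tfrac{1}{2}\norm{\channel_a-\channel_b}_1$ using that the rows are probability distributions. The only cosmetic difference is that the paper phrases the key identity via $\max+\min=\text{sum}$ and $\max-\min=|\text{difference}|$, whereas you use the equivalent formula $\max(x,y)=\tfrac{1}{2}(x+y+|x-y|)$.
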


This result gives a clear interpretation of what $\beta^*$
represents: it measures the maximal distance
between the pairwise posterior distributions of the outputs
w.r.t. the secret inputs (\autoref{fig:result-intuition}).
Further, thanks to this result: i) it is easy to analyze mechanisms
both analytically (\autoref{sec:case-study})
and via estimation techniques (\autoref{sec:fast-minimization})
by exploiting the plethora of results surrounding the total
variation distance between distributions.

\section{The Bayes Security Metric under Composition}\label{sec:composing}

Some of the properties that made DP so popular
for studying complex algorithms are its compositionality rules:
given DP-compliant mechanisms, it is very easy to determine
the privacy of a mechanism that combines them
(e.g., by chaining them).
Further, compositionality enables studying complex threat
scenarios.
For example, while so far we have only considered an adversary who
observes the channel's output once,
it
is common that
a real adversary observes more than one channel at a time; e.g.,
observing obfuscated locations at different layers~\cite{WangG0WJS18} or
combining side channels~\cite{StandaertA08}.
Moreover, they can observe the output of two sequential channels, e.g., users' privacy-preserving interactions with a database through anonymous communication channels~\cite{ToledoDG16};
or they can observe more than one output from one channel,
e.g., by gathering several side channel measurements from
a hardware running cryptographic routines~\cite{Kocher96},
or observing more than one visit to a website through an anonymous
communication channel~\cite{wang2013website,juarez2014critical}.

In this section, we uncover compositionality rules for
Bayes security, which enable it to tackle the above examples.

\subsection{Parallel composition}
We first consider an adversary who has access
to the outputs of two channels that have as input the same secret~\cite{StandaertA08,WangG0WJS18} or where an adversary observes multiple channel outputs belonging to the same secret~\cite{Kocher96,wang2013website,juarez2014critical}.

Given two channels, $\channel_1: \secretspace\to\objectspace^1$ and $\channel_2:\secretspace\to\objectspace^2$,
their parallel composition is the channel
$\channel^1 || \channel^2: \secretspace\to\objectspace^1\times\objectspace^2$,
defined by $(\channel^1 || \channel^2)_{s,(o_1,o_2)} = \channel^1_{s,o_1} \cdot \channel^2_{s,o_2}$.

\begin{restatable}{theorem}{ThmParallelComp}
For all channels $\channel^1,\channel^2$ it holds that
\begin{equation*}
\minbeta(\channel^1 || \channel^2) \Wide\geq \minbeta(\channel^1)\cdot\minbeta(\channel^2) \,.
\end{equation*}
\end{restatable}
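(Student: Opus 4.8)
The plan is to work entirely through the total-variation characterisation of \autoref{thm:beta-star-as-diameter}, which turns the claim into a submultiplicativity statement about the \emph{overlap} $1-\TotalVar$ under products. By \autoref{thm:beta-star-as-diameter},
\[
\minbeta(\channel) ~=~ \min_{a,b\in\secretspace}\bigl(1-\TotalVar(\channel_a,\channel_b)\bigr),
\]
and the $a$-th row of $\channel^1 || \channel^2$ is by definition the product distribution $\channel^1_a \times \channel^2_a$ on $\objectspace^1\times\objectspace^2$. So, fixing a pair $a,b\in\secretspace$ and writing $P_i \eqdef \channel^i_a$, $Q_i \eqdef \channel^i_b$ for $i\in\{1,2\}$, it suffices to prove the inequality
\[
1-\TotalVar(P_1\times P_2,\ Q_1\times Q_2) ~\geq~ \bigl(1-\TotalVar(P_1,Q_1)\bigr)\bigl(1-\TotalVar(P_2,Q_2)\bigr)
\]
for every such pair, and then minimise over $a,b$.

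The core is this product lemma, which I would establish via the maximal-coupling / common-mass decomposition. Set $\alpha_i \eqdef 1-\TotalVar(P_i,Q_i) = \sum_{x}\min\bigl(P_i(x),Q_i(x)\bigr)$, and let $m_i(x) \eqdef \min\bigl(P_i(x),Q_i(x)\bigr)$ be the corresponding \emph{sub}-probability measure, which has total mass $\alpha_i$. By construction $P_i \geq m_i$ and $Q_i \geq m_i$ pointwise. Expanding $P_1\times P_2 = (m_1 + (P_1-m_1))\times(m_2 + (P_2-m_2))$ into four nonnegative terms, one of which is $m_1\times m_2$, gives $P_1\times P_2 \geq m_1\times m_2$ pointwise, and symmetrically $Q_1\times Q_2 \geq m_1\times m_2$. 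Hence the pointwise minimum of $P_1\times P_2$ and $Q_1\times Q_2$ dominates $m_1\times m_2$, and summing over $\objectspace^1\times\objectspace^2$ yields $1-\TotalVar(P_1\times P_2, Q_1\times Q_2) \geq (\text{total mass of } m_1\times m_2) = \alpha_1\alpha_2$, which is exactly the claimed inequality.

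To finish, I apply the product lemma to each pair $a,b$, obtaining $1-\TotalVar((\channel^1||\channel^2)_a,(\channel^1||\channel^2)_b) \geq \bigl(1-\TotalVar(\channel^1_a,\channel^1_b)\bigr)\bigl(1-\TotalVar(\channel^2_a,\channel^2_b)\bigr)$. Since both factors are nonnegative, and $\min_{a,b} f(a,b)g(a,b) \geq \bigl(\min_{a,b} f(a,b)\bigr)\bigl(\min_{a,b} g(a,b)\bigr)$ whenever $f,g\geq 0$, taking $\min_{a,b}$ on both sides and invoking \autoref{thm:beta-star-as-diameter} for each channel separately gives $\minbeta(\channel^1||\channel^2) \geq \minbeta(\channel^1)\cdot\minbeta(\channel^2)$.

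The main obstacle is really only the product lemma; the rest is bookkeeping with the $\TotalVar$ reformulation. Two points need care: $m_i$ is merely a sub-probability measure (mass $\alpha_i\leq 1$), so the argument must be phrased as pointwise domination rather than as a mixture of genuine probability distributions; and one should note the degenerate case in which a channel's rows could make a denominator $\errorguesspriors$ vanish, but then that channel's $\minbeta$ equals $1$ and the bound holds trivially — and in any event \autoref{thm:beta-star-as-diameter} already absorbs this. It is also worth remarking that the inequality is tight (attained, for instance, when $\channel^1=\channel^2$ is a suitable randomized-response channel), so no stronger prior-independent parallel-composition rule of this product form is available.
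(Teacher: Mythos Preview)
Your proof is correct and essentially identical to the paper's: the paper also writes $\beta(\pi_{ab},\channel)=\sum_o \min_{s\in\{a,b\}}\channel_{s,o}$ (your overlap $1-\TotalVar$), then uses the elementary inequality $(\min_i q_i)(\min_i r_i)\le \min_i(q_i r_i)$ twice---once pointwise on outputs (your ``product lemma'' stripped of the coupling language) and once over the pair $a,b$ (your final $\min fg\ge(\min f)(\min g)$ step). One minor correction to your closing parenthetical: tightness is \emph{not} attained by a nontrivial randomized-response channel (for a $2\times 2$ RR with diagonal $p>\tfrac12$ one gets $\minbeta(\channel\Vert\channel)=2(1-p)\neq 4(1-p)^2$); the paper's witness is the asymmetric channel $\bigl(\begin{smallmatrix}0.4&0.6\\0&1\end{smallmatrix}\bigr)$.
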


In layman's terms, the composition of two channels that
are respectively $\beta^*_1$-secure and $\beta^*_2$-secure
leads to a $\beta^*_1\beta^*_2$-secure channel.
This bound is tight.

Note that the security of this new channel is \textit{not
necessarily} minimized by the secrets that minimize the
composing channels $\channel^1,\channel^2$, not even
when the channel is composed with itself:

\begin{proposition}
Let $\channel$ be a channel for which $\beta$ is minimized
for secrets $(s_1, s_2)$.
Then the composition channel $\channel' := \channel || \channel$
is not necessarily minimized by secrets
$(s_1, s_2)$.
\end{proposition}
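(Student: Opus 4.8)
The plan is to prove the claim by exhibiting an explicit channel. The key tool is \autoref{thm:beta-star-as-diameter}: for any channel $D$, a prior minimizing $\beta(\cdot,D)$ is the uniform prior on the pair of secrets $(a,b)$ that realizes $\max_{a,b}\TotalVar(D_a,D_b)$ --- equivalently, the pair minimizing the overlap $\omega(D_a,D_b):=\sum_o\min(D_{a,o},D_{b,o})=1-\TotalVar(D_a,D_b)$. Since the $s$-th row of $\channel':=\channel||\channel$ is the product distribution $\channel_s\otimes\channel_s$, the prior minimizing $\beta(\cdot,\channel')$ is the uniform prior on the pair maximizing $\TotalVar(\channel_a\otimes\channel_a,\channel_b\otimes\channel_b)$. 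So it is enough to produce a channel whose one-shot leakiest pair of rows differs from its two-shot leakiest pair.

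Before fixing numbers I would record the mechanism that makes this possible. Data processing gives $\TotalVar(\channel_a\otimes\channel_a,\channel_b\otimes\channel_b)\ge\TotalVar(\channel_a,\channel_b)$ (so $\minbeta$ only drops under composition, consistent with the parallel-composition bound above), but the gain from a second independent observation depends on the \emph{shape} of the two rows, not merely on their distance: two rows that are permutations of one another (``level-sets'' of each other) gain essentially nothing, whereas two rows that are near-singular on the coordinates where they differ nearly double their distance, the maximal gain being $2t-t^2$ at distance $t$. The idea is thus to make the one-shot leakiest pair of the rigid (permutation) type and some other, one-shot-safer, pair of the fragile type.

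Concretely, I would take the $3\times 4$ channel with rows $p=(0.4,0.4,0.1,0.1)$, $q=(0.1,0.1,0.4,0.4)$, $r=(0,0.4,0.5,0.1)$. A short computation gives $\omega(p,q)=0.4$ and $\omega(p,r)=\omega(q,r)=0.6$, so $\beta(\cdot,\channel)$ is uniquely minimized by the uniform prior on $(s_1,s_2)$, where $s_1,s_2$ are the secrets with rows $p,q$. For the self-composition one computes the product overlaps $\omega(p\otimes p,q\otimes q)=0.4$ (the permutation pair $(p,q)$ gains nothing from a second observation) and $\omega(p\otimes p,r\otimes r)=0.36<0.4$; hence $(s_1,s_2)$ does not maximize $\TotalVar$ for $\channel'$, and by \autoref{thm:minbeta} together with \autoref{thm:beta-star-as-diameter} the minimizer of $\beta(\cdot,\channel')$ is supported on $(s_1,s_3)$ (with $s_3$ the secret whose row is $r$), not on $(s_1,s_2)$, which is the claim.

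The only real difficulty is calibrating the construction so that all required inequalities hold at once: $r$ must be one-shot-closer to $p$ (and to $q$) than $p$ and $q$ are to each other, yet two-shot-\emph{farther} from $p$ than $p$ is from $q$; since $\omega(p\otimes p,r\otimes r)\ge\omega(p,r)^2=(1-\TotalVar(p,r))^2$, the distance $\TotalVar(p,r)$ must sit inside a definite window, and $r$ has to be placed to be as ``tensor-fragile'' relative to $p$ as possible (concentrating its difference from $p$ onto coordinates where $p$ is small) while $p,q$ are made exactly ``tensor-rigid'' (level-sets of one another). Once such rows are fixed, everything else is a finite, routine verification.
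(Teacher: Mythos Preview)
Your proof is correct and uses the same approach as the paper: exhibit an explicit counterexample and verify the pairwise total-variation distances directly. The paper simply states a $4\times 3$ channel (minimizing pair $(s_1,s_3)$ for $\channel$, $(s_2,s_4)$ for $\channel||\channel$) without further comment, whereas you construct a $3\times 4$ channel and, more interestingly, explain the mechanism behind the phenomenon---pairs that are level-set permutations of each other are ``tensor-rigid'' (the overlap $\omega(p\otimes p,q\otimes q)=\omega(p,q)$ in your example), while pairs whose difference is concentrated where one row is small are ``tensor-fragile'' and can approach the extremal bound $\omega(p\otimes p,r\otimes r)\ge\omega(p,r)^2$. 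Your numbers check out ($\omega(p,q)=0.4$, $\omega(p,r)=\omega(q,r)=0.6$; after self-composition $\omega(p\otimes p,q\otimes q)=0.4$, $\omega(p\otimes p,r\otimes r)=0.36$), so $(s_1,s_2)$ is no longer the leakiest pair for $\channel||\channel$. One small point: to support your final sentence that the minimizer for $\channel'$ is actually $(s_1,s_3)$ you would also need $\omega(q\otimes q,r\otimes r)$, which is $0.44$; but for the proposition itself the strict inequality $0.36<0.4$ already suffices.
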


\begin{proof}
A counterexample follows.

\begin{equation*}
\channel = \begin{bmatrix}
0.9 & 0.1 & 0.0\\
0.8 & 0.2 & 0.0\\
0.5 & 0.5 & 0.0\\
0.5 & 0.1 & 0.4
\end{bmatrix}
\end{equation*}
$\minbeta(\channel) = 0.6$ is obtained for secrets
$(s_1, s_3)$;
$\minbeta(\channel || \channel) = 0.36$ is achieved
for secrets $(s_2, s_4)$.
\end{proof}

\subsection{Chaining mechanisms} %

Another typical configuration, used to strengthen
the security of the system, is to put in place a cascade of security mechanisms
(in-depth security). More formally, consider two channels
$\channel^1: \secretspace^1 \mapsto \secretspace^2$ and
$\channel^2: \secretspace^2 \mapsto \objectspace$.
Their \emph{cascade} composition is the channel
$\channel^1\channel^2$  in which the secret is input
in $\channel^1$ and this channel's output is post-processed
by $\channel^2$.

It is well understood that post-processing cannot decrease the security of a
mechanism. Therefore, $\channel^1\channel^2$ should be at least as secure ask
$\channel^1$. Indeed, based on the concavity of $\bayesrisk$, it is easy to show that
$\bayesrisk(\pi,\channel^1\channel^2) \ge \bayesrisk(\pi,\channel^1)$ for any
prior $\pi$. Consequently,
$\beta(\pi,\channel^1\channel^2) \ge \beta(\pi,\channel^1)$.

Understanding the effect of $\channel^1$ on $\channel^2$ is less
straightforward. The composition $\channel^1\channel^2$ can be seen as the
\emph{pre}-processing of $\channel^2$, which is not necessarily a safe operation.
Note that $\channel^2$ receives as input the output of $\channel^1$, which is not
necessarily the same as $\secretspace^1$.
Hence, the prior $\pi$ on the input secret in $\secretspace^1$ is meaningless for
$\channel^2$.
Remarkably, as $\beta^*$ does not depend on the prior, it allows to compare $\channel^1\channel^2$ and $\channel^2$
despite the different input spaces.
From \autoref{thm:beta-star-as-diameter} we know that $\beta^*(\channel^1\channel^2)$ is given by the
maximum $\ell_1$ distance between the rows of $\channel^1\channel^2$. The key observation is that
the rows of $\channel^1\channel^2$ are \emph{convex combinations} of the rows of $\channel^2$;
but convex combinations cannot increase distances, which brings us to the following result.

\begin{restatable}{theorem}{ThmCascadeComp}
	\label{thm:cascade-comp}
	For all channels $\channel^1,\channel^2$ it holds that
	$$\minbeta(\channel^1\channel^2) \Wide\geq \max \{ \minbeta(\channel^1), \minbeta(\channel^2)\} \,.$$
\end{restatable}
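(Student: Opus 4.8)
The plan is to prove the two inequalities $\minbeta(\channel^1\channel^2) \ge \minbeta(\channel^1)$ and $\minbeta(\channel^1\channel^2) \ge \minbeta(\channel^2)$ separately and then take the maximum. The two halves are of genuinely different character: the first is a \emph{post}-processing (data-processing) statement about the outer channel $\channel^2$ acting on the output of $\channel^1$, while the second is a ``\emph{pre}-processing cannot pull distributions apart'' statement about the inner channel $\channel^1$ feeding $\channel^2$. I would present them in that order; both are essentially the observations already sketched in the text of \autoref{sec:composing}, so the work is mainly in making them airtight.

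For the first inequality I would fix an arbitrary prior $\priors$ with $\errorguesspriors(\priors) > 0$ and argue $\bayesrisk(\priors, \channel^1\channel^2) \ge \bayesrisk(\priors, \channel^1)$. This is the standard fact that randomized post-processing of the channel output cannot help the Bayes adversary: applying $\channel^2$ to the outputs of $\channel^1$ replaces each output's posterior over secrets by a mixture of such posteriors, and $\bayesrisk$, being of the form $1 - \sum_{o}\max_{s}(\cdot)$, is concave in the (vectorized) output distribution, so mixing can only increase it. Since the denominator $\errorguesspriors(\priors)$ does not involve the channel at all, we get $\beta(\priors, \channel^1\channel^2) \ge \beta(\priors, \channel^1) \ge \minbeta(\channel^1)$, and minimizing the left-hand side over $\priors$ gives $\minbeta(\channel^1\channel^2) \ge \minbeta(\channel^1)$.

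For the second inequality I would invoke \autoref{thm:beta-star-as-diameter} to work entirely with rows and $\ell_1$ distances, sidestepping priors: $\minbeta(\channel^1\channel^2) = 1 - \tfrac{1}{2}\max_{a,b}\norm{(\channel^1\channel^2)_a - (\channel^1\channel^2)_b}_1$. The key observation is that row $s$ of the composition is the convex combination $(\channel^1\channel^2)_s = \sum_{t\in\secretspace^2}\channel^1_{s,t}\,\channel^2_t$ of the rows of $\channel^2$, with weights $\channel^1_{s,\cdot}$. Hence for any $a,b$, expanding both rows and using the triangle inequality together with the fact that $\{\channel^1_{a,t}\channel^1_{b,t'}\}_{t,t'}$ is a probability distribution over pairs, $\norm{(\channel^1\channel^2)_a - (\channel^1\channel^2)_b}_1 \le \max_{t,t'}\norm{\channel^2_t - \channel^2_{t'}}_1$; equivalently, the rows of $\channel^1\channel^2$ lie in $\ConvHull{\channel^2_{\secretspace^2}}$ and the $\ell_1$-diameter of a convex hull equals that of its generating set, i.e. $\Diam{\ConvHull{S}} = \Diam{S}$. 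Plugging this into the diameter formula for both $\channel^1\channel^2$ and $\channel^2$ yields $\minbeta(\channel^1\channel^2) \ge \minbeta(\channel^2)$, and combining with the first inequality finishes the proof.

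The theorem is not deep once \autoref{thm:beta-star-as-diameter} and the concavity of $\bayesrisk$ are available, so the ``main obstacle'' is just pinning down the two supporting claims. The part I would be most careful about is the first bound: one should actually justify the concavity/data-processing step rather than assert it, and make sure it holds for \emph{every} admissible prior (including those supported on few secrets) so the inequality survives the minimization — in particular handling the side condition $\errorguesspriors(\priors) > 0$. The second bound is essentially the single line already in the text (convex combinations cannot increase distances); the only thing to check there is the elementary diameter-of-convex-hull fact, which is immediate from convexity of the $\ell_1$ norm.
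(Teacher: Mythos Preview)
Your proposal is correct and matches the paper's proof essentially line for line: the first half via $\bayesrisk(\pi,\channel^1\channel^2)\ge\bayesrisk(\pi,\channel^1)$ (data processing/concavity) and then minimizing over $\pi$, the second half via \autoref{thm:beta-star-as-diameter} together with the observation that rows of $\channel^1\channel^2$ lie in $\ConvHull{\channel^2_{\secretspace^2}}$ and the elementary fact $\Diam{\ConvHull{S}}=\Diam{S}$ (which the paper records as a separate lemma). The only cosmetic difference is that you also sketch a direct triangle-inequality variant for the second bound; the paper uses only the convex-hull-diameter formulation.
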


This means that \textit{neither pre-processing nor post-processing} decreases the Bayes security
provided by a mechanism.%

\section{Relation with other notions}
\label{sec:other-notions}

In the previous sections, we presented Bayes security, discussed its properties and
showed how to compute it in an efficient manner. In this section, we compare it
with three well-known security notions: cryptographic advantage,
a mainstream threat-specific metric in the security community;
DP, the paradigmatic worst-case metric;
and multiplicative Bayes vulnerability leakage,
which is closely related to $\beta$ but comes with different properties.

\subsection{Cryptographic advantage}\label{sec:Adv}

In cryptography, the advantage \advantage of an adversary \adversary is
defined assuming that there are two secrets
($|\secretspace|=2$) with a uniform prior as input to a channel
\channel. Formally (e.g. \cite{Yeom:18:CSF}):
$$\advantage(\channel,\adversary) \;\stackrel{\rm def}{=}\; 2|R^{\adversary}(\unipriors, \channel) - \nicefrac{1}{2}| \,.$$
The factor $2$ serves to scale $\advantage$
within the interval $[0,1]$.

Denoting by  $\advantage(\channel)$  the advantage of the optimal (Bayes) adversary
and considering a uniform prior $\priors=\unipriors$,
we derive:
\begin{equation}\label{eq:advbeta}
\beta(\unipriors, \channel) \; = \; 1 - 2|\bayesrisk(\unipriors, \channel)-
	\errorguesspriors(\unipriors)| = 1 - \advantage(\channel).
\end{equation}
Hence the Bayes security metric can be seen as a generalization of $1-\advantage$
for which the secret space $\secretspace$ may contain more than
two secrets the prior is not necessarily uniform.\footnote{Note that
in cryptography the advantage is usually defined for a generic
(and not necessarily optimal) adversary.}

\parabf{Bayes security as IND-CPA security} In \autoref{sec:preliminaries}, we introduced the
IND-BAY game to formalize the adversarial setting captured by
$\beta(\pi, \channel)$. When considering this game in the light
of the minimizer, $\minbeta(\channel)$, and our
main result \autoref{sec:main-result}
(\textit{$\beta$ is minimized on the two leakiest secrets}),
the IND-BAY game
becomes a version of the traditional IND-CPA cryptographic game that
we call IND-MINBAY (\autoref{fig:sec-games}, left).

First, recall that the adversary has perfect knowledge
of the prior $\priors$ and the channel $\channel$ (line 1).
Then, as opposed to the IND-BAY game, where the adversary cannot
influence the input, we allow $\adversary$ to select the secrets and provide
them to the challenger (line 2).
This is analogous to classical IND-CPA, and it allows to
capture the worst-case inputs. Then the challenger
selects one of the two secrets according to the prior $\priors$ (line 3),
and returns to the adversary an obfuscated version according to
the channel probability matrix (line 4). The adversary guesses
one of the two secrets (line 5), and wins the game if the guess
is the secret selected by the challenger. The advantage of this
adversary is equivalent to that of a CPA adversary guessing what message
was encrypted by the challenger.

This equivalence of games reinforces that the Bayes security metric
sits in the middle between average metrics (measuring the expected
risk) and worst-case metrics (measuring the worst-case risk across
the secrets).
In the next part of this section we explore this relation
further.

\begin{figure}
\begin{pchstack} [center]
	\centering
	\centering
	\procedure[linenumbering,syntaxhighlight=off]{IND-MINBAY$_\channel^\adversary$}{%
	\adversary \gets \channel,\priors\\
	\adversary \text{ selects } s_1, s_2 \in \secretspace\\
	s \overset{\priors_{1, 2}}{\gets} \{s_1, s_2\}\\
	o \overset{P(o|s)}{\gets} \objectspace\\
	s' \gets \adversary(o)\\
	\pcreturn s = s'
}
\pchspace

	\procedure[linenumbering,syntaxhighlight=off]{IND-LDP$_\channel^\adversary$}{%
		\adversary \gets \channel,\priors\\
		\adversary \text{ selects } s_1, s_2 \in \secretspace\\
		\adversary \text{ selects } o \in \objectspace\\
		s \overset{P(s|o)}{\gets} \{s_1, s_2\}\\
		s' \gets \adversary(o)\\
		\pcreturn s = s'
		}
\end{pchstack}
\caption{Security games for \minbeta (left) and LDP (right).}\label{fig:sec-games}
\end{figure}{}

\subsection{Local Differential Privacy}
\label{sec:ldp}

We investigate the relation between the privacy
guarantees induced by Bayes security (and, more in general,
$\beta$), and
those induced by DP metrics.

For a parameter $\diffp\geq 0$, we say that a mechanism is
$\diffp$-LDP (local DP)~\cite{duchi2013local} if for every $i,h,j$:
\begin{equation}
\label{eqn:LDP}
\channel_{ s_i,o_j}
\; \leq \; \exp(\diffp) \, \channel_{ s_h,o_j} \,.
\end{equation}

LDP is a worst-case metric, while recall that $\beta$
has the characteristics of an average metric.
Therefore, we expect that LDP implies a lower bound on $\beta$, but not vice versa.
The rest of this section is dedicated to analyzing this implication.

\parabf{A game for LDP}
We first illustrate the difference between the threat
model of Bayes security and the one considered by Local Differential Privacy using security
games. \autoref{fig:sec-games}, right, represents the game for local differential privacy
(IND-LDP).\footnote{We use this game for a qualitative comparison between the metrics.
	However, we observe that the parameter $\varepsilon$ of LDP can be recovered from this game by
	ensuring a uniform prior when sampling from $P(s\mid o) = \nicefrac{P(o \mid s)}{(2P(o))}$ (i.e., $P(s_1) = P(s_2) = \nicefrac{1}{2}$),
	and by evaluating the game with the following success metric:
	$\varepsilon = \ln(\nicefrac{V^*}{1-V^*})$, where $V^*=\max_s P(s \mid o)$ is the probability
	that a Bayes-optimal adversary guesses the secret correctly.
}
A first remarkable difference with respect to typical security games is that,
in addition to selecting the secrets as the IND-MINBAY game,
the adversary also chooses the observation (line 3).
This captures a worst-case in which the adversary not only
picks the most vulnerable inputs, but also the output that makes
them easier to distinguish. Upon receiving the secrets and
the observation, the challenger selects one of the secrets
according to the probability that it caused the observation
(line 4). A second difference with
respect to typical games, and IND-MINBAY, is that
the challenger \textit{does not show the chosen value to the adversary}.
Otherwise, it would be a trivial win. The adversary
guesses a secret (line 5), and wins if this is the
secret the challenger chose (line 6).
Note that, because the adversary has much greater freedom in their choices,
their chances to win are considerably greater than than in IND-MINBAY
or traditional games. Therefore, the LDP game captures a stronger
attacker than
most cryptographic games, but it is much harder to map
it to a realistic threat scenario.

\parabf{LDP induces a lower bound on Bayes security}
In general, if there are no restrictions on the channel matrix, the lowest possible value for $\beta$ is $0$;
this is achieved when the adversary
can identify the value of the secret from every observable with probability $1$.
Assuming that $|\secretspace|\geq 2$  and that  $\priors$ is not concentrated on one single secret,\footnote{If the probability mass of $\priors$ is
        concentrated on one secret, then  $G(\priors)=0$ and $\beta(\priors,\channel)$ is undefined.
	However also $\bayesrisk(\priors, \channel)=0$, and $\lim_{\priors'\rightarrow\priors}\beta(\priors',\channel)=1$. }
	and that $\secretspace$ contains at least two elements,
$\beta$ can only be zero if and only if the channel contains at most one  non-$0$ value
for each column.

If the matrix is  $\exp(\varepsilon)$-LDP, however, then the ratio between two values on the same column is at most $\exp(\varepsilon)$.
Intuitively,  under this restriction,  $\beta$  cannot be  $0$ anymore: the best case for the adversary
is when the ratio is as large as possible, i.e., when it is  \emph{exactly} $\exp(\varepsilon)$.
In particular, in a $2\times 2$ channel, we expect that the minimum $\beta$ is achieved by a matrix that has the values
  $\nicefrac{\exp(\varepsilon)}{1+\exp(\varepsilon)}$ on the diagonal, and $\nicefrac{1}{1+\exp(\varepsilon)}$ in the other positions (or vice versa).
The next theorem confirms this intuition, and extends it to the general case $n\times m$.

\begin{figure}
\begin{center}
\[
	\begin{bmatrix}
	\overmat{$\textstyle k$}{a & \cdots & a} & \!\!  \overmat{$m-k$}{b & \cdots & b}\\
	b & \cdots & b & \!\!  a & \cdots & a \\
	c & \cdots & c & \!\!  c & \cdots & c \\
	\vdots & \ddots & \vdots & \!\!  \vdots & \ddots & \vdots \\
	c & \cdots & c & \!\!  c & \cdots & c
	\end{bmatrix}
	\quad\quad\quad
	\begin{bmatrix}
	d &  e &  \; \overmat{$m-2$}{0 \; & \cdots & 0} \\
	e &  d & \; 0 \;  & \cdots & 0 \\
	1 & 0 & \; 0 \;  & \cdots & 0\\
	\vdots & \vdots & \; \vdots \; &  \ddots & \vdots \\
	1 & 0 & \; 0 \;  & \cdots & 0\\
	\end{bmatrix}
\]
\end{center}
\caption{Two examples of $n\times m$ matrices $\minchannel$ which achieve minimum $\beta$ value $\minbeta(\minchannel)=\frac{2}{1+\exp(\varepsilon)}$.
In the first matrix: $a =  \frac{\exp(\varepsilon)}{k(1+\exp(\varepsilon))}$, $b =   \frac{1}{(m-k)(1+\exp(\varepsilon))}$ and $c=    \frac{1}{m}$.
In the second matrix:   $d  =  \frac{\exp(\varepsilon)}{ 1+\exp(\varepsilon)}$ and $e  =  \frac{1}{ 1+\exp(\varepsilon)}$.
 }\label{fig:min channels}
\end{figure}

\begin{restatable}{theorem}{ThmBoundOnBayes}
\label{theo:BoundOnBayes}\qquad\qquad
\begin{enumerate}
\item \label{LDP1}If  \channel is $\varepsilon$-LDP, then for every $\priors$ we have  $\beta(\priors,\channel)\geq \frac{2}{1+\exp(\varepsilon)}$.\\
\item \label{LDP2}For every $n,m\geq 2$ there exists a $n\times m$ $\varepsilon$-LDP channel
$\minchannel$ such that $\minbeta(\minchannel) = \frac{2}{1+\exp(\varepsilon)}$.
Examples of such $\minchannel$'s are illustrated in  \autoref{fig:min channels}.
\end{enumerate}
\end{restatable}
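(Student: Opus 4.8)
The plan is to treat the two items separately, both resting on \autoref{thm:minbeta} and \autoref{thm:beta-star-as-diameter}. For item~\ref{LDP1} I would first reduce the statement about an arbitrary prior to one about $\minbeta$: by \autoref{thm:minbeta}, $\beta(\priors,\channel)\ge\minbeta(\channel)$ for every prior, so it suffices to show $\minbeta(\channel)\ge\frac{2}{1+\exp(\varepsilon)}$, and by \autoref{thm:beta-star-as-diameter} this is equivalent to the bound $\TotalVar(\channel_a,\channel_b)\le\frac{\exp(\varepsilon)-1}{\exp(\varepsilon)+1}$ for every pair of rows of an $\varepsilon$-LDP channel. The heart of the proof is this total-variation estimate. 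Fix rows $\channel_a,\channel_b$, set $A=\{o\in\objectspace:\channel_{a,o}>\channel_{b,o}\}$ and $B=\objectspace\setminus A$, and recall $\TotalVar(\channel_a,\channel_b)=\channel_a(A)-\channel_b(A)=\channel_b(B)-\channel_a(B)$. On $A$ the LDP constraint gives $\channel_{a,o}-\channel_{b,o}\le(\exp(\varepsilon)-1)\channel_{b,o}$, so summing over $A$ yields $\TotalVar\le(\exp(\varepsilon)-1)\channel_b(A)$; symmetrically, summing over $B$ yields $\TotalVar\le(\exp(\varepsilon)-1)\channel_a(B)$. Since $\channel_b(A)+\channel_a(B)=1-\TotalVar$, adding these gives $2\TotalVar\le(\exp(\varepsilon)-1)(1-\TotalVar)$, hence $\TotalVar\le\frac{\exp(\varepsilon)-1}{\exp(\varepsilon)+1}$, which is exactly what is needed.

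For item~\ref{LDP2} I would exhibit an explicit $\varepsilon$-LDP channel attaining the bound and read off $\minbeta$ via \autoref{thm:beta-star-as-diameter}. The building block is the $2\times2$ matrix with $\frac{\exp(\varepsilon)}{1+\exp(\varepsilon)}$ on the diagonal and $\frac{1}{1+\exp(\varepsilon)}$ off it: it is $\varepsilon$-LDP by inspection (each column's ratio of largest to smallest entry is exactly $\exp(\varepsilon)$) and its two rows are at $\ell_1$ distance $2\cdot\frac{\exp(\varepsilon)-1}{\exp(\varepsilon)+1}$, so $\minbeta=\frac{2}{1+\exp(\varepsilon)}$. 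To reach a general $n\times m$ channel I would (i) split each column into several columns, spreading every row's mass over them with one common set of weights summing to $1$, which leaves all column ratios (hence $\varepsilon$-LDP) and all pairwise row distances unchanged; and (ii) append $n-2$ further rows, each a copy of an existing row, which preserves $\varepsilon$-LDP and does not enlarge the maximal pairwise row distance. Concrete channels of this kind are displayed in \autoref{fig:min channels}; for each one checks that the largest pairwise $\ell_1$ distance between rows is realised by the first two rows (the distances to the remaining rows are no larger), so \autoref{thm:beta-star-as-diameter} gives $\minbeta(\minchannel)=\frac{2}{1+\exp(\varepsilon)}$.

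The step I expect to be most delicate is the total-variation estimate in item~\ref{LDP1}: using the LDP inequality on \emph{both} sides of the partition $\{A,B\}$ and combining the two one-sided bounds through the identity $\channel_b(A)+\channel_a(B)=1-\TotalVar$ is precisely what makes the constant come out to $\frac{2}{1+\exp(\varepsilon)}$; a cruder argument (e.g.\ only $\TotalVar\le(1-\exp(-\varepsilon))\channel_a(A)\le 1-\exp(-\varepsilon)$) yields a strictly weaker bound. For item~\ref{LDP2} the only care needed is confirming that the padded or uniform rows never produce a pair of rows farther apart than the first two, which is a short calculation.
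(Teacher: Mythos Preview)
Your proof is correct, and for item~\ref{LDP1} it takes a genuinely different route from the paper's. Both arguments begin by invoking \autoref{thm:minbeta} to reduce to a pair of rows, but from there the paper works directly with the Bayes risk: it writes $a=\channel_a(A)$, $b=\channel_b(B)$, parametrises the LDP constraints as $a=x(1-b)$, $b=y(1-a)$ with $1\le x,y\le\exp(\varepsilon)$, expresses $\bayesrisk(\pi^*,\channel)$ as the rational function $f(x,y)=\frac{\frac12(x+y)-1}{xy-1}$, and then uses partial derivatives to show the minimum is attained at $x=y=\exp(\varepsilon)$. Your argument instead goes through \autoref{thm:beta-star-as-diameter} and bounds the total variation directly: applying the LDP inequality on $A$ and on $B$ separately, then combining via the identity $\channel_b(A)+\channel_a(B)=1-\TotalVar$, is purely algebraic and avoids any calculus. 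The paper's approach makes the extremal structure (both ratios saturated at $\exp(\varepsilon)$) explicit, which feeds naturally into the construction for item~\ref{LDP2}; your approach is shorter and more self-contained, and the symmetric two-sided trick is a nice way to see why the constant is $\frac{2}{1+\exp(\varepsilon)}$ rather than $\exp(-\varepsilon)$.

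For item~\ref{LDP2}, the paper simply exhibits the matrices of \autoref{fig:min channels} and checks that the first two rows give $a=b=\frac{\exp(\varepsilon)}{1+\exp(\varepsilon)}$. Your column-splitting / row-duplication recipe is a clean constructive alternative that makes it transparent why such channels exist for every $n,m\ge 2$.
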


\parabf{Bayes security does not induce a lower bound on LDP}
\autoref{theo:BoundOnBayes} shows that $\varepsilon$-LDP  induces a bound on Bayes security, and
that we can express a strict bound that depends only on $\varepsilon$.
The other direction does not hold. The main reason is that if
a column contains both a $0$ and a positive element,
then  $\varepsilon$-LDP  cannot hold, independently from the value of $\beta$.

\subsection{Approximate Differential Privacy}
One may consider  $(\varepsilon,\delta)$-LDP~\cite{dwork2006our}. This is   a variant of LDP in which small violations to
\autoref{eqn:LDP} are tolerated. Precisely,
a mechanism is
$(\varepsilon,\delta)$-LDP if for every $s_i,s_j\in \secretspace$ and $O\subseteq \objectspace$:
	\begin{equation}
	\label{eqn:edLDP}
\sum_{o\in O}  \left(\channel_{s_i,o} - \exp(\diffp)\,\channel_{s_j,o} \right)\; \leq \; \delta \,.
\end{equation}

With $(\varepsilon,\delta)$-LDP, a column may contain $0$ and non-$0$ values, as long as the latter are smaller than $\delta$.
Similarly to pure DP, approximate DP is threat-agnostic; this makes it harder to match $(\varepsilon,\delta)$ values to the risk of an attack occurring.

Surprisingly, we observe a direct relation between Bayes security and the special case $(0, \delta)$-DP:

\begin{proposition}
	Let $\channel$ be a $\beta^*$-secure channel.
	Then it is also $(0,\delta)$-LDP, with $\delta = 1-\beta^*$.
\end{proposition}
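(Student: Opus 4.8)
The plan is to unwind the definition of $(0,\delta)$-LDP and recognize the left-hand side of \autoref{eqn:edLDP} as a total variation quantity, which \autoref{thm:beta-star-as-diameter} already controls. Setting $\diffp = 0$ in \autoref{eqn:edLDP}, the condition $(0,\delta)$-LDP asks that for every $s_i,s_j\in\secretspace$ and every $O\subseteq\objectspace$,
\[
\sum_{o\in O}\bigl(\channel_{s_i,o} - \channel_{s_j,o}\bigr) \;\leq\; \delta \,.
\]
So the first step is simply to observe that the worst case over $O$ of the left-hand side, for fixed rows $\channel_{s_i},\channel_{s_j}$, is attained by taking $O = \{o : \channel_{s_i,o} \geq \channel_{s_j,o}\}$, and that this maximum equals $\tfrac12\norm{\channel_{s_i}-\channel_{s_j}}_1 = \TotalVar(\channel_{s_i},\channel_{s_j})$. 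This is the standard variational characterization of total variation distance; I would state it as a one-line fact rather than reprove it.

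The second step is to bound this uniformly over all pairs of secrets. By \autoref{thm:beta-star-as-diameter}, $\beta^*(\channel) = 1 - \max_{a,b\in\secretspace}\TotalVar(\channel_a,\channel_b)$, hence $\max_{a,b}\TotalVar(\channel_a,\channel_b) = 1 - \beta^*$. Combining with the previous step, for every $s_i,s_j$ and every $O$,
\[
\sum_{o\in O}\bigl(\channel_{s_i,o} - \channel_{s_j,o}\bigr) \;\leq\; \TotalVar(\channel_{s_i},\channel_{s_j}) \;\leq\; 1 - \beta^* \,.
\]
Therefore $\channel$ satisfies \autoref{eqn:edLDP} with $\diffp = 0$ and $\delta = 1 - \beta^*$, which is exactly the claim.

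There is no real obstacle here: the argument is essentially bookkeeping once \autoref{thm:beta-star-as-diameter} is in hand, and the only non-trivial ingredient is the identification of $\max_{O}\sum_{o\in O}(\channel_{s_i,o}-\channel_{s_j,o})$ with the total variation distance between the two rows. If one wanted to be fully self-contained, the mild care needed is that $\TotalVar$ is usually defined symmetrically via $\tfrac12\norm{\cdot}_1$, whereas \autoref{eqn:edLDP} is asymmetric in $s_i,s_j$; but since the maximization over $O$ discards the negative part of $\channel_{s_i}-\channel_{s_j}$, the asymmetry disappears and both orderings give the same bound, so it suffices to bound by $\max_{a,b}\TotalVar(\channel_a,\channel_b)$ as above.
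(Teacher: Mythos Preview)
Your proposal is correct and follows essentially the same argument as the paper: set $\diffp=0$, maximize the LHS of \autoref{eqn:edLDP} over $O$ to obtain $\TotalVar(\channel_{s_i},\channel_{s_j})$, and then invoke \autoref{thm:beta-star-as-diameter} to bound this by $1-\beta^*$. Your write-up is in fact slightly more careful (noting the asymmetry in $s_i,s_j$), but the route is identical.
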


This comes from the fact that, for $\diffp = 0$, the LHS of \autoref{eqn:edLDP} becomes $\sum_{o\in O}  \left(\channel_{s_i,o} -  \channel_{s_j,o} \right)$, which is maximized for
$O^*=\{o\in \objectspace\mid \channel_{s_i,o}> \channel_{s_j,o}\}$.
Observe that $\sum_{o\in O^*}  \left(\channel_{s_i,o} -  \channel_{s_j,o} \right)$ corresponds to the total
variation between $\channel_{s_i,o}$ and $\channel_{s_j,o}$.
Applying the equivalence between $\beta^*$ and total variation~(\autoref{thm:beta-star-as-diameter})
concludes the argument.

The special case of $(0,\delta)$-LDP mechanisms is not commonly studied. Intuitively, it corresponds to a mechanism that is completely vulnerable, but only with probability $\delta$.
We hope that the direct correspondence between
$\beta^*$, $\delta$, and the cryptographic advantage can give further insights in the decision of the parameter choices for approximate DP.

\subsection{Differential privacy~\cite{Dwork:06:TCC}}
Differential privacy is similar to LDP, except that it involves the notion of adjacent databases.
Two databases $x,x'$ are adjacent, denoted as $x\sim x'$, if $x$ is obtained from $x'$ by removing or adding one record.

The definition of $\varepsilon$-differential-privacy ($\varepsilon$-DP), in the discrete case, is as follows.
A mechanism $\cal K$ is  $\varepsilon$-DP  if for every $x,x'$ such that  $x\sim x'$, and every $y$, we have
$$P({\cal K}(x) = y ) \;\leq \; \exp(\varepsilon)\, P({\cal K}(x') = y ) . $$

A relation between the Bayes security and DP follows from an analogous result in \cite{Alvim:15:JCS}
for the multiplicative Bayes leakage $\mulleakage(\pi,{\cal K})$, and the correspondence between the latter and the Bayes security (cfr. Section~\ref{sec:MultLeak}), which is given by
\begin{equation}\label{eq:BS-MultLeak}
\beta(\priors,\channel) = \frac{1-(\max_s\pi_s) \, \mulleakage(\pi,\channel)}{1 - \max_s\pi_s}.
\end{equation}

The following result, proven by Alvim et al.~\cite{Alvim:15:JCS},  states that
$\varepsilon$-DP induces a  bound on the  multiplicative vulnerability leakage, where the set of secrets are all the possible databases.
The theorem is given for the bounded DP case, where we assume that the number of records present in the database is at most a certain number $n$, and
that  the set of values for the records includes a special value $\bot$ representing the absence of the record.
The adjacency relation is  modified accordingly: $x\sim x'$ means that $x$ and $x'$ differ for the value of exactly one record.
We also assume that the cardinality  $\mathsf{v}$ of the set of values is finite. Hence also the number of secrets (i.e., the possible databases) is finite.

\begin{theorem}[From \cite{Alvim:15:JCS}, Theorem 15]\label{theo:boundMultLeak}
If  ${\cal K}$ is $\varepsilon$-DP, then, for every $\pi$, $\mulleakage(\pi,\channel)$  is bounded from above as
\[
\mulleakage(\pi,\channel) \; \leq \;
	\left(\frac{\mathsf{v}\;\exp({\varepsilon})}{\mathsf{v} - 1 + \exp({\varepsilon})}\right)^n.
\]
and this bound is tight when $\pi$ is uniform.
\end{theorem}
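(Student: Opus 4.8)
The plan is to reduce the statement to a counting argument on the Hamming graph of databases. First I invoke the known fact (Braun et al.~\cite{braun2009quantitative}, recalled in \autoref{sec:preliminaries}) that $\mulleakage(\priors,\channel)$ attains its maximum over priors at the uniform prior $\unipriors$; hence it suffices to bound $\mulleakage(\unipriors,\mathcal{K})$. Writing $N=\mathsf{v}^n$ for the number of databases, a direct computation from the definition of $\mulleakage$, using $1-\bayesrisk(\unipriors,\mathcal{K})=\tfrac{1}{N}\sum_{y}\max_{x}P(\mathcal{K}(x)=y)$ and $1-\errorguesspriors(\unipriors)=\tfrac{1}{N}$, gives
\[
  \mulleakage(\unipriors,\mathcal{K}) \;=\; \sum_{y}\max_{x} P(\mathcal{K}(x)=y)\,,
\]
i.e., the leakage at the uniform prior equals the sum of the column maxima of the channel matrix. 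The whole problem thus becomes: bound $\sum_{y}\max_{x}P(\mathcal{K}(x)=y)$ for an $\varepsilon$-DP mechanism over bounded databases.

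The key structural observation is that the set of databases under the (bounded) adjacency relation is exactly the Hamming graph $H(n,\mathsf{v})$: a database is a length-$n$ tuple over the $\mathsf{v}$-element value set (including $\bot$), and $x\sim x'$ iff they differ in exactly one coordinate. Chaining the DP inequality along a shortest path shows that databases $x,x'$ at Hamming distance $k$ satisfy $P(\mathcal{K}(x)=y)\geq\exp(-k\varepsilon)\,P(\mathcal{K}(x')=y)$. Now fix an output $y$ and let $x^\star$ maximize $P(\mathcal{K}(\cdot)=y)$. Since there are exactly $\binom{n}{k}(\mathsf{v}-1)^k$ databases at distance $k$ from $x^\star$, summing these inequalities over all databases and collapsing the resulting sum via the binomial identity $\sum_{k=0}^{n}\binom{n}{k}(\mathsf{v}-1)^k e^{-k\varepsilon}=(1+(\mathsf{v}-1)e^{-\varepsilon})^{n}$ gives
\[
  \sum_{x}P(\mathcal{K}(x)=y)\;\geq\; P(\mathcal{K}(x^\star)=y)\,\bigl(1+(\mathsf{v}-1)e^{-\varepsilon}\bigr)^{n}.
\]
Dividing by the bracketed factor, summing over $y$, and using $\sum_{y}\sum_{x}P(\mathcal{K}(x)=y)=\mathsf{v}^n$ yields
\[
  \sum_{y}\max_{x}P(\mathcal{K}(x)=y)\;\leq\;\frac{\mathsf{v}^n}{\bigl(1+(\mathsf{v}-1)e^{-\varepsilon}\bigr)^{n}}\;=\;\Bigl(\frac{\mathsf{v}\,e^{\varepsilon}}{\mathsf{v}-1+e^{\varepsilon}}\Bigr)^{\!n},
\]
which is the claimed bound.

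For tightness at the uniform prior I exhibit the mechanism that, independently on each of the $n$ record slots, applies the single-slot channel with $\mathsf{v}$ outputs equal to $\exp(\varepsilon)/(\mathsf{v}-1+\exp(\varepsilon))$ on the diagonal and $1/(\mathsf{v}-1+\exp(\varepsilon))$ off it. One checks this single-slot matrix is $\varepsilon$-DP, since the ratio of any two entries in a column lies in $\{e^{-\varepsilon},1,e^{\varepsilon}\}$; changing one record multiplies exactly one factor of the product distribution by a quantity in $[e^{-\varepsilon},e^{\varepsilon}]$, so the product mechanism is $\varepsilon$-DP; and since column maxima multiply across the product, $\sum_{y}\max_{x}P(\mathcal{K}(x)=y)=\bigl(\mathsf{v}e^{\varepsilon}/(\mathsf{v}-1+e^{\varepsilon})\bigr)^{n}$. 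Combined with the identity of the first paragraph, $\mulleakage(\unipriors,\mathcal{K})$ equals the bound, proving tightness when $\priors$ is uniform.

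The main obstacle is not any single computation but making the two reductions airtight: (i) confirming that Braun et al.'s ``uniform prior is worst'' result applies verbatim to this (possibly non-square) database channel, and (ii) the bookkeeping of the Hamming distance profile $\binom{n}{k}(\mathsf{v}-1)^k$ together with the graph connectivity that the chaining step needs. Once those are pinned down, the binomial-theorem collapse and the product construction are routine.
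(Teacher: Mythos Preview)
The paper does not actually prove this theorem: it is quoted verbatim from Alvim et al.~\cite{Alvim:15:JCS} (Theorem~15 there) and used as a black box to derive \autoref{cor:boundBS}. So there is no ``paper's own proof'' to compare against.

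Your argument is correct, and it is essentially the proof one finds in the cited source. The two concerns you flag are not real obstacles. For (i), Braun et al.'s result that $\mulleakage$ is maximized on the uniform prior holds for \emph{any} channel matrix (rectangular or not), since it only uses that $V(\priors,\channel)=\sum_o\max_s\priors_s\channel_{s,o}$ and that this is bounded by $(\max_s\priors_s)\sum_o\max_s\channel_{s,o}$; there is nothing special about square channels. For (ii), the Hamming graph $H(n,\mathsf{v})$ is connected with the obvious geodesics (flip one coordinate at a time), so chaining the DP inequality along a length-$k$ path is immediate, and the distance profile $\binom{n}{k}(\mathsf{v}-1)^k$ is just the standard count of strings at Hamming distance $k$. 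The binomial collapse and the product-of-randomized-responses witness for tightness are both standard and correctly executed.
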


From \autoref{theo:boundMultLeak} and \autoref{eq:BS-MultLeak} we immediately obtain a bound also for the Bayes security:

\begin{corollary}\label{cor:boundBS}
If  ${\cal K}$ is $\varepsilon$-DP, then, for every $\pi$, $\beta(\pi,\channel)$   is bounded from below as
\[
\beta(\pi,\channel) \; \geq \; \frac{1-(\max_s\pi_s) \left( \frac{\mathsf{v}\;\exp({\varepsilon})}{\mathsf{v} - 1 + \exp({\varepsilon})}\right)^n}{1 - \max_s\pi_s}.
\]
and this bound is tight when $\pi$ is the uniform distribution $\unipriors$ which assigns $\nicefrac{1}{\mathsf{v}^n}$ to every database, in which case it case it can be rewritten as
\[
\beta(\pi,\channel) \; \geq \; \frac{\mathsf{v}^n-\left( \frac{\mathsf{v}\;\exp({\varepsilon})}{\mathsf{v} - 1 + \exp({\varepsilon})}\right)^n}{\mathsf{v}^n - 1}.
\]
\end{corollary}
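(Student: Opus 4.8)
The plan is to derive \autoref{cor:boundBS} as an immediate algebraic consequence of \autoref{theo:boundMultLeak} together with the identity \autoref{eq:BS-MultLeak}, exploiting the fact that, for a \emph{fixed} prior, $\beta$ is a strictly decreasing affine function of the multiplicative Bayes vulnerability leakage $\mulleakage$.

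First I would fix an arbitrary prior $\priors$ and set $p \eqdef \max_s \pi_s$. If $p = 1$ the prior is concentrated on a single secret, so $\errorguesspriors(\priors) = 0$ and $\beta$ is undefined; this case is therefore excluded, and otherwise $p \in [\,1/|\secretspace|,\,1) \subseteq (0,1)$. By \autoref{eq:BS-MultLeak},
\[
\beta(\priors,\channel) \;=\; \frac{1 - p\,\mulleakage(\priors,\channel)}{1-p},
\]
which, read as a function of $\mulleakage(\priors,\channel)$, is affine with slope $-p/(1-p) < 0$. Consequently, any upper bound on $\mulleakage(\priors,\channel)$ translates into a lower bound on $\beta(\priors,\channel)$. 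Substituting the bound $\mulleakage(\priors,\channel) \leq \big(\frac{\mathsf{v}\exp(\varepsilon)}{\mathsf{v}-1+\exp(\varepsilon)}\big)^n$ of \autoref{theo:boundMultLeak} (which applies since $\mathcal{K}$ is $\varepsilon$-DP) yields exactly the general lower bound stated in the corollary.

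For tightness I would reuse the tightness part of \autoref{theo:boundMultLeak}: when $\priors$ is the uniform distribution over the $\mathsf{v}^n$ databases there is an $\varepsilon$-DP mechanism attaining equality in the $\mulleakage$-bound, and since for a fixed prior the correspondence $\mulleakage \leftrightarrow \beta$ is a bijection (affine, nonzero slope), that same mechanism attains equality in the $\beta$-bound. Specializing to $\priors = \unipriors$, where $p = 1/\mathsf{v}^n$, and clearing denominators by multiplying numerator and denominator by $\mathsf{v}^n$ gives the rewritten form $\beta(\priors,\channel) \geq \frac{\mathsf{v}^n - \big(\frac{\mathsf{v}\exp(\varepsilon)}{\mathsf{v}-1+\exp(\varepsilon)}\big)^n}{\mathsf{v}^n - 1}$.

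There is essentially no obstacle here: the argument is a substitution. The only points that require a little care are (i) restricting attention to priors with $\max_s\pi_s < 1$ so that $\beta$ is well-defined (for $\max_s\pi_s \to 1$ one would instead argue with the limiting value $\beta \to 1$), and (ii) noting that the bound is informative only when $p\,\big(\frac{\mathsf{v}\exp(\varepsilon)}{\mathsf{v}-1+\exp(\varepsilon)}\big)^n < 1$, since otherwise the right-hand side is non-positive and the inequality holds trivially from $\beta \geq 0$; neither affects correctness, and the prior witnessing tightness is in both forms the uniform prior over databases.
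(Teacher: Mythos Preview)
Your proposal is correct and follows exactly the approach the paper indicates: the paper states that the corollary is obtained ``immediately'' from \autoref{theo:boundMultLeak} and \autoref{eq:BS-MultLeak}, and you have simply spelled out that substitution, together with the monotonicity observation and the tightness transfer, with appropriate care about the degenerate case $\max_s\pi_s=1$.
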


 Alvim et al.~\cite{Alvim:15:JCS} show that the reverse of \autoref{theo:boundMultLeak} does not hold, and as a consequence the reverse of \autoref{cor:boundBS} does not hold either.
 The reason is analogous to the case of LDP: a $0$ in a position of a non-$0$-column implies that the mechanism cannot be DP, independently from the value of $\beta$.

\parabf{Membership inference}
In \autoref{cor:boundBS} the secrets are  the whole databases. Often, however, in DP we assume that the attacker is not interested at discovering the whole database, but only whether
a certain record belongs to the database or not.
We can model this case by isolating a generic pair of adjacent databases $x$ and $x'$, and then restricting the space of secrets to be just $\{x,x'\}$.
On this space, the mechanism can  be represented by a stochastic channel $\channel^{\{x,x'\}}$ that has only the two inputs $x$ and $x'$, and as  outputs the (obfuscated) answers to the query.
It is immediate to see that ${\cal K}$ is $\varepsilon$-DP iff $\channel^{\{x,x'\}}$ is $\varepsilon$-LDP for any pair of adjacent databases $x$ and $x'$.
Hence, the relations we proved between
Bayes security and LDP hold also for DP.
In particular, the following is an immediate consequence of \autoref{theo:BoundOnBayes}.
\begin{corollary}\label{cor:BoundOnBayes}
 If  ${\cal K}$ is $\varepsilon$-DP, then for every pair of adjacent databases $x$ and $x'$ and every
$\priors$ we have
\[\beta(\priors,\channel^{\{x,x'\}})\geq \frac{2}{1+\exp(\varepsilon)}.\]
and this bound is strict.
\end{corollary}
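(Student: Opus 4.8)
The plan is to reduce the claim to \autoref{theo:BoundOnBayes} via the observation, already recorded just above the statement, that ${\cal K}$ is $\varepsilon$-DP iff its restriction to any pair of adjacent databases is $\varepsilon$-LDP. Concretely, I would fix adjacent databases $x\sim x'$, form the two-row channel $\channel^{\{x,x'\}}$ whose rows are the output distributions $P({\cal K}(x)=\cdot)$ and $P({\cal K}(x')=\cdot)$, and note that instantiating the $\varepsilon$-DP inequality for ${\cal K}$ at the pair $(x,x')$, in both directions, yields exactly condition~\eqref{eqn:LDP} for this $2\times m$ matrix. Hence $\channel^{\{x,x'\}}$ is $\varepsilon$-LDP, and part~\ref{LDP1} of \autoref{theo:BoundOnBayes}, applied with any prior $\priors$ over the secret space $\{x,x'\}$, gives $\beta(\priors,\channel^{\{x,x'\}})\ge \frac{2}{1+\exp(\varepsilon)}$ immediately. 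When $\priors$ degenerates to a point mass on one of the two secrets, $\beta$ is undefined, but $\lim_{\priors'\to\priors}\beta(\priors',\channel^{\{x,x'\}})=1\ge \frac{2}{1+\exp(\varepsilon)}$, consistently with the footnote in \autoref{sec:ldp}, so nothing is lost there.

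For strictness — meaning the constant $\frac{2}{1+\exp(\varepsilon)}$ cannot be improved by any bound depending on $\varepsilon$ alone — I would exhibit an $\varepsilon$-DP mechanism and an adjacent pair attaining it. Take part~\ref{LDP2} of \autoref{theo:BoundOnBayes} in the case $n=m=2$: the $2\times 2$ randomized-response channel $\minchannel$ with $d=\frac{\exp(\varepsilon)}{1+\exp(\varepsilon)}$ on the diagonal and $e=\frac{1}{1+\exp(\varepsilon)}$ off it is $\varepsilon$-LDP and satisfies $\minbeta(\minchannel)=\frac{2}{1+\exp(\varepsilon)}$. To realize it as some $\channel^{\{x,x'\}}$, let the database domain consist of a single record, so that there are exactly two databases $x,x'$ and they are adjacent (they differ in that record's value), and let ${\cal K}$ output the first row of $\minchannel$ on input $x$ and the second row on input $x'$. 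The only adjacency constraint is $x\sim x'$, and it holds precisely because $\minchannel$ is $\varepsilon$-LDP; thus ${\cal K}$ is $\varepsilon$-DP and $\channel^{\{x,x'\}}=\minchannel$ meets the bound with equality.

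Essentially no new work is required beyond \autoref{theo:BoundOnBayes}: the proof is the translation ``DP on an adjacent pair equals LDP on a two-input channel''. The one spot worth a moment's care is the tightness direction, where one must produce a genuine $\varepsilon$-DP mechanism — not merely an abstract $2\times 2$ channel — whose restriction to a pair of adjacent databases is extremal; the single-record database domain used above does this in the cleanest way, since there the unique adjacency relation coincides exactly with the LDP constraint on $\minchannel$.
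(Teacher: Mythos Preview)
Your proposal is correct and matches the paper's approach exactly: the paper states the corollary as ``an immediate consequence of \autoref{theo:BoundOnBayes}'' via the observation that ${\cal K}$ is $\varepsilon$-DP iff $\channel^{\{x,x'\}}$ is $\varepsilon$-LDP for every adjacent pair. Your explicit construction for strictness (a single-record domain realizing the extremal $2\times 2$ channel $\minchannel$) is a welcome elaboration of a detail the paper leaves implicit.
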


A similar investigation was done by Yeom et al.~\cite{Yeom:18:CSF}.
They studied the privacy of $\channel^{\{x,x'\}}$ in terms of the advantage,
defined in the context of membership inference attacks (MIA).
The authors established that, if a mechanism is  $\varepsilon$-DP,
then the following  lower bound for holds for $\advantage(\channel^{\{x,x'\}})$,
for any adjacent databases $x$ and $x'$:
\begin{equation}\label{eq:adv-DP}
\advantage(\channel^{\{x,x'\}}) \leq  \exp(\varepsilon)-1
\end{equation}
By using the relation between the advantage and the Bayes security
metric (\autoref{eq:advbeta}), we derive the following bound:
\begin{equation}\label{eq:boundfromadv}
\beta(\unipriors,\channel^{\{x,x'\}})\geq  2-\exp(\varepsilon) \,.
\end{equation}
where $\unipriors$ is the uniform distribution.

\begin{figure}
	\centering
	\includegraphics[width=0.4\linewidth]{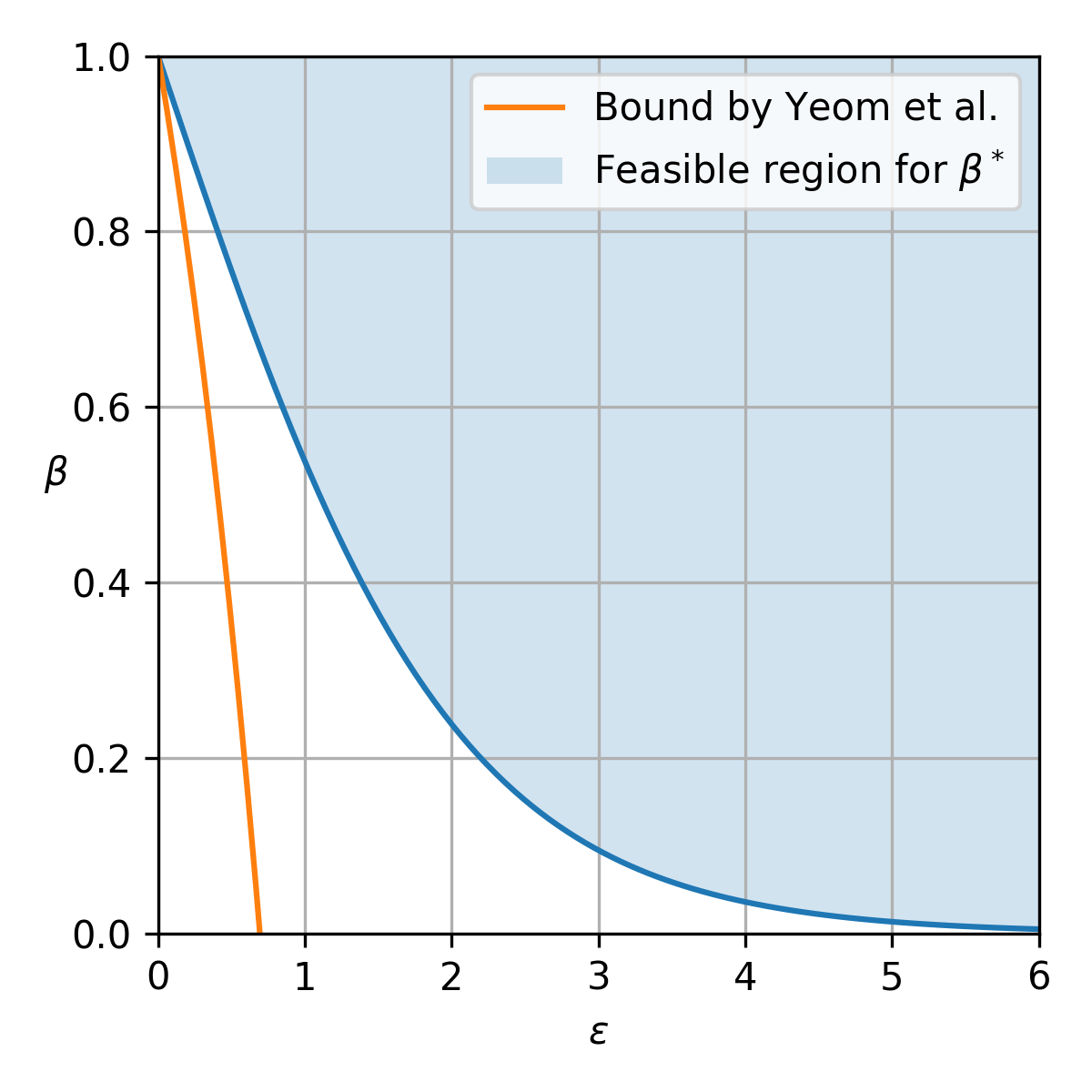}
	\caption{The blue line illustrates the lower bound of
		$\varepsilon$-DP on $\beta$ expressed by Corollary~\ref{cor:BoundOnBayes}).
		The orange line represents the lower bound on  $\beta$ derived from the
		one proved in  Yeom et al.~\cite{yeom2018privacy} for the advantage of a membership inference adversary.}
	\label{img:beta-dp}
\end{figure}

By exploiting the equivalence between Bayes security and the advantage,
we conclude that the bound by Yeom et al.~\cite{Yeom:18:CSF} is loose;
from  \autoref{cor:BoundOnBayes} and \autoref{eq:advbeta},
we derive the following (strict) bound for the advantage of an $\varepsilon$-DP mechanism:
\[
\advantage(\channel^{\{x,x'\}}) \leq  \frac{\exp(\varepsilon)-1}{\exp(\varepsilon)+1}.
\]
which is much tighter than their bound
(see \autoref{img:beta-dp}).

Concurrent work by Humphries et al.~\cite{humphries2020differentially} proved a similar bound for
$(\varepsilon, \delta)$-DP in the context of membership inference
attacks.
Their bound is more general than ours, since it captures
approximate DP; however, we prove tightness for our bound.
Whether tightness can be proven for the bound by Humphries et al. is
to our knowledge an open problem.

\subsection{Leakage notions from Quantitative Information Flow}\label{sec:MultLeak}

We discuss multiplicative risk leakage ($\beta$) and its minimizer
($\beta^*$) from the point of view of Quantitative Information Flow (QIF),
and compare it with similar metrics stemming from the field.
QIF measures the information leakage of a
system by comparing its vulnerability \emph{before} and \emph{after}
observing its output. It starts with a \emph{vulnerability} metric
$V(\pi)$, expressing how vulnerable the system is when the adversary has
knowledge $\pi$ about the secret.
The \emph{posterior} vulnerability is defined as
$V(\pi, \channel) = \sum_o p(o) V(\delta^o)$, where $\delta^o$ is the posterior
distribution on $\secretspace$ produced by the observation $o$; intuitively, it expresses how vulnerable
the system is, on average,
\emph{after observing} the system's output.
\emph{Leakage} is defined by comparing the two,
either \emph{multiplicatively} or \emph{additively}:
\[
	\mulleakage(\priors, \channel) = \frac{V(\pi,\channel)}{V(\pi)} ~,
	\quad
	\addleakage(\priors, \channel) = V(\pi,\channel) - V(\pi) ~.
\]

One of the most widely used vulnerability metrics is \emph{Bayes vulnerability} \cite{smith2009foundations},
defined as $V(\pi) = \max_s \pi_s = 1 - \errorguesspriors(\priors)$;
it expresses the
adversary's probability of guessing the secret correctly in one try.\footnote{%
The tightly connected notion of \emph{min-entropy}, defined as $\log V(\pi)$, is used by many authors
instead of Bayes vulnerability.
}
For the posterior version,
it holds that $V(\pi,\channel) = 1 - \bayesrisk(\pi,\channel)$.
The multiplicative risk leakage follows the same core idea: $\errorguesspriors(\priors)$ can be
thought of as a \emph{prior} version of $\bayesrisk$: indeed, it holds that
$\bayesrisk(\pi, \channel) = \sum_o p(o) \errorguesspriors(\delta^o)$ where $\delta^o$ are the
posteriors of the channel. Hence, $\beta$ can be considered to be a variant of
multiplicative vulnerability leakage, using Bayes risk instead of Bayes vulnerability.

Since the two are closely related, one would expect to be
able to directly translate results about $\mulleakage(\priors, \channel)$ to similar results
on $\beta(\pi,\channel)$. This would be the case for \emph{additive leakage}, since
$V(\pi,\channel) - V(\pi) = \errorguesspriors(\pi) - \bayesrisk(\pi,\channel)$, but in the multiplicative
case, the ``one minus'' in both sides of the fraction completely changes the behavior of the
function.

\begin{table*}[h!]
	\caption{Security metrics comparison: Local Differential Privacy (LDP),
		Multiplicative leakage capacity, and Bayes Security (\minbeta).
		Note that the cryptographic advantage is a special case of $\minbeta$, and
		therefore not included in this table. ``Consistent Black-box Estimation'' refers to the existence of a statistically consistent estimator for the security metric (e.g., \cite{cherubin2019fbleau}).}
	\label{tab:comparison}
	\centering
	\def\arraystretch{1.2}
	\begin{tabularx}{\linewidth}{llXXX}
		& \textbf{Property} & \textbf{LDP} ($\varepsilon$) & $\mathbf{\mulcapacity}$ & $\mathbf{\minbeta}$\\
		\cline{2-5}
		& \textbf{Range} & $[0, \infty)$ & $[1, n]$ & $[0, 1]$\\
		&  & Smaller is more secure & Smaller is more secure & Larger is more secure\\
		\cline{2-5}
		& \textbf{Attacks} & Any attack & Concrete attack & Concrete attack\\
		& \textbf{Security Guarantee} &  Worst case for 2 leakiest secrets & Expected risk among all secrets & Expected risk for 2 leakiest secrets\\
		& \textbf{Qualitative Intuition} & Bounds probability of ever distinguishing two secrets %
			  & Bounds the probability of guessing among all secrets & (Complement of) the advantage of an attacker in guessing the secret w.r.t. the random baseline\\
		& \textbf{Quantitative Intuition} & None for general case. Can be defined w.r.t. mechanism &
			$\mulcapacity = k$ means the adversary is $k$ times more likely to guess the secret &
			$1-\nicefrac{1}{2}\minbeta$ is the probability that the adversary guesses the secret correctly\\ %
		\cline{2-5}
		& \textbf{Composable} & \cmark & \cmark & \cmark\\
		& \textbf{Consistent Black-box Estimation} & \xmark & \cmark & \cmark\\
		& \textbf{Prior-agnostic} & \cmark & \cmark & \cmark\\
	\end{tabularx}
\end{table*}

\parabf{Capacity vs \ensuremath{\minbeta}}
One should first note that, while $\beta$
takes lower values to indicate a worse level of security,
$\mulleakage$ takes higher values.
In both cases, a natural question is to find the prior $\pi$
that provides the worst level of security; in the
case of leakage, its maximum value is known as \emph{channel capacity},
denoted by $\mulcapacity(\channel) = \max_\pi\mulleakage(\pi, \channel)$.

$\mulcapacity(\channel)$ is given
by the \emph{uniform prior} \cite{braun2009quantitative}, and
$\mulcapacity(\channel) = \sum_o \max_s \channel_{s,o}$.
Our main result (\autoref{thm:minbeta}) shows that $\beta(\pi, \channel)$ is minimized on a uniform prior over 2 secrets.
Hence, despite the similarity between Bayes vulnerability and Bayes risk,
the corresponding leakage and security metrics behave very differently.
Note that this difference makes $\mulcapacity(\channel)$ easier to compute
for arbitrary channels; it
is linear on both $|\secretspace|$ and $|\objectspace|$, while $\minbeta$
is quadratic on $|\secretspace|$.
We discuss fast ways for computing (or estimating) Bayes security
in \autoref{sec:fast-minimization}.

\parabf{Channel composition}
Despite their difference w.r.t. the prior that realizes each notion,
$\mulcapacity$ and $\minbeta$ behave similarly w.r.t. parallel and cascade composition.
It was shown that~\cite{espinoza2011min}:
\begin{align*}
	\mulcapacity(\channel^1 || \channel^2) &\Wide\leq \mulcapacity(\channel^1)\cdot\mulcapacity(\channel^2) ~,
		&\text{and}\\
	\mulcapacity(\channel^1\channel^2) &\Wide\leq \min \{ \mulcapacity(\channel^1), \mulcapacity(\channel^2)\} ~.
\end{align*}
The same bounds are given in \autoref{sec:composing} for $\minbeta$. Note, however,
that the proofs for $\minbeta$ are completely different and cannot be directly obtained from those
of $\mulcapacity$.

\parabf{Bounds on Bayes risk}
The goal of a security analyst is to quantify
how much information is leaked by a mechanism in the \emph{worst case}.
This is captured by both $\mulcapacity$ and $\minbeta$ which
focus on the prior that produces the highest leakage instead
of the true prior.
The user, however, is mostly interested in the actual threat
that he is facing: how likely it is for the adversary to guess his secret,
given a particular prior $\pi$ that captures the user's behavior.
In this sense, the Bayes risk, $\bayesrisk(\priors, \channel)$, has a clear operational interpretation for the user.

Fortunately, having computed either $\mulcapacity$ or $\minbeta$, we can
obtain direct bounds for the prior $\pi$ of interest:
\begin{align*}
	\bayesrisk(\priors, \channel)
	&\Wide\ge \minbeta(\channel) \cdot \errorguesspriors(\pi)
		~, &\text{and} \\
	\bayesrisk(\priors, \channel)
	&\Wide\ge 1 - \mulcapacity(\channel) \cdot V(\pi)
	~.
\end{align*}

The goodness of either bound depends on the application.
Intuitively, how good the bound is depends on how close $\pi$ is to the one
achieving  $\mulcapacity$ or $\minbeta$.
Concretely, since the former implies uniform priors, and the latter
a vector with only 2 non-empty (uniform) entries, the tightness of these
bounds depends on the sparsity of the real prior vector $\pi$.

We study empirically how tight these bounds are.
We consider two channels with $|\secretspace| = 10$ inputs and
$|\objectspace| = 1K$ outputs.
The first channel (hereby referred to as the \textit{random} channel) is obtained by sampling at random its conditional
probability distribution $P(o \mid s)$;
the second one (\textit{geometric} channel) has a geometric
distribution as used by Cherubin et al.~\cite{cherubin2019fbleau}, with
a noise parameter $\nu = 0.1$.
To evaluate the effect of sparsity, we set a sparsity level to
$\sigma \in {0, 1, ..., n-2}$, and we sample a prior that is $\sigma$-sparse
uniformly at random.
We compute the values of $\mulleakage$ and $\beta$, and measure their
absolute distance respectively from $\mulcapacity$ and $\minbeta$.
The experiment is repeated $1K$ times for each sparsity level.

\begin{figure}[ht!]
	\centering
	\includegraphics[width=0.445\linewidth]{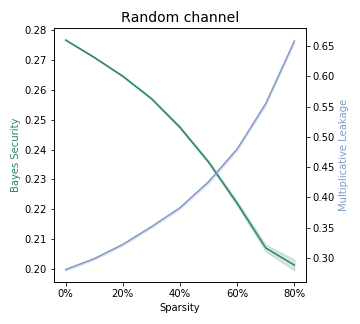}
	\includegraphics[width=0.445\linewidth]{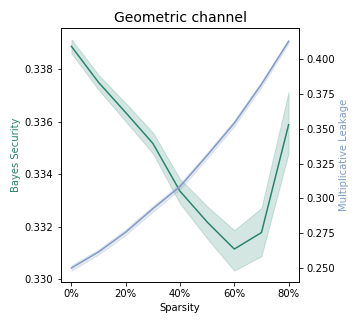}
	\caption{Tightness of the bounds on Bayes security and multiplicative leakage
		with respect to sparsity. Note that, because the two metrics have
		different scales, these plots are useful to compare their behavior,
		and not their actual values.}
	\label{img:ml-vs-beta-bound}
\end{figure}

\autoref{img:ml-vs-beta-bound} shows the results.
As expected, the multiplicative vulnerability leakage bound is tighter for vectors
that are less sparse, and the Bayes security one for higher
sparsity levels.
However, we observe that the Bayes security bound is loose
for high values of sparsity in the case of the geometric channel,
but not for the random one.
The reason is that if the real prior has maximum sparsity
(i.e., only 2 non-zero entries), then it is more
likely that the secrets on which $\beta$ is minimized are
not the same 2 secrets on which the prior is not empty.

As a consequence of this analysis,
we suggest $\mulcapacity$ is better suited to analyze deterministic mechanisms
with a large number of secrets distributed close to uniformly
(see \cite{smith2009foundations}). For deterministic
programs, $\minbeta$ is always $0$, unless the program is non-interfering; the bound
obtained by $\minbeta$ is then trivial, while  $\mulcapacity$ can provide meaningful bounds
if the number of outputs is limited.
On the other hand, $\minbeta$ is advantageous if $\pi$ is very sparse (e.g., website fingerprinting,
where the user may be visiting only a small number of websites):
since $\pi$ is very different
than the uniform one, and more similar to the one achieving $\minbeta$, the latter
will provide much better bounds.
We discuss application examples for Bayes security in \autoref{sec:discussion}.

\parabf{Miracle theorem}
Both Bayes vulnerability and Bayes risk can be thought of
as instantiations of a general family of metrics
parameterized by a \emph{gain function} $g$ (for vulnerability) or a
\emph{loss function $\ell$} (for risk). For generic
choices of $g$ and $\ell$, we can define $g$-leakage
$\mulleakage_g(\pi, \channel)$
and the corresponding security notion $\beta_\ell(\pi,\channel)$, in
a natural way (detailed in Appendix~\ref{appendix:gain-loss}).

A result by Alvim et al.~\cite{alvim2012measuring}, known as ``miracle'' due to its
arguably surprising nature, states that
\[
	\mulleakage_g(\pi,\channel) \Wide\le \mulcapacity(\channel)
	~,
\]
for all priors $\pi$ and all \emph{non-negative} gain functions $g$.
This gives a direct bound for a very general family of leakage metrics.

For $\beta_\ell$, however, we know that a corresponding result does not hold in general,
even if we restrict to the family of $[0,1]$ loss functions. Identifying
families of loss functions that provide similar bounds is left as future work.

\renewcommand{\algorithmicrequire}{\textbf{Input:}}
\renewcommand{\algorithmicensure}{\textbf{Output:}}
\renewcommand{\algorithmicrequire}{\textbf{Input:}}
\renewcommand{\algorithmicensure}{\textbf{Output:}}
\newcommand{\M}{\mathcal{M}}  %
\newcommand{\R}{\mathbb{R}}
\newcommand{\D}{d}  %
\newcommand{\Domain}{\mathcal{D}}
\newcommand{\Range}{\mathcal{R}}
\newcommand{\aux}{\textsf{aux}}
\newcommand{\E}{\mathbb{E}}
\newcommand{\leader}[1]{\medskip\noindent\textbf{#1}}
\newcommand{\eqr}[1]{Eq.~\eqref{#1}}
\newcommand{\biburlhref}[2]{\href{#1}{#2}}
\newcommand{\btheta}{{\mathbb \theta}}
\newcommand{\calL}{\ensuremath{\mathcal L}}
\newcommand{\calN}{\ensuremath{\mathcal N}}
\newcommand{\g}{\ensuremath{\mathbf g}}
\newcommand{\Id}{\ensuremath{\mathbf I}}
\newcommand{\doh}[2]{\frac{\partial #1}{\partial #2}}
\newcommand{\eps}{\varepsilon}
\newcommand{\x}{\ensuremath{\mathbf x}}

\section{Case studies: Bayes security of well-known mechanisms}
\label{sec:case-study}

\begin{figure*}
	\centering
	\begin{subfigure}[b]{0.23\textwidth}
		\centering
		\includegraphics[width=\textwidth]{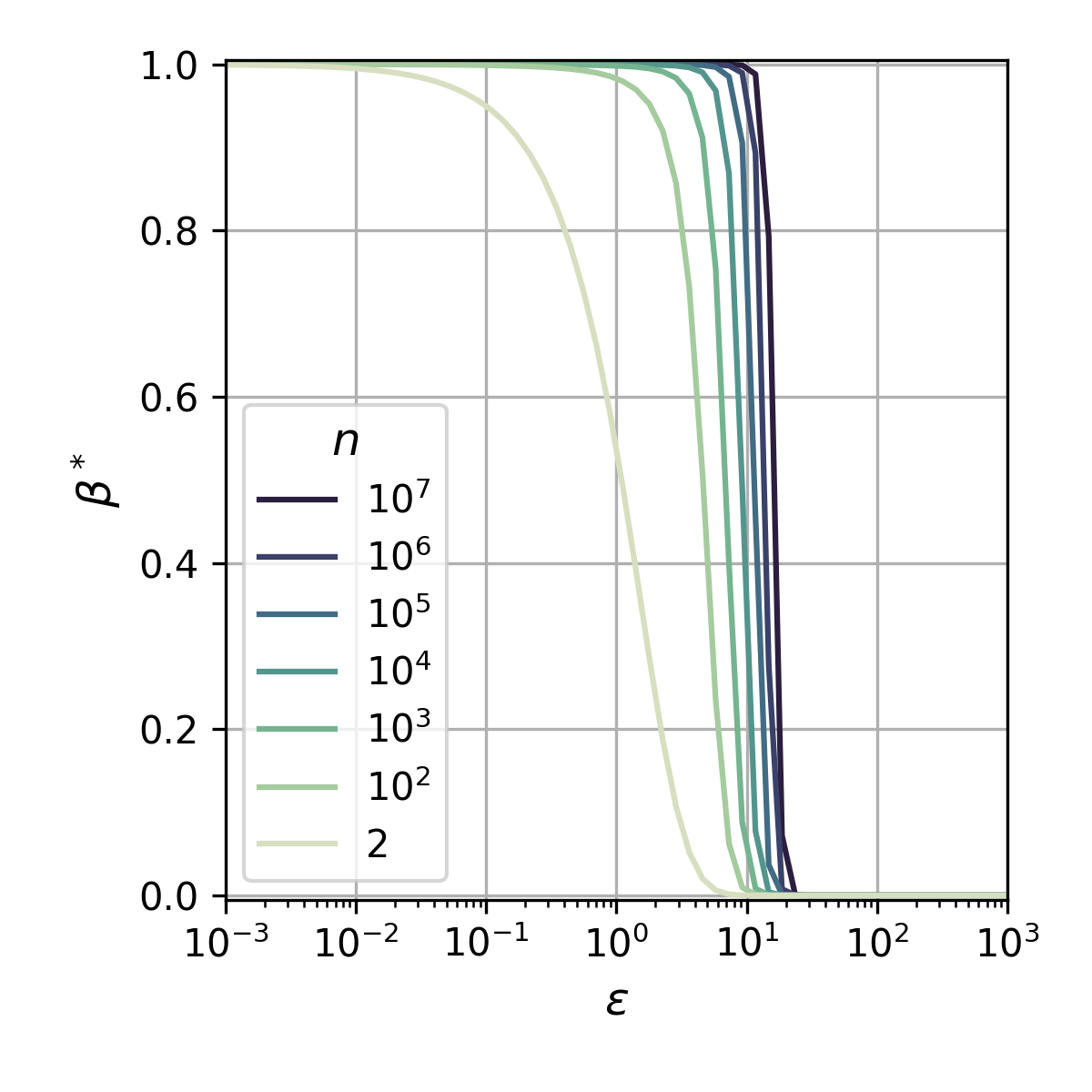}
		\caption{Randomized Response}
	\end{subfigure}
	\begin{subfigure}[b]{0.23\textwidth}
		\centering
		\includegraphics[width=\textwidth]{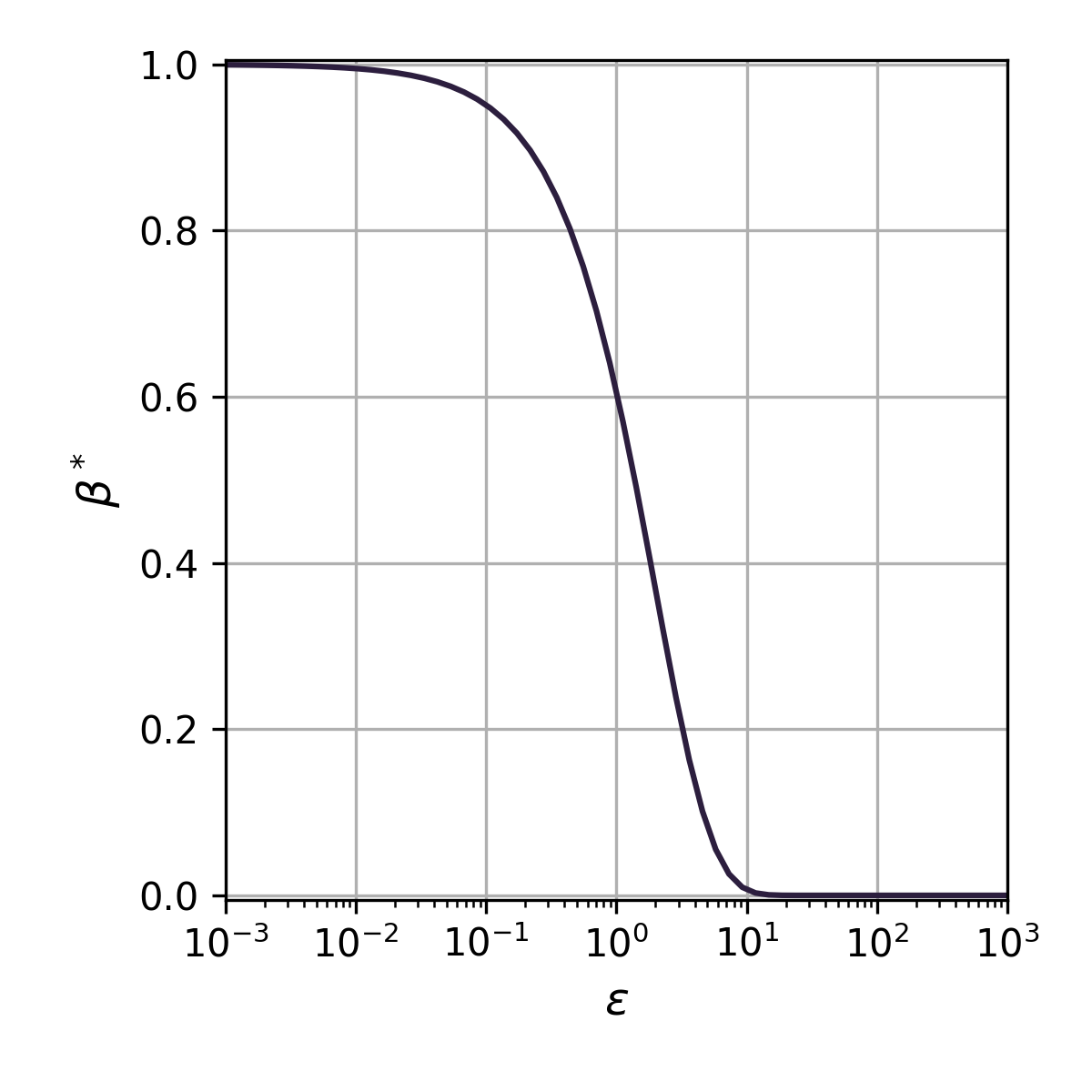}
		\caption{Laplace mechanism}
	\end{subfigure}
	\begin{subfigure}[b]{0.23\textwidth}
		\centering
		\includegraphics[width=\textwidth]{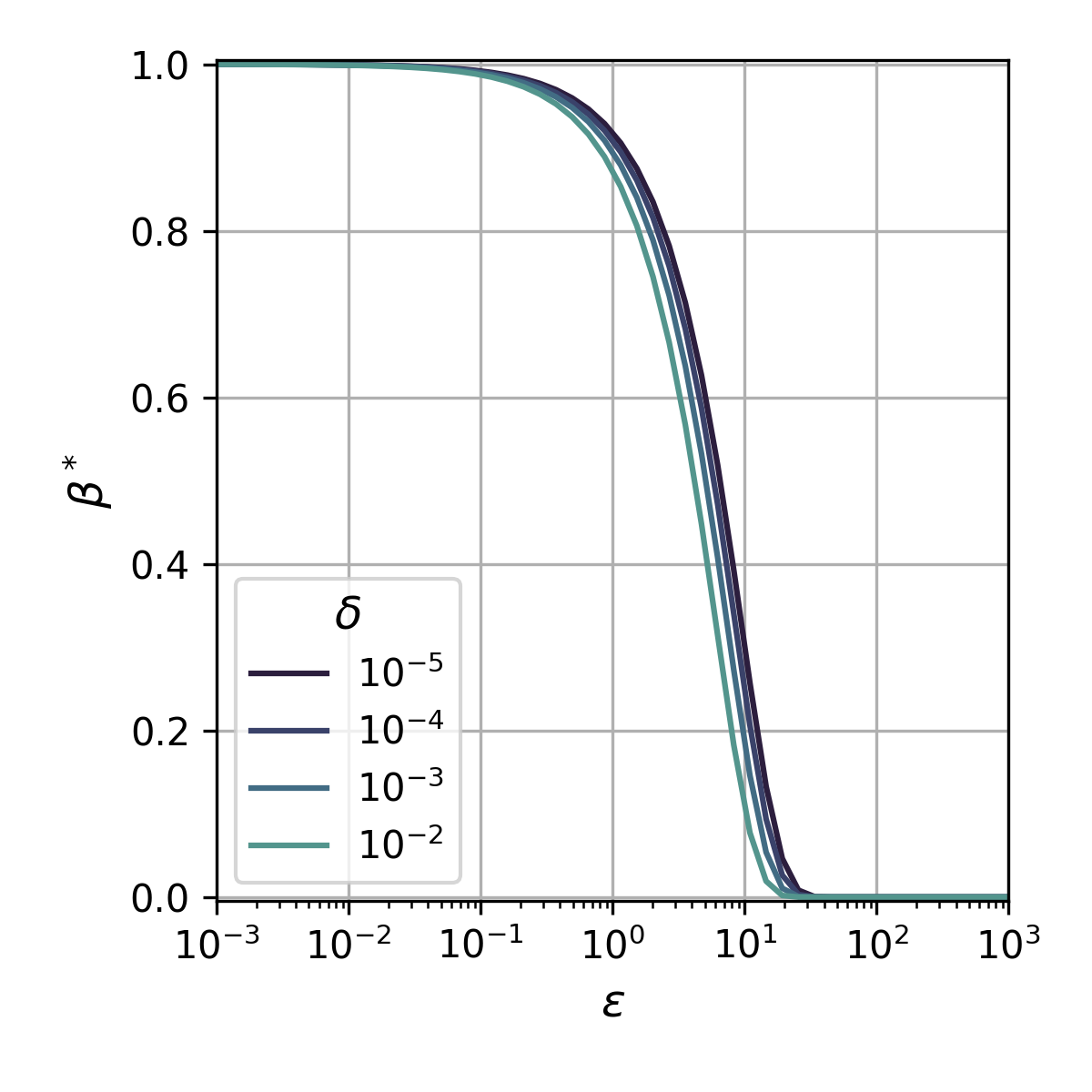}
		\caption{Gaussian mechanism}
	\end{subfigure}
	\caption{Relation between Bayes security and DP for
		various mechanisms;
$(\varepsilon, \delta)$-DP is used for the Gaussian mechanism.}
	\label{fig:case-study}
\end{figure*}

We now exploit the various properties we proved about
Bayes security to study well-known mechanisms:
Randomized Response, and the Gaussian and Laplace mechanisms.
These are often used as building blocks for more complex ones.

\subsection{Randomized Response}

Randomized Response (RR) is a simple
obfuscation protocol that guarantees $\varepsilon$-LDP.
It randomly assigns a data record to a new data record
from the same range. Assuming $\secretspace = \objectspace$,
RR is represented as the following channel
matrix, $\mathcal{R}: \secretspace \mapsto \objectspace$:
$$
\mathcal{R}_{s,o} = P(o|s) \eqdef \begin{cases}
	\frac{exp(\varepsilon)}{n + \exp(\varepsilon) - 1} \text{ if } o = s\\
	\frac{1}{n + \exp(\varepsilon) - 1} \text{ otherwise}\,.
\end{cases}
$$

We can derive $\minbeta$ easily for RR because it obfuscates each secret according to the
same distribution. For any two secrets $s_i$ and $s_j$, the
rows of the channel matrix are identical except for positions $i$ and $j$,
where their values are inverted; therefore, the Bayes security
is the same between any two secrets, and thus all are equally vulnerable.
Using the results in~\autoref{sec:efficient-estimation}, we just need to look
at any two rows, e.g., the first two.
Let $\mathcal{R}_{ab}$ indicate the sub-channel matrix
containing only the first two rows, and let
$\unipriors = \nicefrac{1}{2}$.
The corresponding Bayes risk is
$\bayesrisk(\unipriors, \mathcal{R}_{ab})=\frac{n}{2(\exp(\varepsilon)+n-1)}$;
hence the Bayes security is:
\begin{equation} \label{eq:minbetaRR}
	\minbeta(\mathcal{R}) = \frac{n}{\exp(\varepsilon) + n - 1} \,,
\end{equation}
where $n$ is the number of secrets and observables.

\parabf{Discussion}
\autoref{eq:minbetaRR} captures the risk that an optimal
adversary can distinguish between any two data records (secrets)
from the RR output; by \autoref{thm:minbeta},
these are the easiest two records to distinguish, and
it implies Bayes security for any other subset of
the secret space.

We use this equation to relate Bayes security and $\varepsilon$
w.r.t. the number of data records (secrets) in \autoref{fig:case-study} (a).
We observe that the number of data records
is essential for security.
For a rather loose DP parameter of $\varepsilon=10$,
having a dataset of $1M$ gives $\beta^*\approx0.978$;
assuming the two data records have the same prior,
the probability that the adversary guesses correctly
is $0.511$.
With the same $\varepsilon$, having a dataset of
$10M$ ensures a practically perfect Bayes security
of $0.998$ (Bayes vulnerability for a uniform prior:
$0.501$).
Overall, this shows that, if we are interested in a
specific threat model(s), then a threat-specific metric such
as Bayes security can reassure us on the security of the
mechanism even when DP suggests it is not.

In the appendix (\autoref{appendix:rr}), we include an empirical study of RR
for the \texttt{Census1990} dataset.
We observe that, in this specific case, a good utility (95\%)
is only achieved for a rather large $\varepsilon=3.3$;
in principle, one would disregard the mechanism to be unsafe
in this particular instance.
Yet, Bayes security ($\minbeta = 0.99999$ for $\varepsilon=3.3$,
and $\minbeta=0.99995$ for $\varepsilon=4.8$) reassures us on its security
within this threat model.

\subsection{Laplace mechanism}

For parameter a $\lambda$, we define the Laplace mechanism
as $\mathcal{L}: s \mapsto s + \Lambda(0, \lambda)$,
where $\Lambda(\mu, \lambda)$
is a $\mu$-centered Laplace distribution with scale $\lambda$.

\begin{restatable}{proposition}{ThmLaplace}
	\label{thm:beta-Laplace}
	$\mathcal{L}$ is $\beta^*$-secure with
	$$\beta^* %
	= \exp\left(-\max_{s_i, s_j \in \secretspace}\frac{|s_i-s_j|}{2\lambda}  \right)\,,$$
\end{restatable}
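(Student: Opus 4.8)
The plan is to apply \autoref{thm:beta-star-as-diameter}, which tells us that $\beta^*(\mathcal{L}) = 1 - \max_{s_i, s_j \in \secretspace}\TotalVar(\mathcal{L}_{s_i}, \mathcal{L}_{s_j})$, and then compute the total variation distance between two shifted Laplace distributions. Concretely, the row $\mathcal{L}_{s_i}$ is the density of $s_i + \Lambda(0,\lambda)$, i.e.\ the Laplace density centered at $s_i$ with scale $\lambda$; likewise $\mathcal{L}_{s_j}$ is centered at $s_j$. So the core task reduces to the classical fact that the total variation distance between $\Lambda(\mu_1, \lambda)$ and $\Lambda(\mu_2, \lambda)$ depends only on $|\mu_1 - \mu_2|$, and equals $1 - \exp(-|\mu_1-\mu_2|/(2\lambda))$.

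To establish that identity I would proceed as follows. Write $a = |s_i - s_j|$ and, by translation invariance, assume the two centers are $0$ and $a$ with $a \ge 0$. The TV distance is $\frac{1}{2}\int_{\Reals} |p_0(x) - p_a(x)|\,dx$ where $p_c(x) = \frac{1}{2\lambda}\exp(-|x-c|/\lambda)$, and equivalently it equals $\int_{\Reals} (p_0(x) - p_a(x))^+\,dx$. By symmetry of the two densities about the midpoint $a/2$, the sign of $p_0 - p_a$ changes exactly once, at $x = a/2$: for $x < a/2$ we have $p_0(x) > p_a(x)$, and for $x > a/2$ the reverse. Hence $\TotalVar(p_0, p_a) = \int_{-\infty}^{a/2} (p_0(x) - p_a(x))\,dx = F_{p_0}(a/2) - F_{p_a}(a/2)$, where $F$ denotes the CDF. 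Using $F_{p_0}(a/2) = 1 - \frac{1}{2}\exp(-a/(2\lambda))$ (since $a/2 \ge 0$) and $F_{p_a}(a/2) = \frac{1}{2}\exp(-a/(2\lambda))$ (since $a/2 \le a$, evaluating the left tail of a Laplace centered at $a$), we get $\TotalVar(p_0, p_a) = 1 - \exp(-a/(2\lambda))$.

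Plugging this back, $\beta^*(\mathcal{L}) = 1 - \max_{s_i,s_j}\bigl(1 - \exp(-|s_i-s_j|/(2\lambda))\bigr) = \min_{s_i,s_j}\exp(-|s_i-s_j|/(2\lambda)) = \exp\bigl(-\max_{s_i,s_j}|s_i-s_j|/(2\lambda)\bigr)$, which is the claimed formula (the last equality because $t \mapsto \exp(-t)$ is decreasing, so minimizing the exponential corresponds to maximizing the gap). One should note that the maximization ranges over the secret space $\secretspace$, so when $\secretspace$ is bounded the diameter $\max_{s_i,s_j}|s_i - s_j|$ is finite and the bound is nontrivial; when $\secretspace$ is unbounded the expression degenerates to $0$, consistent with the fact that arbitrarily distant secrets are perfectly distinguishable.

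I do not expect any serious obstacle here: the only real content is the single-crossing argument for the sign of $p_0 - p_a$ and the routine CDF evaluation, both of which are standard. The one place to be careful is making sure the reduction via \autoref{thm:beta-star-as-diameter} is stated for a continuous output space — the theorem is phrased for channel matrices, but the Laplace mechanism has a continuous range, so strictly speaking one either invokes the continuous analogue of that theorem (replacing sums by integrals in the $L_1$/TV identity, which holds verbatim) or argues by a discretization/limiting argument; I would just note that the identity $\beta^* = 1 - \max_{a,b}\TotalVar(\channel_a,\channel_b)$ carries over to densities without change.
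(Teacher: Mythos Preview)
Your proposal is correct and follows essentially the same route as the paper: invoke \autoref{thm:beta-star-as-diameter} to reduce to a total-variation computation, then use the single-crossing/CDF-difference argument to get $\TotalVar(\Lambda(\mu_1,\lambda),\Lambda(\mu_2,\lambda)) = 1-\exp(-|\mu_1-\mu_2|/(2\lambda))$. The only cosmetic difference is that the paper first normalizes to scale $1$ and then rescales, whereas you compute directly at scale $\lambda$; your remark about the continuous-output extension of \autoref{thm:beta-star-as-diameter} is a useful caveat the paper leaves implicit.
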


\parabf{Discussion}
We can use this analysis to compare $\beta^*$ with DP.
Let $f: \mathcal{D} \mapsto \mathbb{R}$ be a real-valued
function with sensitivity
$\Delta f \eqdef \max_{x, y \in \mathcal{D}} f(x)-f(y)$.
The mechanism $f(x) + \mathcal{L}(0, \lambda)$ with scale parameter
$\lambda = \frac{\Delta f}{\varepsilon}$ is $\varepsilon$-DP.
By using the last result, the
Bayes security of this mechanism is
$\beta^* = \exp(-\frac{\varepsilon}{2})$.

Now, suppose we care about the probability of an adversary
at distinguishing the two maximally distant points
$s_1, s_2 = \argmax_{x, y \in \mathcal{D}} f(x)-f(y)$.
For a relatively strong DP level of $\varepsilon=0.1$,
we get $\beta^* \approx 0.95$;
This implies a non-negligible advantage for the adversary;
e.g., assuming the two points have identical prior $\nicefrac{1}{2}$,
the probability that the optimal adversary distinguishes
between them is roughly $0.525$.
\autoref{fig:case-study} (b) shows the overall behavior.

\subsection{Gaussian mechanism}

For parameter $\sigma$, the Gaussian mechanism adds noise
to a secret $s$ from a Gaussian distribution:
$\mathcal{G}: s \mapsto s + \mathcal{N}(0, \sigma^2)$.

\begin{restatable}{proposition}{ThmGaussian}
	\label{thm:beta-gaussians}
	$\mathcal{G}$ is $\beta^*$-secure with
	$\beta^* = 1-(\Phi(\alpha)-\Phi(-\alpha))$,
	where $\Phi$ is the CDF of $\mathcal{N}(0,1)$, and
	$\alpha = \max_{s_i, s_j} \frac{|s_i-s_j|}{2\sigma}$.
\end{restatable}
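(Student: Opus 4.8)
The plan is to reduce the Gaussian statement to an instance of \autoref{thm:beta-star-as-diameter}, exactly as the Laplace case presumably does. By that theorem, $\beta^*(\mathcal{G}) = 1 - \max_{a,b\in\secretspace}\TotalVar(\mathcal{G}_a, \mathcal{G}_b)$, where $\mathcal{G}_s$ is the density of $\mathcal{N}(s,\sigma^2)$. So the whole problem is: compute the total variation distance between two Gaussians with the same variance $\sigma^2$ but means $s_i$ and $s_j$, and then maximize over the pair of secrets. First I would fix two secrets $s_i, s_j$ with, say, $s_i < s_j$, and write $\Delta = s_j - s_i$. The densities of $\mathcal{N}(s_i,\sigma^2)$ and $\mathcal{N}(s_j,\sigma^2)$ cross at exactly one point, the midpoint $c = (s_i+s_j)/2$, because the likelihood ratio is monotone (the two curves are horizontal translates of each other). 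On one side of $c$ one density dominates, on the other side the other does.

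Given that single crossing point, the standard identity $\TotalVar(P,Q) = P(A) - Q(A)$ where $A = \{x : p(x) > q(x)\}$ gives $\TotalVar(\mathcal{G}_{s_i}, \mathcal{G}_{s_j}) = \Pr[X \le c] - \Pr[Y \le c]$ with $X \sim \mathcal{N}(s_i,\sigma^2)$, $Y \sim \mathcal{N}(s_j,\sigma^2)$. Standardizing, $\Pr[X \le c] = \Phi((c - s_i)/\sigma) = \Phi(\Delta/(2\sigma))$ and $\Pr[Y \le c] = \Phi((c - s_j)/\sigma) = \Phi(-\Delta/(2\sigma))$. Hence $\TotalVar(\mathcal{G}_{s_i}, \mathcal{G}_{s_j}) = \Phi(\Delta/(2\sigma)) - \Phi(-\Delta/(2\sigma))$, which is increasing in $\Delta$. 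Therefore the maximum over pairs of secrets is attained at $\Delta = \max_{s_i,s_j}|s_i - s_j|$, i.e. at $\alpha = \max_{s_i,s_j}|s_i-s_j|/(2\sigma)$, yielding $\beta^*(\mathcal{G}) = 1 - (\Phi(\alpha) - \Phi(-\alpha))$ as claimed.

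I do not expect a genuine obstacle here; the only things to be careful about are (i) justifying that the likelihood ratio $\mathcal{G}_{s_j}(x)/\mathcal{G}_{s_i}(x) = \exp\big((s_j - s_i)(x - c)/\sigma^2\big)$ is strictly monotone in $x$, so the dominance region is exactly a half-line and the TV distance is a clean difference of two CDF values rather than an integral that must be split into many pieces; and (ii) the monotonicity of $\Delta \mapsto \Phi(\Delta/(2\sigma)) - \Phi(-\Delta/(2\sigma))$ on $[0,\infty)$, which follows because its derivative $\frac{1}{\sigma}\varphi(\Delta/(2\sigma)) > 0$ (with $\varphi$ the standard normal density), so maximizing TV over secret pairs is the same as maximizing $|s_i - s_j|$. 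Both points are routine. One remark worth including: unlike the Laplace case, the Gaussian mechanism is not $\varepsilon$-DP for any finite $\varepsilon$ (its support-wide likelihood ratio is unbounded), which is exactly why \autoref{fig:case-study}(c) plots $(\varepsilon,\delta)$-DP instead; the $\beta^*$ formula above still holds and is finite, underscoring that Bayes security remains informative in regimes where pure DP is vacuous.
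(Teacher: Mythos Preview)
Your proposal is correct and follows essentially the same route as the paper: compute $\TotalVar(\mathcal{N}(s_i,\sigma^2),\mathcal{N}(s_j,\sigma^2))$ via the single crossing point at the midpoint, express it as $\Phi(\Delta/(2\sigma))-\Phi(-\Delta/(2\sigma))$, and invoke \autoref{thm:beta-star-as-diameter}. The only cosmetic difference is that the paper first treats the unit-variance case and then appeals to scale-invariance of total variation, whereas you standardize directly; your added justifications of the monotone likelihood ratio and of monotonicity in $\Delta$ are more explicit than what the paper writes down.
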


\parabf{Discussion}
Because the Gaussian mechanism does not satisfy pure DP,
we compare Bayes security with approximate DP.
For a function $f$ with sensitivity $\Delta f$,
and for $\varepsilon < 1$,
the following mechanism satisfies $(\varepsilon, \delta)$-DP:
$f(x) + \mathcal{N}(0, \frac{2\ln(\nicefrac{1.25}{\delta})(\Delta f)^2}{\varepsilon^2})$.

By applying \autoref{thm:beta-gaussians} we obtain
$\beta^* = 1-(\Phi(\alpha)-\Phi(-\alpha))$
with $\alpha = \nicefrac{\varepsilon}{2\sqrt{2\ln(\nicefrac{1.25}{\delta}}}$.
As desired, security does not depend on the sensitivity of the function.

We observe a similar behavior to what we observed for the Laplace mechanism (\autoref{fig:case-study} (c)).
Consider a dataset containing $N=1K$ records, for which an appropriate choice of
$\delta$ according to the literature is $\delta=\nicefrac{1}{N^2}$.
For a relatively secure setting ($\varepsilon=1$),
we have $\beta^*=0.925$.
As before, an interpretation of this value is that an optimal attacker will distinguish
the two most vulnerable secrets with probability $0.538$;
this is clearly non-negligible.
We note that only a stricter value such as
$\varepsilon = 0.1$ ensures a strong guarantee against the attack ($\beta^* = 0.992$).

Overall, Bayes security enabled us to interpret the privacy guarantees
of various mechanisms, by matching them back to the probability of success of an optimal attacker
under a specific threat model.

\section{Computational estimation of \minbeta}
\label{sec:fast-minimization}

Suppose that, differently from the cases we just analyzed (\autoref{sec:case-study}),
a simple closed-form expression of the mechanism does not exist:
how can we determine its Bayes security?
\autoref{thm:minbeta} shows that to quantify Bayes security, the minimizer
of multiplicative risk leakage $\beta$, we just need to estimate $\beta$
for all pairs of secrets; this requires $\mathcal{O}(n^2)$ measurements.
A measurement for a pair of
secrets is obtained by estimating the Bayes risk of
the mechanism for those two secrets;
we can do this analytically, if we have white-box knowledge
of the mechanism, or in a black-box manner\footnote{We remark that black-box estimation
of the Bayes risk (and, therefore, Bayes security) can be done consistently via distribution-free
techniques~\cite{cherubin2019fbleau}.}.
In either case, if the mechanism is complex enough (e.g., large
input or output space),
each measurement may need a non-negligible computational time,
from seconds to tens of minutes.

In this section, we investigate techniques for
improving the search time. Since the bottleneck is the
time it takes to measure $\beta(\priors, \channel)$ for one prior $\priors$,
we seek to reduce the number of such measurements.
We first assume white-box knowledge of the system
(subsections~\ref{sec:efficient-estimation}-\ref{sec:min-beta-approximation}),
and then study the black-box case (subsection~\ref{sec:minbeta-black-box}).

\parabf{Initial observations}
Denote by $\prior{ab}$ be the sparse prior vector
$(0, ..., 0, \nicefrac{1}{2}, 0, ..., 0, \nicefrac{1}{2}, 0, ..., 0)$
such that the two non-zero elements of value $\nicefrac{1}{2}$ are in
positions $a$ and $b$, and $a\neq b$.
Given a channel $\channel$, from the definition of $\beta$ we get that
\[
	\beta(\prior{ab}, \channel)
	= 2 -  \sum_o \max_{s\in\{a,b\}} \channel_{\secret, \object}
	~.
\]
The crucial observation (shown in the proof of \autoref{thm:beta-star-as-diameter}) is that
the above quantity is equal to the complement of the \emph{total variation}
distance $\TotalVar(\channel_a, \channel_b)$ between the rows $\channel_a$ and $\channel_b$ of the
channel.
The total variation distance of two discrete distribution is $\nicefrac{1}{2}$ of their
$\lnorm{1}$ distance (seen as vectors); hence:
\[
	\beta(\prior{ab}, \channel)
	~=~ 1 - \TotalVar(\channel_a, \channel_b)
	~=~ 1 - \frac{1}{2}\norm{\channel_a - \channel_b}_1
	~.
\]

Then, from \autoref{thm:minbeta}, we get that minimizing $\beta$ is
equivalent to finding the rows of the channel that are maximally distant
with respect to $\lnorm{1}$.
This is the well-known \emph{diameter problem} (for $\lnorm{1}$):
given the set of vectors $\channel_\secretspace$, find the two that are maximally
distant (i.e., find the diameter of the set).

\subsection{Computing \minbeta with domain knowledge} \label{sec:efficient-estimation}
In practical applications, domain knowledge may
enable \textit{a priori} identification of the two leakiest secrets.
For example, the smallest and largest webpages
users can visit in website fingerprinting (\autoref{sec:discussion});
and the smallest and largest exponents in timing side channels against exponentiation algorithms~\cite{cherubin2019fbleau}. %
There are also applications where all the secrets are equally vulnerable;
hence $\minbeta$ is obtained for any pair of distinct secrets.
For instance, when the mechanism operates in such a way
that all secrets enjoy the same protection (e.g.,
the Randomized Response mechanism, \autoref{sec:case-study}).

More generally, if one does not know the exact minimizing secrets,
but knows that they belong to a set $\secretspace' \subset \secretspace$,
then to determine $\minbeta$ it suffices measuring $\beta$ for all
$s_1, s_2 \in \secretspace'$.

\subsection{Computing \minbeta in linear time $\nsecrets$}
The geometric characterization given by \autoref{thm:beta-star-as-diameter}
implies that obtaining $\beta^*$ requires computing the diameter
of a set of $n = |\secretspace|$ vectors of dimension $m = |\objectspace|$.
The direct approach is to compute the distance between
every pair of vectors, i.e., perform $O(n^2 m)$ operations. This quadratic dependence on $n$ can be prohibitive
when the number of secrets grows.

We first show that, by using an isometric embedding of $\lnorm{1}^m$ into $\lnorm{\infty}^{2^m}$,
$\beta^*$ can be computed in $O(n 2^m)$ time.
Concretely, each $x \in \Reals^m$ is translated into a vector $\phi(x) \in \Reals^{2^m}$, which has
one component for every bitstring $b$ of length $m$,
such that $\phi(x)_b = \sum_{i=1}^{m} x_i (-1)^{b_i}$.
Note that the equivalence $\|\phi(x) - \phi(x')\|_\infty = \|x-x'\|_1$
holds for all $x,x'\in\Reals^m$.
The $\lnorm{\infty}$ diameter problem can be solved in linear time:
we only need to find the maximum and minimum value of each component.

This computation is linear in $|\secretspace|$ but exponential in $|\objectspace|$.
It outperforms the direct approach when the number of observations is small,
but the problem becomes harder as the number of observations grows. When $m = \Theta(n)$ there is no sub-quadratic algorithm for the $\lnorm{p}$-diameter problem for any $p\ge 0$~\cite{DBLP:journals/corr/DavidSL16} .
This suggests that there may not be any sub-quadratic time for computing $\beta^*$ either.

\subsection{An efficient approximation of \minbeta}
\label{sec:min-beta-approximation}

We present an estimation of $\minbeta$ that can be obtained in $O(nm)$ time.
One selects an arbitrary distribution $q\in\distset\objectspace$ and computes the
maximal distance $d$ between any channel row and $q$. The diameter of $\channel_\secretspace$
is at most $2d$, giving a lower bound on $\minbeta$.
Furthermore, if $q$ lies within the convex hull of $\channel_\secretspace$ (denoted by
$\ConvHull{\channel_\secretspace}$), then the diameter
is at least $d$, giving also an upper bound:

\begin{restatable}{proposition}{betastarapprox}
	\label{prop:beta-star-approx}
	Let $\channel$ be a channel,
	$q \in \distset{\objectspace}$, and
	$d = \max_{s\in \secretspace} \| \channel_s - q \|_1$.
	Then
	$1 - d \Wide\le \beta^*(\channel) \,.$
	Moreover, if $q \in \ConvHull{\channel_\secretspace}$ then
	$\beta^*(\channel) \Wide\le 1 - \nicefrac{d}{2}~.$
\end{restatable}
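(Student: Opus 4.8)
The plan is to reduce everything to the geometric reformulation from \autoref{thm:beta-star-as-diameter}, namely $\beta^*(\channel) = 1 - \frac{1}{2}\Diam{\channel_\secretspace}$ where $\Diam{\channel_\secretspace} = \max_{a,b\in\secretspace}\norm{\channel_a - \channel_b}_1$. The whole statement then becomes a pair of elementary facts relating the diameter of a finite point set to the radius $d$ of a ball centered at an arbitrary point $q$ that covers the set. First I would establish the lower bound on $\beta^*$, equivalently the upper bound $\Diam{\channel_\secretspace} \le 2d$: for any two rows $\channel_a,\channel_b$, the triangle inequality in $\lnorm 1$ gives $\norm{\channel_a - \channel_b}_1 \le \norm{\channel_a - q}_1 + \norm{q - \channel_b}_1 \le d + d = 2d$; taking the max over $a,b$ yields $\Diam{\channel_\secretspace}\le 2d$, hence $\beta^*(\channel) = 1 - \tfrac12\Diam{\channel_\secretspace} \ge 1 - d$. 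This direction needs no hypothesis on $q$.

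For the second part, I would show that if $q \in \ConvHull{\channel_\secretspace}$ then $\Diam{\channel_\secretspace} \ge d$, which gives $\beta^*(\channel) = 1 - \tfrac12\Diam{\channel_\secretspace} \le 1 - \tfrac{d}{2}$. Pick a row $\channel_{s^*}$ realizing the maximal distance, $\norm{\channel_{s^*} - q}_1 = d$. Since $q$ lies in the convex hull, write $q = \sum_{s\in\secretspace} \lambda_s \channel_s$ with $\lambda_s \ge 0$, $\sum_s \lambda_s = 1$. Then
\begin{equation*}
d = \norm{\channel_{s^*} - q}_1 = \norm*{\sum_{s} \lambda_s(\channel_{s^*} - \channel_s)}_1 \le \sum_s \lambda_s \norm{\channel_{s^*} - \channel_s}_1 \le \max_s \norm{\channel_{s^*} - \channel_s}_1 \le \Diam{\channel_\secretspace},
\end{equation*}
where the first inequality is convexity of the norm (equivalently the triangle inequality applied to the convex combination) and the second uses $\sum_s \lambda_s = 1$. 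This is the only place the convex-hull hypothesis is used, and it is genuinely needed: if $q$ were far outside the hull, $d$ could be arbitrarily large while the diameter stays bounded, so no upper bound on $\Diam{\channel_\secretspace}$ in terms of $d$ alone would be possible.

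There is no serious obstacle here; the argument is two applications of the triangle inequality plus the substitution $\Diam{\channel_\secretspace} = 2(1-\beta^*)$ from \autoref{thm:beta-star-as-diameter}. The only points worth stating carefully are that $\channel_\secretspace$ is a finite set so both the covering radius $d$ and the diameter are attained by actual rows (no supremum/closure subtleties), and that $q$ is an arbitrary distribution in $\distset{\objectspace}$ — in particular it need not be a row of $\channel$ — so that the bound is genuinely usable with a cheaply computable $q$ such as the uniform distribution or the centroid $\frac1n\sum_s \channel_s$, the latter automatically lying in $\ConvHull{\channel_\secretspace}$ and hence activating the upper bound. I would close by remarking that computing $d$ costs $O(nm)$ operations, matching the claim preceding the proposition.
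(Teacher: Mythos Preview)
Your proof is correct. The lower bound is handled identically to the paper: triangle inequality gives $\Diam{\channel_\secretspace}\le 2d$, and \autoref{thm:beta-star-as-diameter} converts this to $\beta^*(\channel)\ge 1-d$.

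For the upper bound you take a slightly different, more self-contained route than the paper. The paper factors the argument through an auxiliary lemma stating that $\Diam{\ConvHull{S}}=\Diam{S}$ for any set $S$ (proved via a closed-ball containment argument), and then observes that $q$ and every $\channel_s$ lie in $\ConvHull{\channel_\secretspace}$, so $d\le\Diam{\ConvHull{\channel_\secretspace}}=\Diam{\channel_\secretspace}$. You instead expand $q=\sum_s\lambda_s\channel_s$ explicitly and apply convexity of the norm to the row $\channel_{s^*}$ realizing $d$, obtaining $d\le\max_s\norm{\channel_{s^*}-\channel_s}_1\le\Diam{\channel_\secretspace}$ directly. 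Your argument is more elementary and avoids the extra lemma; the paper's approach has the advantage that the convex-hull-diameter lemma is reused elsewhere (in the proof of \autoref{thm:cascade-comp}). Both are perfectly valid.
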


Good choices for $q$ are distributions that are likely to lie ``in-between''
the two maximally distant rows, for instance the \emph{centroid} of $\channel_\secretspace$
(mean of all rows). %

Several advanced approximation algorithms exist for the
$\lnorm{2}$ diameter problem \cite{DBLP:conf/cccg/ImanparastHM18}; these
could be employed using some embedding of $\lnorm{1}$ into $\lnorm{2}$. The
trivial embedding has distortion $\sqrt{m}$ (since $\|x\|_2 \le
\|x\|_1 \le \sqrt{m}\|x\|_2$), hence the approximation factor may be too
loose as $|\objectspace|$ grows. Low distortion embeddings of $\lnorm{1}$
into $\lnorm{2}$ exist \cite{DBLP:conf/stoc/AroraLN05}, but it is unclear
if they can be applied to the diameter problem.
In \autoref{appendix:beta-approximation}, we conduct an empirical
study of these approximations.

\subsection{Black-box estimation of $\beta^*$}
\label{sec:minbeta-black-box}

The previous sections assume full knowledge of
the channel $\channel$. In practice, this assumption may fail:
systems may be too complex
to analyze, or their behavior may be unknown.
In such cases,
we can estimate the Bayes risk, and therefore $\beta$, using black-box
estimation tools
(i.e., only observing the system's inputs and outputs),
such as
\texttt{F-BLEAU}~\cite{cherubin2019fbleau}.
As with the white-box case, we need to reduce the number
of priors $\pi$ for which we estimate $\beta(\prior, \channel)$.

\parait{Bounds} A first approach is to use the bounds given
by \autoref{prop:beta-star-approx}, which can be
computed in a black-box setting. One can interact
with the system to obtain observations for $q$. For instance observe:
the \textit{mean row}, by drawing observations from the channel with secrets that are chosen uniformly at random;
the \textit{any row of the channel}, by drawing observation for	a secret chosen arbitrarily;
or \textit{a row with arbitrary distribution}, e.g., by sampling $q$ uniformly at random from the set $\objectspace$.

\parait{Building upon $\bayesrisk$ black-box estimators~\cite{cherubin2019fbleau}}
If domain constraints do not enable identifying the pair of
leakiest secrets (\autoref{sec:efficient-estimation}),
we can try to reduce the search space.
For instance, we can
exploit the triangle inequality on the total variation distance
to discard some solutions before computing them.
E.g., given the Bayes security for
the priors $\prior{ac}$ and $\prior{bc}$:
\begin{align*}
\beta(\prior{ac}, \channel)+\beta(\prior{bc}, \channel)-1 &\leq
	\beta(\prior{ab}, \channel)\\
	&\leq 1 - |\beta(\prior{ac}, \channel)-\beta(\prior{bc, \channel})| \,.
\end{align*}
Thus, if $\beta(\prior{ab}, \channel)$ is larger than some
already-known $\beta(\prior{ij}, \channel)$ there is no need to compute it.
Conversely, if it is upper bounded by a small quantity,
we can compute it earlier aiming at discarding other combinations.

\section{Discussion and conclusions}
\label{sec:discussion}

This paper provides building blocks for studying complex
algorithms on the basis of Bayes security,
a metric that generalizes the cryptographic advantage.
Bayes security inherits benefits from both
average-case metrics, such as advantage and Bayes risk,
and worst-case
metrics, such as DP.
Similarly to the advantage, Bayes security is threat-specific:
it captures the risk for the users in a specified
threat model (e.g., what's the probability that a user's data record
is leaked).
Like DP, Bayes security is easily composable,
and it reflects the \textit{worst-case} for the two most vulnerable
secrets (e.g., data records).
Yet, Bayes security is a weaker worst-case notion than DP,
which may enable utility gains in high-security
regimes (\autoref{sec:case-study}).

\parabf{Applications}
The above characteristics make Bayes security
suitable for a broad range of security and privacy settings.
Below, we discuss some particularly fitting examples.

\parait{Website fingerprinting}
In website fingerprinting (WF), an adversary with access to
an encrypted network tunnel (e.g., VPN or Tor) aims to infer
the websites being visited by a user.
The \textit{success rate} (or \textit{accuracy}) of an attacker
has been used for years as a way of evaluating
an attack's goodness. However, this metric suffers from some
drawbacks~\cite{juarez2014critical,wang2020high}.
First, comparing success rate across studies is meaningless,
as the number of websites the user can visit strongly affects it:
the attack is very simple is the user is only allowed to visit
2 websites as opposed to 100.
Second, the prior probability of each website being visited
highly skews the success rate; if a website is easy to distinguish
from the others and it is very likely to be visited, then
the attacker's accuracy would be largely inflated.
The use of Mutual Information was suggested as an alternative metric~\cite{li2018measuring}.
However, Smith showed that this metric does not capture
the standard threat model used in WF, and it may
be misleading if we are ultimately interested in learning
about an attacker's success probability~\cite{smith2009foundations}.

$\beta$ was introduced for WF evaluation~\cite{cherubin2017website},
although without any theoretical justification.
In this work, we developed a theory for $\beta$,
and we showed that its minimizer,
the Bayes security metric, is particularly suited for WF:
i) it is prior independent; ii) it measures the risk
for the two leakiest secrets (i.e., the two websites that are
the easiest to tell apart);
iii) as shown in \autoref{img:ml-vs-beta-bound}, it is
captures particularly well the case of sparse prior
-- in WF, the prior over websites is highly sparse.
Overall, this suggests Bayes security is an appropriate choice
evaluating the user's risks against WF and, similarly,
the information leakage of WF defenses.
Future work may study if Bayes security implies bounds
w.r.t. other metrics of interest, such as True/False positives or
Precision and Recall (\autoref{sec:other-notions}).

\parait{PPML}
We suspect privacy preserving ML (PPML) algorithms can be easily
studied by using Bayes security.
Its strengths for this kind of analysis are:
i) it is easy to derive it analytically (e.g., as the total variation of
the posterior for the two leakiest secrets) (\autoref{sec:main-result});
ii) for a large secret space (e.g., data records in a dataset),
it characterizes the risk for the most vulnerable ones;
this, we argue, gives an easy interpretation of its guarantees;
iii) its prior independence helps studying mechanisms irrespective of the adversary's prior knowledge; and, once the attacker's prior
is known, it can be plugged in to better capture the risk
(\autoref{sec:other-notions});
iv) where an analytical study is not possible,
Bayes security can be easily estimated in a black-box manner (\autoref{sec:fast-minimization}).
Overall, we expect future work can provide Bayes security-style
guarantees for complex ML training pipelines.
For example, by exploiting our results on the Gaussian mechanism (\autoref{sec:case-study}),
it may be possible to study the security of DP-SGD
against common attacks such as membership inference~\cite{shokri2017membership},
attribute inference~\cite{fredrikson2015model},
and reconstruction~\cite{carlini2021extracting,balle2022reconstructing}.
This will enable bypassing bounds relating $\varepsilon$
and the advantage~\cite{yeom2018privacy,humphries2020differentially}, by computing the advantage (or Bayes security)
directly.
One immediate implication of \autoref{thm:minbeta} is
that evaluating membership inference attacks via the cryptographic
advantage (which, in this case, matches Bayes security),
gives guarantees for any prior probability that ``members''
may have.

\parait{Data release mechanisms}
Our analysis in \autoref{sec:case-study} suggests that, when
defending large datasets, Bayes security may help getting
better utility than DP in high-privacy regimes.

\parait{Fairness}
Bayes security captures the
risk for the most vulnerable pair of users (\autoref{thm:minbeta}).
We suspect this characteristic can be adapted
for evaluating privacy fairness (e.g., whether some population
subgroups enjoy better privacy than others).

\parabf{Further extensions}
In this paper, we discussed various extensions that
may further improve Bayes security's suitability to tackle
complex algorithms.
For example, proving a form of the \textit{miracle theorem}
(\autoref{sec:MultLeak}) would give analysts even further
flexibility when defining threat models for
real-world attacks.
Moreover, given the equivalence between Bayes security and
total variation (\autoref{thm:beta-star-as-diameter}),
it may be possible to exploit research on total variation
estimation to improve black-box leakage estimation
techniques.

In conclusion, Bayes security opens a new space in the security
metrics space, offering designers the opportunity to obtain
different trade-offs than previous metrics. As we showed in \autoref{sec:case-study},
these trade-offs enable the choice of security parameters that
provide strong protection and potentially with less
utility impact
under the threat model one chooses.

\section*{Acknowledgment}
The work of Catuscia Palamidessi has been funded by the European Research Council (ERC) grant Hypatia, grant agreement N. 835294.
We are grateful to Boris Köpf, Andrew Paverd, and Santiago Zanella-Beguelin for useful discussion.
We are especially thankful to Borja Balle and Lukas Wutschitz for proof-reading our manuscript, and for spotting the equivalence between Bayes security and a special case of approximate differential privacy.

\bibliography{biblio2-short}
\bibliographystyle{plain}

\appendix

\allowdisplaybreaks

\newenvironment{Reason}{\begin{tabbing}\hspace{2em}\= \hspace{1cm} \= \kill}
{\end{tabbing}\vspace{-1em}}
\newcommand\Step[2] {#1 \> $\begin{array}[t]{@{}llll}\displaystyle #2\end{array}$ \\}
\newcommand\StepR[3] {#1 \> $\begin{array}[t]{@{}llll}\displaystyle #3\end{array}$
\` {\RF \makebox[0pt][r]{\begin{tabular}[t]{r}``#2''\end{tabular}}} \\}
\newcommand\WideStepR[3] {#1 \>
$\begin{array}[t]{@{}ll}~\\\displaystyle #3\end{array}$ \`{\RF \makebox[0pt][r]{\begin{tabular}[t]{r}``#2''\end{tabular}}} \\}
\newcommand\Space {~ \\}
\newcommand\RF {\small}

\subsection{In general, $\beta$ is not minimized by the uniform prior}
\label{sec:bayes-inconsistency}

We start with the following lemma.

\begin{lemma}
	Suppose that $\beta(\priors, \channel) = 0$, for some system \system,
	and that \priors has $k$ non-zero components.
	Let $\priors' = (\nicefrac{1}{k}, ..., \nicefrac{1}{k}, 0, ..., 0)$, where the
	non-zero components are in correspondence of the non-zero components of $\priors$.
	Then we have $\beta(\priors', \channel) = 0$.
	\label{lemma:bayes-nonzero}
\end{lemma}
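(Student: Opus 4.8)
The plan is to reduce the statement to a claim about the Bayes risk and then observe that the condition on $\channel$ it imposes depends only on the \emph{support} of the prior. Since $\beta(\priors,\channel) = \bayesrisk(\priors,\channel)/\errorguesspriors(\priors)$ and the hypothesis presupposes that $\beta(\priors,\channel)$ is defined, we have $\errorguesspriors(\priors) > 0$, so $\priors$ is not a point mass and $k \ge 2$; in particular $\errorguesspriors(\priors') = 1 - \nicefrac{1}{k} > 0$, so $\beta(\priors',\channel)$ is defined as well. Under these conditions $\beta(\priors,\channel) = 0$ is equivalent to $\bayesrisk(\priors,\channel) = 0$, i.e. to $\sum_{\object}\max_{\secret}\channel_{\secret,\object}\prior{\secret} = 1$, and it suffices to establish the same identity for $\priors'$.

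First I would extract the combinatorial content of $\bayesrisk(\priors,\channel) = 0$. Since each row of $\channel$ sums to $1$, we have $\sum_{\object}\sum_{\secret}\channel_{\secret,\object}\prior{\secret} = \sum_{\secret}\prior{\secret} = 1$, while $\bayesrisk(\priors,\channel) = 0$ gives $\sum_{\object}\max_{\secret}\channel_{\secret,\object}\prior{\secret} = 1$. Both sides being equal, and the termwise inequality $\max_{\secret}\channel_{\secret,\object}\prior{\secret} \le \sum_{\secret}\channel_{\secret,\object}\prior{\secret}$ holding for every column $\object$, we are forced into equality column by column: $\max_{\secret}\channel_{\secret,\object}\prior{\secret} = \sum_{\secret}\channel_{\secret,\object}\prior{\secret}$ for all $\object$. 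A sum of nonnegative numbers equal to its own largest summand has at most one nonzero summand; since $\prior{\secret} > 0$ exactly on the support $T$ of $\priors$, this says that for every output $\object$ at most one secret $\secret \in T$ has $\channel_{\secret,\object} > 0$.

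The key observation is that this condition mentions only $T$, not the numerical values $\priors$ takes on $T$ — and $\priors'$ has the same support $T$. So I would run the computation backwards for $\priors'$. Since $\priors'$ assigns $\nicefrac{1}{k}$ to each secret in $T$ and $0$ elsewhere, for each $\object$ we get $\max_{\secret}\channel_{\secret,\object}\,\pi'_{\secret} = \nicefrac{1}{k}\,\max_{\secret \in T}\channel_{\secret,\object} = \nicefrac{1}{k}\sum_{\secret \in T}\channel_{\secret,\object}$, where the last equality uses that at most one of these summands is nonzero. Summing over $\object$ and exchanging the order of summation, $\sum_{\object}\max_{\secret}\channel_{\secret,\object}\,\pi'_{\secret} = \nicefrac{1}{k}\sum_{\secret \in T}\sum_{\object}\channel_{\secret,\object} = \nicefrac{1}{k}\cdot k = 1$, again using that each row of $\channel$ sums to $1$. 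Hence $\bayesrisk(\priors',\channel) = 0$, and therefore $\beta(\priors',\channel) = 0$.

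The only genuinely delicate point is the column-by-column argument in the second paragraph: one must note that equality of the two sums together with the termwise inequality forces equality in each column, and then that a nonnegative sum equal to its largest term has a unique nonzero term. Everything else — the use of row-stochasticity, the interchange of summations, and the remark that $k \ge 2$ — is routine. Conceptually the lemma is just the observation that ``$\beta(\priors,\channel) = 0$'' is equivalent to the support-only property ``every column of $\channel$ has at most one positive entry among the secrets in $T$'', a property manifestly inherited when the support is preserved; no fixed-point, optimization, or geometric argument is needed.
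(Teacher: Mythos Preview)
Your proof is correct but follows a genuinely different route from the paper's. The paper argues geometrically: it writes $\priors$ as a convex combination $\priors = c\,\priors' + c_1\pi_1 + \dots + c_h\pi_h$ of $\priors'$ and some vertices (point masses) of the simplex on the support, with $c>0$ because $\priors$ is interior; then concavity of $\bayesrisk$ together with $\bayesrisk(\pi_j,\channel)=0$ at every vertex forces $0 = \bayesrisk(\priors,\channel) \ge c\,\bayesrisk(\priors',\channel)$, hence $\bayesrisk(\priors',\channel)=0$. You instead unpack $\bayesrisk(\priors,\channel)=0$ into the explicit structural condition ``each column of $\channel$ has at most one positive entry among the secrets in the support'', note this is a support-only property, and verify it directly yields $\bayesrisk(\priors',\channel)=0$. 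Your argument is more elementary and more informative about \emph{why} $\beta=0$ propagates (it exposes the channel structure that characterizes $\beta=0$); the paper's argument is shorter and more abstract, and would transfer unchanged to any concave posterior-risk function vanishing at point-mass priors, not just the Bayes risk.
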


\begin{proof}
		Consider the k-dimensional simplex $Simp$ determined by the $k$ non-zero components of
		$\pi$.
		Since $\pi'$ has at most the same $k$ non-zero components, it is an element of $Simp$.
		Consider imaginary lines from $\pi'$ to each of the vertices of $Simp$.
		A vertex
		of $Simp$ is a vector of the form $(0,..., 0, 1, 0, ... , 0)$, i.e., one component is
		$1$ and all the others are $0$.
		Furthermore, the $1$ must be in correspondence of a non-zero component of $\pi$.
		These lines determine a partition of $Simp$ in convex subspaces, and $\pi$ must
		belong to one of them. Hence $\pi$ can be expressed as a convex combination of
		$\pi'$ and some vertices of $Simp$, say $\pi_1$, ..., $\pi_h$.
		Namely, $\pi = c \pi' + c_1 \pi_1 + ... + c_h \pi_h$ for suitable convex
		coefficients $c, c_1, ... , c_h$.
		Furthermore, since $\pi$ has $k$ non-zero components, it is an internal point of $Simp$,
		and therefore $c$ must be non-zero. Hence, we have:
		\begin{align*}
		0 &= \bayesrisk(\pi, \channel)\\
		  &= \bayesrisk(c \pi' + c_1 \pi_1 + ... + c_h p_h, \channel)\\
		  &\geq c \bayesrisk(\pi', \channel) + c_1\bayesrisk(\pi_1, \channel) + ... + c_h\bayesrisk(\pi_h, \channel)\\
		  &= c \bayesrisk(\pi', \channel) \,,
		\end{align*}
		where the third step comes from the concavity of $\bayesrisk$, and
		the last one is because $\bayesrisk(\pi_j, \channel) = 0, \quad \forall j$, since
		$\pi_j$ is a vertex.
		Therefore, since $c$ is not $0$, $\bayesrisk(\pi', \channel)$ must be $0$.
\end{proof}

We can now prove our result.

\begin{theorem}
	Let $\nsecrets = |\secretspace|$, and
	let $\unipriors$ denote the uniform prior on $\secretspace$.
	For any prior \priors with $k$ non-zero components, if
	$\beta(\priors, \channel) = 0$ then
	$$\beta(\unipriors, \channel) \leq \frac{1-\nicefrac{k}{\nsecrets}}{1-\nicefrac{1}{\nsecrets}} \,.$$
	Moreover, there exists a channel $\channel$ for which
	equality is reached.

\label{thm:bayes-inconsistency}
\end{theorem}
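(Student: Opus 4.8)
The plan is to reduce everything to the two-row sub-channel spanned by the $k$ secrets on which $\beta$ vanishes, and then to simply bound $\beta(\unipriors,\channel)$ from above by evaluating it in terms of the Bayes risk $\bayesrisk(\unipriors,\channel)$ and the random guessing error $\errorguesspriors(\unipriors)=1-\nicefrac{1}{\nsecrets}$. First I would invoke Lemma~\ref{lemma:bayes-nonzero}: since $\beta(\priors,\channel)=0$ for some $\priors$ with $k$ non-zero components, the uniform-on-those-$k$-secrets prior $\priors'$ also satisfies $\bayesrisk(\priors',\channel)=0$. Operationally this means that, restricted to those $k$ secrets, the Bayes adversary makes no error: for (almost) every output $o$, exactly one of the $k$ secrets is compatible with $o$ in the sense that it is the unique maximizer of $\channel_{s,o}$ with positive weight; more precisely, for each $o$ we have $\max_{s}\channel_{s,o}\prior{s}' = \sum_s \channel_{s,o}\prior{s}'$, i.e.\ for each $o$ at most one of those $k$ rows is nonzero at $o$. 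So the supports of the $k$ rows $\{\channel_s : \priors_s > 0\}$ are pairwise disjoint (up to measure zero).

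Next I would use this disjointness to lower-bound $\bayesrisk(\unipriors,\channel)$'s complement, i.e.\ upper-bound $\sum_o\max_s \channel_{s,o}\tfrac1\nsecrets$. Write $\bayesrisk(\unipriors,\channel)=1-\tfrac1\nsecrets\sum_o\max_s\channel_{s,o}$. Split the secrets into the $k$ ``disjoint-support'' ones and the remaining $\nsecrets-k$. For an output $o$, $\max_s\channel_{s,o}$ over the $k$ disjoint-support rows is just the single nonzero value at $o$ (or $0$), so summing over $o$ gives at most... well, each of those $k$ rows sums to $1$ and their support regions are disjoint, so $\sum_o \max_{s\in\text{those }k}\channel_{s,o}$ could be as large as $k$ — that's not yet enough. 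The key extra observation is that I want $\beta(\unipriors,\channel)$ \emph{small}, so I actually want a channel achieving equality, and for the inequality direction I should argue more carefully. The cleaner route: on the uniform prior, $\bayesrisk(\unipriors,\channel)\ge \tfrac{k}{\nsecrets}\,\bayesrisk(\priors',\channel')\cdot(\text{something})$ — no; instead use that $\bayesrisk$ is concave and $\unipriors$ is a convex combination $\tfrac{k}{\nsecrets}\priors' + \tfrac{\nsecrets-k}{\nsecrets}\priors''$, where $\priors''$ is uniform on the complementary $\nsecrets-k$ secrets. Concavity gives $\bayesrisk(\unipriors,\channel)\ge \tfrac{k}{\nsecrets}\bayesrisk(\priors',\channel)+\tfrac{\nsecrets-k}{\nsecrets}\bayesrisk(\priors'',\channel) = \tfrac{\nsecrets-k}{\nsecrets}\bayesrisk(\priors'',\channel)$. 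That is a lower bound on $\bayesrisk(\unipriors,\channel)$, hence a lower bound on $\beta(\unipriors,\channel)$ — the wrong direction. So concavity alone is insufficient; I need to bound $\bayesrisk(\unipriors,\channel)$ from \emph{above}.

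For the upper bound on $\bayesrisk(\unipriors,\channel)$ I would argue directly: $\bayesrisk(\unipriors,\channel)=1-\tfrac1\nsecrets\sum_o\max_s\channel_{s,o}$, and since $\max_s\channel_{s,o}\ge \max_{s\in S'}\channel_{s,o}$ where $S'$ is the set of $k$ disjoint-support secrets, and since those $k$ rows have disjoint supports each summing to $1$, we get $\sum_o\max_s\channel_{s,o}\ge\sum_o\max_{s\in S'}\channel_{s,o}=\sum_{s\in S'}\sum_o\channel_{s,o}=k$. Hence $\bayesrisk(\unipriors,\channel)\le 1-\tfrac{k}{\nsecrets}$, and therefore
\[
\beta(\unipriors,\channel)=\frac{\bayesrisk(\unipriors,\channel)}{\errorguesspriors(\unipriors)}\le\frac{1-\nicefrac{k}{\nsecrets}}{1-\nicefrac{1}{\nsecrets}},
\]
which is the claimed bound. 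For tightness, I would exhibit a channel in which the $k$ secrets each deterministically map to a distinct dedicated output (making their contribution to $\sum_o\max_s\channel_{s,o}$ exactly $k$, so $\beta(\priors',\channel)=0$), while the remaining $\nsecrets-k$ secrets are mapped so that they contribute nothing extra to $\sum_o\max_s\channel_{s,o}$ beyond what the first $k$ already contribute — e.g.\ route all $\nsecrets-k$ remaining secrets deterministically to a single one of the first $k$ dedicated outputs, or spread them so their mass at each output never exceeds the existing max; then $\sum_o\max_s\channel_{s,o}=k$ exactly and equality holds. I expect the main obstacle to be getting the inequality direction right — the naive use of concavity of $\bayesrisk$ points the wrong way, so the proof has to extract from Lemma~\ref{lemma:bayes-nonzero} the precise combinatorial fact (disjoint supports of the $k$ rows) and use it to lower-bound $\sum_o\max_s\channel_{s,o}$ by $k$; checking that the remaining $\nsecrets-k$ rows cannot be arranged to \emph{force} this sum below $k$ (they can't, since those $k$ disjoint rows already contribute $k$ regardless) is the crux, and the equality construction must be given explicitly to confirm tightness.
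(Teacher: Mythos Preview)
Your proposal is correct and follows essentially the same route as the paper: both arguments invoke Lemma~\ref{lemma:bayes-nonzero} to pass to the uniform prior $\priors'$ on the $k$ secrets, deduce $\sum_o\max_{s\in S'}\channel_{s,o}=k$, and then use $\max_s\channel_{s,o}\ge\max_{s\in S'}\channel_{s,o}$ to bound $\bayesrisk(\unipriors,\channel)\le 1-\nicefrac{k}{\nsecrets}$; the tightness construction (duplicate the $S'$-rows for the remaining secrets) is also the same in spirit. Your detour through the disjoint-supports characterization is true and gives extra geometric insight, but it is not needed---the paper obtains $\sum_o\max_{s\in S'}\channel_{s,o}=k$ directly from $\bayesrisk(\priors',\channel)=0$ via the defining formula for the Bayes risk, without ever unpacking what this says about the supports.
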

\begin{proof}
	Let $m = |\objectspace|$, and let $\secretspace'$
	be the set of the non-zero components of $\pi$.
	It is sufficient to note that:
	\begin{align*}
	&\sum_{o \in \objectspace} \channel_{s,o}\unipriors(s)\\
	&= \nicefrac{1}{\nsecrets}\sum_{o \in \objectspace} \max_{s \in S} \channel_{s,o}\\
	&\geq \nicefrac{1}{\nsecrets} \left(\channel_{s_1,o_1} + ... + \channel_{s_m,o_m}\right)
	\quad \text{where $s_i = \arg\max_{s \in S'} \channel_{s,o_i}$}\\
	&= \nicefrac{k}{\nsecrets} \,;
	\end{align*}
	the last equality is due to the fact that, by definition of $s_i$,
	$$\sum_o \max_{s\in S'} \channel_{so} =  (\channel_{s_1,o_1} + ... + \channel_{s_m o_m} ) \,.$$
	Therefore,  for $\priors'$ defined as in Lemma~\ref{lemma:bayes-nonzero},
	$\beta(\priors',C) = 0$ implies $\nicefrac{1}{k} (\channel_{s_1,o_1} + ... + \channel_{s_m o_m}) = 1$,
	from which we derive $(\channel_{s1 o1} + ... + \channel_{s_m o_m} ) = k$.
	This proves the first statement.

	The second claim of the theorem states the existence of a channel
	$\channel'$ for which equality is reached.
	We define $\channel'$ so that it coincides with \channel in the rows
	corresponding to the non-zero components of \priors.
	Define all the other rows identical to the previous ones (it does not matter which ones
	are chosen). Then:
	\begin{align*}
	\sum_o \max_{s \in \secretspace} \channel'_{s,o}
	\Wide=\sum_{o \in \objectspace} \max_{s \in \secretspace'} \channel'_{s,o}
	\Wide= \sum_{o \in \objectspace} \max_{s \in \secretspace'} \channel_{s,o}
	\Wide= k \,,
	\end{align*}
	therefore proving the second part of the theorem.
\end{proof}

\subsection{Proof of \autoref{thm:minbeta}}\label{sec:proofminbeta}
Let $\distsetcorner^{(k)}$, for $k=1, ..., n$,
be the set of priors with exactly $k$ non-zero components,
and such that the distribution on those components is uniform.
In the following we indicate with $\distsetcorner$ the
set $\distsetcorner = \distsetcorner^{(1)} \cup \distsetcorner^{(2)} \cup ... \cup \distsetcorner^{(n)}$.

We start by recalling the following definition from \cite{chatzikokolakis2008bayes}
(Definition 3.2, simplified).

\begin{definition}
	Let $S$ be a subset of a vector space, let $g: S \mapsto \mathcal{R}$,
	and let $S' \subseteq S$.
	We say that $g$ is \textit{convexly generated} by $S'$ if for all $v \in S$
	there exists $S'' \subseteq S'$ such that there exists a set of convex
	coefficients $\{c_u\}_{u \in S''}$ (i.e., satisfying
	$\sum_u c_u = 1$ and $c_u \geq 0$ $\forall u \in S''$)
	such that:
	\begin{align*}
	\text{1) }& v = \sum_{u \in S''} c_u u ~,
	&
	\text{2) }& g(v) = \sum_{u \in S''} c_u g(u) \,.
	\end{align*}
\end{definition}

The following results were also proven in the same reference (Proposition 3.9 in \cite{chatzikokolakis2008bayes}).

\begin{proposition}
	\label{prop:g-convexly-generated}
	$\errorguesspriors$ is convexly generated by $\distsetcorner$.
\end{proposition}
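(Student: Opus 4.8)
The plan is to exhibit an explicit \emph{layer-cake} (staircase) decomposition of an arbitrary prior $\priors\in\distset{\secretspace}$ into a convex combination of corner points, and then to observe that the value of $\errorguesspriors$ splits along exactly the same combination. Since relabelling secrets leaves both $\distset{\secretspace}$ and $\distsetcorner$ invariant, I would first assume $\secretspace=\{1,\dots,\nsecrets\}$ with $\prior{1}\ge\prior{2}\ge\cdots\ge\prior{\nsecrets}\ge 0$, and adopt the convention $\prior{\nsecrets+1}=0$.

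For $k=1,\dots,\nsecrets$, let $u^{(k)}\in\distsetcorner^{(k)}$ be the corner point that is uniform on $\{1,\dots,k\}$ (so $u^{(k)}_s=\nicefrac{1}{k}$ for $s\le k$ and $0$ otherwise), and set $c_k \eqdef k\,(\prior{k}-\prior{k+1})$. Monotonicity of $\priors$ gives $c_k\ge 0$, and after discarding the zero-weight terms the surviving $u^{(k)}$ form a sub-collection of $\{u^{(1)},\dots,u^{(\nsecrets)}\}\subseteq\distsetcorner$ (this also absorbs ties and zero components of $\priors$, and is why one needs all the levels $\distsetcorner^{(1)},\dots,\distsetcorner^{(\nsecrets)}$ rather than a single one). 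The two defining conditions of ``convexly generated'' then reduce to telescoping identities. For condition (1), for each coordinate $s$ one has $\big(\sum_{k} c_k u^{(k)}\big)_s=\sum_{k\ge s} c_k/k=\sum_{k\ge s}(\prior{k}-\prior{k+1})=\prior{s}$, so $\priors=\sum_k c_k u^{(k)}$; and the coefficients are convex since $\sum_k c_k=\sum_k k(\prior{k}-\prior{k+1})=\sum_k\prior{k}=1$ by Abel summation.

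For condition (2), because we sorted in decreasing order we have the exact identities $\errorguesspriors(\priors)=1-\prior{1}$ and $\errorguesspriors(u^{(k)})=1-\nicefrac{1}{k}$; hence $\sum_k c_k\,\errorguesspriors(u^{(k)})=\sum_k c_k-\sum_k c_k/k=1-\sum_k(\prior{k}-\prior{k+1})=1-\prior{1}=\errorguesspriors(\priors)$, again by telescoping. This finishes the argument. I do not expect a genuine obstacle: the only points requiring care are keeping the relabelling fixed throughout, so that the $\max$ defining $\errorguesspriors$ is always attained at coordinate $1$ (which is what turns condition (2) from an inequality into the equality displayed above), and dropping the zero-weight corner points so that the index set is honestly a subset of $\distsetcorner$.
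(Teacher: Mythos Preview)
Your argument is correct: the layer-cake decomposition with weights $c_k=k(\prior{k}-\prior{k+1})$ on the nested corner points $u^{(k)}$ is exactly the right construction, and both telescoping computations go through as you wrote them. Note, however, that the paper does not actually supply a proof of this proposition; it merely cites it as Proposition~3.9 of \cite{chatzikokolakis2008bayes}. Your explicit staircase decomposition is the standard way to establish the result, so there is nothing to compare against in the present paper and no gap to flag.
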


\begin{proposition}
	\label{prop:bayes-concave}
	$\bayesrisk(\pi,\channel)$ is concave on $\pi$.
\end{proposition}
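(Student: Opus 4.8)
The plan is to work directly from the closed form of the Bayes risk given in Section~\ref{sec:preliminaries}, namely
\[
\bayesrisk(\priors, \channel) = 1 - \sum_{\object \in \objectspace} \max_{\secret \in \secretspace} \channel_{\secret, \object}\, \prior{\secret},
\]
and to show that the subtracted sum is a \emph{convex} function of $\priors$ over the simplex $\distset{\secretspace}$. Concavity of $\bayesrisk$ then follows immediately, since a constant minus a convex function is concave. The whole argument reduces to recognizing the familiar ``maximum of affine functions'' structure hidden inside the formula.

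First I would fix an output $\object \in \objectspace$ and examine the inner term $\priors \mapsto \max_{\secret \in \secretspace} \channel_{\secret, \object}\, \prior{\secret}$. For each fixed secret $\secret$, the map $\priors \mapsto \channel_{\secret, \object}\, \prior{\secret}$ is linear in the vector $\priors$: the coefficient $\channel_{\secret, \object}$ is a nonnegative constant, and only the coordinate $\prior{\secret}$ appears. Since the pointwise maximum of finitely many linear (hence convex) functions is convex, the map $\priors \mapsto \max_{\secret} \channel_{\secret, \object}\, \prior{\secret}$ is convex on $\distset{\secretspace}$.

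Next I would sum over $\object \in \objectspace$. A finite sum of convex functions is convex, so $\priors \mapsto \sum_{\object} \max_{\secret} \channel_{\secret, \object}\, \prior{\secret}$ is convex. Subtracting this convex function from the constant $1$ yields a concave function, which is precisely $\bayesrisk(\priors, \channel)$, completing the argument.

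I do not expect a genuine obstacle here: the only points requiring care are that the domain $\distset{\secretspace}$ is convex (so that convexity and concavity are well posed) and that each $\channel_{\secret, \object}$ is held constant while $\priors$ varies. As an alternative that makes the result conceptually transparent, I could instead observe that for any fixed deterministic decision rule $g : \objectspace \to \secretspace$ the induced error probability is \emph{affine} in $\priors$, and that $\bayesrisk(\priors, \channel)$ is the pointwise minimum of these affine functions over all $g$; an infimum of affine functions is always concave, which reproves the claim without invoking the explicit maximum.
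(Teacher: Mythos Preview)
Your argument is correct. The paper does not actually give its own proof of \autoref{prop:bayes-concave}; it simply imports the statement from \cite{chatzikokolakis2008bayes} (Proposition~3.9) without reproducing the reasoning. Your derivation---writing $\bayesrisk(\priors,\channel) = 1 - \sum_{\object} \max_{\secret} \channel_{\secret,\object}\,\prior{\secret}$, observing that each inner maximum is a pointwise max of linear functions of $\priors$ and hence convex, summing to preserve convexity, and subtracting from a constant---is the standard and complete way to establish the result, and your alternative viewpoint (infimum over deterministic decision rules of affine error functionals) is equally valid. Nothing is missing.
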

The elements of $\distsetcorner$ are called \textit{corner points}
of $\errorguesspriors$, and the elements of each set
$\distsetcorner^{(k)}$ are the \textit{corner points of order $k$}.

We now prove that if a function is defined as the ratio
of a concave function and a convexly generated one, then
its minimum is attained on one of the
corner points of the function in the denominator.
This will be important to characterize the minimum of the
Bayes security metric, which is indeed defined as the ratio
of the Bayes risk  and the guessing error.

\begin{lemma}
	\label{lemma:min-of-ratio}
	Let $S$ be a subset of a vector space.
	Let $f: S \mapsto \mathcal{R}_{\geq 0}$ be a concave function,
	and let $g: S \mapsto \mathcal{R}_{\geq 0}$ be a function that
	is convexly generated by a finite $S' \subseteq S$,
	and which is positive in at least some of the elements of $S$.
	Then there exists $u \in S'$ such that
	$u = \argmin_{v: g(v)>0} \nicefrac{f(v)}{g(v)}$.
\end{lemma}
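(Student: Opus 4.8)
The plan is to show that the minimum of the ratio $f/g$ over the set $\{v : g(v) > 0\}$ is attained at some corner point $u \in S'$. First I would observe that since $g$ is convexly generated by the \emph{finite} set $S'$, and $g$ is positive somewhere, we may write, for any $v$ with $g(v) > 0$, a representation $v = \sum_{u \in S''} c_u u$ with $S'' \subseteq S'$ and convex coefficients $c_u$, such that $g(v) = \sum_{u \in S''} c_u g(u)$. The key idea is to exploit concavity of $f$ to get $f(v) \ge \sum_{u \in S''} c_u f(u)$, and then apply a ``mediant inequality'' argument to the two sums: the ratio of sums $\frac{\sum_u c_u f(u)}{\sum_u c_u g(u)}$ is bounded below by $\min_u \frac{f(u)}{g(u)}$ whenever all denominators $g(u)$ are nonnegative and the total denominator is positive.

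The key steps, in order, are: (1) Fix a minimizing sequence or, since $S'$ is finite and $f/g$ is bounded below by $0$, argue that the infimum $\rho := \inf_{v: g(v)>0} f(v)/g(v)$ is well-defined and nonnegative. (2) For an arbitrary $v$ with $g(v) > 0$, take the convex-generation representation $v = \sum_{u \in S''} c_u u$, $g(v) = \sum_{u \in S''} c_u g(u)$. Discard from $S''$ any $u$ with $c_u = 0$, so all coefficients are strictly positive; since $\sum c_u g(u) = g(v) > 0$, at least one $u$ has $g(u) > 0$. (3) Using concavity, $f(v) \ge \sum_{u} c_u f(u) \ge \sum_{u} c_u f(u)$ where I restrict attention to those $u$ with $g(u) > 0$ (the terms with $g(u) = 0$ contribute a nonnegative amount to $f(v)$ since $f \ge 0$ and $c_u \ge 0$, so dropping them only decreases the bound; here one uses $f \ge 0$ crucially). (4) Apply the elementary fact: for positive reals $c_u$, nonnegative $f(u)$, and positive $g(u)$,
\[
\frac{\sum_u c_u f(u)}{\sum_u c_u g(u)} \ \ge\ \min_u \frac{f(u)}{g(u)}\,,
\]
which follows since $\sum_u c_u f(u) \ge \bigl(\min_u \tfrac{f(u)}{g(u)}\bigr) \sum_u c_u g(u)$. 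Combining (3) and (4) gives $\frac{f(v)}{g(v)} \ge \min_{u \in S'', g(u) > 0} \frac{f(u)}{g(u)} \ge \min_{u \in S', g(u) > 0} \frac{f(u)}{g(u)}$. (5) Since this holds for every $v$ with $g(v) > 0$, the infimum $\rho$ equals $\min_{u \in S', g(u)>0} f(u)/g(u)$, and because $S'$ is finite this minimum is attained at some $u \in S'$; that $u$ is the desired minimizer.

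The main obstacle I anticipate is handling the terms where $g(u) = 0$ carefully: the convex-generation representation of $v$ may legitimately involve corner points $u$ on which $g$ vanishes (and on which $f$ need not vanish), so the ratio $f(u)/g(u)$ is undefined there. The resolution is precisely that $f \ge 0$: those terms only help the inequality $f(v) \ge \sum c_u f(u)$, so we may safely restrict the mediant bound to the sub-collection of $u$ with $g(u) > 0$, which is guaranteed nonempty by $g(v) > 0$. A secondary subtlety is that concavity of $f$ is invoked for a convex combination involving possibly more than two points; this is fine since $f$ concave on a convex domain gives $f(\sum c_u u) \ge \sum c_u f(u)$ for any finite convex combination, provided $S$ (or at least the convex hull of the relevant corner points) lies in the domain of $f$ — which holds in our application since $\errorguesspriors$ and $\bayesrisk$ are defined on the whole simplex $\distset{\secretspace}$.
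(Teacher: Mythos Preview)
Your proof is correct and follows essentially the same approach as the paper's: both use the convex-generation representation $g(v)=\sum_{u}c_u g(u)$ together with concavity of $f$ to compare $f(v)/g(v)$ against the ratios at corner points. The paper wraps this in a contradiction argument while you proceed directly via the mediant inequality; your explicit handling of the corner points with $g(u)=0$ (dropping them using $f\ge 0$) is in fact more careful than the paper's, which simply asserts that $S''$ can be taken with all $g(u)>0$.
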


\begin{proof}
	Assume by contradiction that $\exists v \in S$ such that
	$g(v) > 0$ and $\forall u \in \distsetcorner$ with $g(u) > 0$
	\begin{equation}\label{eq:frac}
	\frac{f(v)}{g(v)} < \frac{f(u)}{g(u)} \,.
	\end{equation}

	Since $g$ is convexly generated by $S'$, and $g(v) > 0$,
	$\exists S'' \subseteq S'$ such that
	$g(v) = \sum_{u \in S''} c_u g(u)$, where $c_u$ are suitable
	convex coefficients, and $\forall u \in S''$
	$g(u) > 0$.
	Therefore:
	\[
	\begin{array}{rcll}
	\frac{f(v)}{g(v)} &=& \frac{f(v)}{\sum_{u \in S''} c_u g(u)}\\[2ex]
	&\geq &\frac{\sum_{u \in S''} c_u f(u)}{\sum_{u \in S''} c_u g(u)}\quad &\mbox{(by concavity of $f$)}\\[2ex]
	&> &\frac{\sum_{u \in S''} c_u g(u)\frac{f(v)}{g(v)}}{\sum_{u \in S''} c_u g(u)}  &\mbox{(by  \autoref{eq:frac})}\\[2ex]
	&= &\frac{f(v)}{g(v)}
	\end{array}
	\]
	which is impossible.
	Furthermore, $S'$ is finite, hence $\Big\{\frac{f(u)}{g(u)} \mid u \in S', g(v)>0\Big\}$ has a minimum.
\end{proof}

\begin{corollary}\label{cor:corner}
	The minimum of
	$\{\beta(\priors, \channel) | \priors \in \distset{\secretspace}, \errorguesspriors(\priors) > 0\}$
	exists, and it is in one of the corner points of $\errorguesspriors$.
\end{corollary}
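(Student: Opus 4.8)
The plan is to derive this corollary as a direct instantiation of Lemma~\ref{lemma:min-of-ratio}. I would set $S = \distset{\secretspace}$, take $f(\priors) = \bayesrisk(\priors,\channel)$ as the concave numerator, $g(\priors) = \errorguesspriors(\priors)$ as the denominator, and $S' = \distsetcorner$ as the generating set. The whole argument then reduces to checking that the hypotheses of Lemma~\ref{lemma:min-of-ratio} hold for this choice, and reading off its conclusion.

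First I would verify the properties of $f$: $\bayesrisk(\priors,\channel)$ is an error probability, hence non-negative, and it is concave in $\priors$ by Proposition~\ref{prop:bayes-concave}. Next, the properties of $g$: $\errorguesspriors(\priors) = 1 - \max_s \pi_s$ is again an error probability, hence non-negative; it is convexly generated by $\distsetcorner$ by Proposition~\ref{prop:g-convexly-generated}; and $\distsetcorner = \distsetcorner^{(1)} \cup \dots \cup \distsetcorner^{(n)}$ is finite, since each $\distsetcorner^{(k)}$ is. Finally, $g$ must be strictly positive somewhere on $S$: taking the uniform prior $\unipriors$ gives $\errorguesspriors(\unipriors) = 1 - \nicefrac{1}{n} > 0$, using $n = |\secretspace| \geq 2$ (an assumption inherited from Theorem~\ref{thm:minbeta}). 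With all of this in place, Lemma~\ref{lemma:min-of-ratio} produces some $u \in \distsetcorner$ with $u = \argmin_{\priors\,:\,\errorguesspriors(\priors)>0} \nicefrac{\bayesrisk(\priors,\channel)}{\errorguesspriors(\priors)}$, which is precisely $\argmin_{\priors\,:\,\errorguesspriors(\priors)>0}\beta(\priors,\channel)$; hence the minimum exists and is attained at a corner point of $\errorguesspriors$, as claimed.

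The honest assessment is that there is essentially no obstacle here: this is a one-line corollary once Lemma~\ref{lemma:min-of-ratio} and Propositions~\ref{prop:g-convexly-generated}--\ref{prop:bayes-concave} are available. The only point deserving a moment's attention is the non-degeneracy condition ``$g$ is positive on some element of $S$'', i.e.\ that the set $\{\priors : \errorguesspriors(\priors) > 0\}$ over which we minimize is non-empty; this is exactly why the ambient hypothesis $|\secretspace| \geq 2$ is needed, and it is the thing I would state explicitly rather than leave implicit. All the genuine content — existence of a minimum over a finite family of ratios, and the fact that corner points suffice — has already been discharged inside Lemma~\ref{lemma:min-of-ratio}.
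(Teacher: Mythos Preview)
Your proposal is correct and follows exactly the paper's approach: the paper's proof is a one-liner that simply cites the definition of $\beta$, Propositions~\ref{prop:g-convexly-generated} and~\ref{prop:bayes-concave}, and Lemma~\ref{lemma:min-of-ratio}. Your version is more explicit in verifying each hypothesis (in particular the non-degeneracy condition via $|\secretspace|\ge 2$), but the underlying argument is identical.
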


\begin{proof}
	The statement follows from the definition  $\beta(\priors, \channel) = \frac{\bayesrisk(\priors, \channel)}{\errorguesspriors(\priors)}$,
	and from \autoref{prop:g-convexly-generated}, \autoref{prop:bayes-concave},
	and
	\autoref{lemma:min-of-ratio}.
\end{proof}

Furthermore, note that because $\errorguesspriors$ in its
corner points of order 1 takes value $0$, the corner point of
$\errorguesspriors$ on which $\beta$ is minimized
must have order $k \geq 2$.

We now can prove \autoref{thm:minbeta}.
It remains to show that the corner points
on which $\beta$ is minimized have order $k=2$.

\ThmMinBeta*
\begin{proof}
	We show this result by induction over $n$, the cardinality of $\secretspace$,
	where we assume $n \geq 2$.\\

	\textit{Base case ($n=2$).} Since $\forall \priors \in \distsetcorner^{(1)}$
	the guessing error is $\errorguesspriors(\priors ) = 0$,
	the minimizer has to have order $2$.\\

	\textit{Inductive step.} Assuming we proved the result for $n$,
	we  prove it for $n+1$.
	By \autoref{cor:corner}, it is sufficient to show that:
	\begin{equation}\label{eq:reduction}
	\forall \priors \in \distsetcorner^{(n+1)} \; \exists \priors' \in \distsetcorner^{(n)}
	\;
	\;
	 \frac{\bayesrisk(\priors, \channel)}{\errorguesspriors(\priors)} \geq
	\frac{\bayesrisk(\priors', \channel)}{\errorguesspriors(\priors')} \,.
	\end{equation}

	Consider the $(n+1)\times m$ channel matrix $\channel$.
	For each row $i$, we define $p_i$ as the sum of elements
	of the row which are the maximum in their column. (Ties are broken arbitrarily.)
	I.e.,
	$$p_i \; \stackrel{\rm def}{=}\; \sum_{o} \channel_{i,o}\, I(\channel_{i,o} = \max_{s} \channel_{s,o}) \,.$$
	where $I(S)$ is the indicator function, i.e., the function that gives $1$ if the statement $S$ is true, and $0$ otherwise.

	Similarly, we define $q_i$ as the sum of elements
	which are the second maximum in the columns that have maximum in column $i$.
	More precisely,
	let ${\rm smax}(A)$ be the function returning the second maximum
	in a set $A$; for instance, if $a_1 \geq a_2 \geq a_3 \geq \ldots$,
	then ${\rm smax}(\{a_i\}) = a_2$.
	Again, ties are broken arbitrarily.
	Then:  $$q_i \;\stackrel{\rm def}{=} \; \sum_{o} \channel_{j,o}\, I(\channel_{i,o} = \max_{s} \channel_{s,o}  \; \mbox{and} \; \channel_{j,o} = {\rm smax}_{s} \channel_{s,o} ) \,.$$
	Note that the elements that compose $q_i $ are in rows different from $i$ and possibly different from each other.

	Without loss of generality, assume that we have:
	\begin{equation}
	\label{eq:main-proof-assumption}
	p_{n+1}-q_{n+1} \; = \; \min_i (p_i - q_i )\,.
	\end{equation}
	We further denote by $r_o$, for $o=1, \ldots, k$,
	the elements of the $(n+1)$-th row that are not the components of $p_{n+1}$, namely
	 \[
	\{  r_o \,|\, o = 1,\ldots , k\} \; \stackrel{\rm def}{=} \; \{ \channel_{n+1,o} \,|\, \channel_{n+1,o}  \neq  \max_{s} \channel_{s,o}\}
	\]
	The following observation is immediate:
	\begin{fact}
	\label{fact:main-proof-ineq}
	For all $i \in \{1, ..., n+1\}$, we have $q_i \geq r_i$.
	\end{fact}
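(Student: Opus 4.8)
The plan is to read the Fact straight off the definitions of $q_i$ and of the $r_o$'s; the only genuine work is keeping the column/row bookkeeping consistent with the (arbitrary) tie-breaking.

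First I would put $q_i$ in a transparent form. By its definition, a column $o$ contributes to $q_i$ exactly when row $i$ holds the maximum of column $o$ (after tie-breaking), and then its contribution is the \emph{second} largest value of that column; so $q_i = \sum_{o\,:\,i\text{ wins }o}{\rm smax}_{s}\channel_{s,o}$, a sum of nonnegative terms. Dually, $\{r_o\}$ is exactly the collection of entries $\channel_{n+1,o}$ lying in the columns \emph{not} won by row $n+1$, and therefore $\sum_o r_o = 1 - p_{n+1}$ because row $n+1$ sums to $1$.

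The key --- essentially only --- inequality is the following. Fix a column $o$ carrying one of the $r_o$'s, and let $j=j(o)\in\{1,\dots,n\}$ be the row that wins column $o$ (it exists and is $\neq n+1$, since $n+1$ does not win $o$). Since $\channel_{n+1,o}$ is not the winning maximum of column $o$, it is at most the second largest value of that column, $\channel_{n+1,o}\le{\rm smax}_{s}\channel_{s,o}$; this stays true in the degenerate case where the top two entries of the column coincide, because then ${\rm smax}$ equals that common value. But ${\rm smax}_{s}\channel_{s,o}$ is precisely the summand that column $o$ contributes to $q_{j(o)}$, and all the other summands of $q_{j(o)}$ are $\ge 0$, so $q_{j(o)}\ge{\rm smax}_{s}\channel_{s,o}\ge\channel_{n+1,o}=r_o$. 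Grouping the $r_o$'s by their winning row --- the relabelling implicit in the statement --- then gives $q_i\ge r_i$ for every $i$ (trivially $q_i\ge 0=r_i$ for a row winning no such column, and for $i=n+1$); summing these, $\sum_{i\neq n+1}q_i\ge\sum_o r_o = 1-p_{n+1}$, which together with \eqref{eq:main-proof-assumption} is exactly the inequality needed to finish the proof of \eqref{eq:reduction}.

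The one thing to be careful about --- the reason this is flagged ``immediate'' rather than trivial --- is purely combinatorial: one fixes a single tie-breaking convention so that each column is won by exactly one row (so the components of the $p_i$'s and the $r_o$'s partition the entries of row $n+1$), and then checks that every $r_o$ is charged to the $q_i$ of the unique row winning its column and to no other. There is no analytic content beyond the observation that a non-maximal entry of a column cannot exceed the second largest entry of that column.
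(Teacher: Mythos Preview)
The paper gives no proof of this Fact; it simply labels it ``immediate''. Your argument supplies exactly the observation the paper has in mind: in any column $o$ not won by row $n{+}1$, the entry $\channel_{n+1,o}$ is not the column maximum and hence is at most the column's second-largest entry, which is precisely the summand that column $o$ contributes to $q_{j(o)}$ for the winning row $j(o)$. You also correctly flag that the paper's indexing is loose (the $r_o$ are column-indexed while the Fact is row-indexed) and resolve it by grouping the $r_o$ by winning row, which is consistent with how the Fact is used downstream, where $\sum_{i=1}^n r_i + p_{n+1}=1$ is needed.

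One small expository wrinkle: if $r_i$ denotes the grouped sum $\sum_{o:j(o)=i}\channel_{n+1,o}$, then to obtain $q_i\ge r_i$ you should sum the \emph{termwise} inequality ${\rm smax}_s\,\channel_{s,o}\ge\channel_{n+1,o}$ over all $o$ with $j(o)=i$; the left side of that sum is exactly $q_i$. Summing instead the weaker per-column bound $q_{j(o)}\ge r_o$ that you state would only yield $|\{o:j(o)=i\}|\cdot q_i\ge r_i$. Your displayed chain already contains the correct termwise bound as its middle inequality, so this is only a matter of phrasing which inequality you carry through the grouping; the mathematics is right.
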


	We can now prove \autoref{eq:reduction}. We will prove it for
	$\priors' = (\nicefrac{1}{n}, ..., \nicefrac{1}{n}, 0)$, while
	$\priors = (\nicefrac{1}{(n+1)}, ..., \nicefrac{1}{(n+1)})$ necessarily.

	Observe that:
	\begin{align*}
	&\bayesrisk(\priors, \channel) = 1 - \frac{1}{n+1}\sum_{i=1}^{n+1} p_i =
		\frac{n+1-\sum_{i=1}^{n+1} p_i}{n+1}\\
	&\errorguesspriors(\priors) = \frac{n}{n+1}\\
	&\bayesrisk(\priors', \channel) = 1 - \frac{1}{n}\sum_{i=1}^{n} p_i - q_{n+1} =
		\frac{n - \sum_{i=1}^{n} p_i - q_{n+1}}{n}\\
	&\errorguesspriors(\priors') = \frac{n-1}{n} \,.
	\end{align*}

	Therefore, to prove \autoref{eq:reduction} we need to demonstrate that:
	$$(n-1)\left(n+1 - \sum_{i=1}^n p_i - p_{n+1}\right) \geq
		n\left(n-\sum_{i=1}^n p_i - q_{n+1}\right) \,.$$

	By simplifying and rearranging:
	$$\sum_{i=1}^n \, p_i - n\, p_{n+1} + n\, q_{n+1} + p_{n+1} \geq 1 \,.$$

	\noindent By the assumption in \autoref{eq:main-proof-assumption}, we have:
	\begin{Reason}
	\Step{}{
		\sum_{i=1}^n p_i \;-\; &n\, p_{n+1} + n\,q_{n+1} + p_{n+1}\\
	}
	\Step{$\geq$}{
		\sum_{i=1}^n p_i - \sum_{i=1}^n p_i + \sum_{i=1}^n q_i + p_{n+1}\\
	}
	\StepR{$=$}{\autoref{fact:main-proof-ineq}}{
		\sum_{i=1}^n q_i + p_{n+1}\\
	}
	\StepR{$\geq$}{$\channel$ is stochastic}{
		\sum_{i=1}^n r_i + p_{n+1} \Wide= 1 ~.
	}
	\end{Reason}
\end{proof}

\subsection{Proofs of \autoref{sec:fast-minimization}}
\label{appendix:fast-minimization-proofs}

\ThmBetaStarAsDiameter*
\begin{proof}
	Denote by $\prior{ab}$ the prior assigning probability $1/2$ to $a,bin\secretspace, a\neq b$.
	We show that
	\begin{equation}\label{eq6442}
		\beta(\prior{ab}, \channel)
		= 2 - \sum_o \max_{s\in{a,b}} \channel_{\secret, \object}
		= 1 - \frac{1}{2}\|\channel_a - \channel_b\|_1
		~,
	\end{equation}

	Denote $\channel_{\uparrow,o} = \max_{s\in{a,b}} \channel_{\secret, \object}$
	and $\channel_{\downarrow,o} = \min_{s\in{a,b}} \channel_{\secret, \object}$.
	The fact that
		$\beta(\prior{ab}, \channel)
		= 2 - \sum_o \channel_{\uparrow,o}$ comes directly from the definition
		of $\beta$.

	Since $\channel_a$ and $\channel_b$ are probability distributions, it holds that
	\[
		\sum_o (\channel_{\uparrow,o} + \channel_{\downarrow,o}) = 2
	\]
	Hence, we have that
	\[
		\| \channel_a - \channel_b\|_1
		= \sum_o (\channel_{\uparrow,o} - \channel_{\downarrow,o})
		= 2 \sum_o \channel_{\uparrow,o}  - 2
	\]
	and \eqref{eq6442} follows directly.
	We conclude by \autoref{thm:minbeta}.
\end{proof}

We now show that taking convex combinations of vectors cannot increase the diameter of a set,
which will be useful for both \autoref{prop:beta-star-approx} and \autoref{thm:cascade-comp}.
Denote by $\Diam{S}$, $\ConvHull{S}$ the diameter and the convex hull of
$S$ respectively.

\begin{lemma}\label{lem:diameter-convex-hull}
	For any $S \subseteq \Reals^n$, it holds that
	\[
		\Diam{\ConvHull{S}} \Wide= \Diam{S},
	\]
	where distances are measured wrt any norm $\|\cdot\|$.
\end{lemma}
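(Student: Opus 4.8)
The plan is to prove the two inequalities separately. Since $S \subseteq \ConvHull{S}$, we trivially get $\Diam{S} \le \Diam{\ConvHull{S}}$, so all the work lies in the reverse direction, and the only tool needed is convexity of a norm (equivalently, the triangle inequality applied to convex combinations).

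For the reverse inequality, I would first fix an arbitrary point $y \in \ConvHull{S}$ and use that $x \mapsto \norm{x-y}$ is convex. Writing any $x \in \ConvHull{S}$ as a finite convex combination $x = \sum_i a_i u_i$ with $u_i \in S$ and $\sum_i a_i = 1$, $a_i \ge 0$ (such a finite representation exists, e.g.\ by Carathéodory), the triangle inequality gives $\norm{x-y} \le \sum_i a_i \norm{u_i - y} \le \max_i \norm{u_i - y}$. Hence for every $x \in \ConvHull{S}$ there is some $u \in S$ with $\norm{x-y} \le \norm{u-y}$, so $\sup_{x \in \ConvHull{S}} \norm{x-y} = \sup_{u \in S} \norm{u-y}$.

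The second step is to repeat the same argument in the other coordinate: fixing $u \in S$ and writing $y = \sum_j b_j v_j$ with $v_j \in S$, convexity of $y \mapsto \norm{u-y}$ yields $\norm{u-y} \le \max_j \norm{u - v_j} \le \Diam{S}$. Chaining the two steps gives $\norm{x-y} \le \Diam{S}$ for all $x,y \in \ConvHull{S}$, i.e.\ $\Diam{\ConvHull{S}} \le \Diam{S}$, and combined with the trivial direction this gives the claimed equality. This holds for any norm, since we only used homogeneity-free facts, namely the triangle inequality.

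I do not expect any real obstacle here; the only points requiring a line of care are that $\Diam{\cdot}$ should be read as a supremum, so the argument works without assuming $S$ finite or closed, and that the finiteness of the convex-combination representation is what makes the passage from the weighted average to the $\max$ over indices legitimate. The payoff is that the same lemma is reused twice later: once to derive the upper bound in \autoref{prop:beta-star-approx}, and once for \autoref{thm:cascade-comp}, where the rows of the cascade $\channel^1\channel^2$ are convex combinations of the rows of $\channel^2$, so their pairwise $\lnorm{1}$ distances — and hence the diameter controlling $\beta^*$ via \autoref{thm:beta-star-as-diameter} — cannot exceed those of $\channel^2$.
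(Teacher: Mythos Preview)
Your proof is correct and follows essentially the same two-step reduction as the paper: first bound $\norm{x-y}$ for $x\in\ConvHull{S}$, $y\in S$ (resp.\ the paper's $a\in S$, $b\in\ConvHull{S}$), then extend to both points in $\ConvHull{S}$. The only cosmetic difference is that you phrase the key step algebraically via the triangle inequality on a finite convex combination, whereas the paper phrases it geometrically as ``closed balls are convex, hence $B_d[a]\supseteq S$ implies $B_d[a]\supseteq\ConvHull{S}$''; these are the same fact, and the paper's formulation avoids the (harmless) appeal to Carath\'eodory.
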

\begin{proof}
	Let $d = \Diam{S}$. Since $S \subseteq \ConvHull{S}$ we clearly have
	$d \le \Diam{\ConvHull{S}}$,
	the non-trivial part is to show that
	$d \ge \Diam{\ConvHull{S}}$.

	We first show that
	\begin{equation}\label{eq4352}
	\forall a\in S, b \in \ConvHull{S} : \|a - b\| \le d
	~.
	\end{equation}
	Let $a \in S,b\in \ConvHull{S}$ and denote by $B_d[a]$ the closed
	ball of radius $d$ centered at $a$.
	The diameter of $S$ is $d$, hence
	\begin{align*}
		B_d[a] &\supseteq S ~,
		&\text{and since balls are convex} \\
		B_d[a] = \ConvHull{B_d[a]} &\supseteq \ConvHull{S}
		~,
	\end{align*}
	which implies $\|a - b\| \le d$.

	Finally we show that
	\[
		\forall b,b'\in\ConvHull{S} : \|b - b'\| \le d
		~.
	\]
	Let $b,b'\in\ConvHull{S}$, from \eqref{eq4352} we know
	that $B_d[b] \supseteq S$, and since balls are convex we have that
	$B_d[b] \supseteq \ConvHull{S}$, which implies $\|b-b'\| \le d$.
\end{proof}

\betastarapprox*
\begin{proof}
	Let $s,s'\in\secretspace$, from the triangle inequality we have that
	\[
		\|\channel_s-\channel_s'\|_1 \Wide\le \|\channel_s-q\|_1 + \|\channel_s'-q\|_1
		~,
	\]
	hence $\Diam{\channel_\secretspace} \le 2 d$;
	\autoref{thm:beta-star-as-diameter} implies the lower bound.

	Moreover, assume that $q\in\ConvHull{\channel_\secretspace}$.
	Since $\channel_s \in \ConvHull{\channel_\secretspace}$ for all
	$s\in\secretspace$,
	it holds that
	\[
		\Diam{\ConvHull{\channel_\secretspace}}
		\Wide\ge
		\max_{s\in\secretspace}\|\channel_s - q\|_1
		\Wide= d
		~.
	\]
	From \autoref{lem:diameter-convex-hull} we get that
	\[
		\Diam{\channel_\secretspace}
		\Wide=\Diam{\ConvHull{\channel_\secretspace}}
		\Wide\ge d
		~,
	\]
	which gives us an upper bound
	from \autoref{thm:beta-star-as-diameter}.
\end{proof}

\subsection{Proofs of \autoref{sec:composing}}
\label{appendix:composing-proofs}

\ThmParallelComp*
\begin{proof}
	Recall that $\pi_{ab} \in \distset\secretspace$ denotes the
	prior that assigns probability $1/2$ to both $a,b\in\secretspace, a\neq b$.
	We will use the fact that for such priors, $\beta(\pi_{ab},\channel)$ can be written as
	\begin{equation}\label{eq0934}
		\beta(\pi_{ab},\channel)
		\Wide=
		\sum_{o \in \objectspace} \min_{s \in \{a,b\}} \channel_{s,o}
		~.
	\end{equation}
	This comes from the definition of $\beta$ and the fact that the rows $\channel_a$ and $\channel_b$
	are probability distributions, hence
	\[
		(\sum_{o \in \objectspace} \min_{s \in \{a,b\}} \channel_{s,o}) +
		(\sum_{o \in \objectspace} \max_{s \in \{a,b\}} \channel_{s,o}) =
		\sum_{o \in \objectspace} \sum_{s \in \{a,b\}} \channel_{s,o} =
		2
		~.
	\]

	We also use the basic fact that
	for non-negative $\{q_i,r_i\}_i$:
	\begin{equation}
		\label{eq4093}
		(\min_i q_i) \cdot (\min_i r_i)
		\Wide\le
		\min_i (q_i \cdot r_i)
	\end{equation}

	The proof proceeds as follows:
	\begin{Reason}
	\Step{}{
		\minbeta(\ChA) \minbeta(\ChB)
	}
	\StepR{$=$}{\autoref{thm:minbeta}}{
		\big( \min_{a,b \in \secretspace} \beta(\pi_{ab}, \ChA) \big) \big( \min_{a,b \in \secretspace} \beta(\pi_{ab}, \ChB) \big)
	}
	\StepR{$\le$}{\eqref{eq4093}}{
		\min_{a,b \in \secretspace} \big(\beta(\pi_{ab}, \ChA)\cdot \beta(\pi_{ab}, \ChB) \big)
	}
	\StepR{$=$}{\eqref{eq0934}}{
		\min_{a,b \in \secretspace}
		\big( \sum_{o_1 \in \objectspace^1}  \min_{s\in\{a,b\}} \ChA_{s,o_1} \big)
		\big( \sum_{o_2 \in \objectspace^2}  \min_{s\in\{a,b\}} \ChB_{s,o_2} \big)
	}
	\WideStepR{$=$}{Rearranging sums, distributively}{
		\min_{a,b \in \secretspace}
		\sum_{o_1 \in \objectspace^1}\sum_{o_2 \in \objectspace^2}
		\big(\min_{s\in\{a,b\}} \ChA_{s,o_1} \big)
		\big(\min_{s\in\{a,b\}} \ChB_{s,o_2})\big)
	}
	\StepR{$\le$}{\eqref{eq4093}}{
		\min_{a,b \in \secretspace}
		\sum_{o_1 \in \objectspace^1}\sum_{o_2 \in \objectspace^2}
		\min_{s\in\{a,b\}} \ChA_{s,o_1}  \ChB_{s,o_2}
	}
	\StepR{$=$}{Def. of $\ChA || \ChB$}{
		\min_{a,b \in \secretspace} \sum_{o\in\objectspace^1\times\objectspace^2}
		\min_{s\in\{a,b\}} (\ChA||\ChB)_{s,o}
	}
	\StepR{$=$}{\eqref{eq0934}}{
		\min_{a,b \in \secretspace} \beta (\pi_{ab}, \ChA || \ChB)
	}
	\Step{$=$}{
		\minbeta(\ChA || \ChB)
	}
	\end{Reason}

	This bound is tight. E.g., $\minbeta(\channel || \channel) = \minbeta(\channel)\cdot\minbeta(\channel)$ for
	$$\channel = \begin{bmatrix}
		0.4 & 0.6\\
		0 & 1\\
	\end{bmatrix} \,.$$
\end{proof}

\ThmCascadeComp*
\begin{proof}
	The
	$\minbeta(\channel^1\channel^2) \geq \minbeta(\channel^1)$
	part is easy, and comes from the fact that
	\[
		\bayesrisk(\pi, \channel^1\channel^2) \Wide\geq \bayesrisk(\pi, \channel^1)
	\]
	for all priors $\pi$, hence also for the one achieving $\minbeta(\channel^1\channel^2)$.

	The more interesting part is to show that
	$\minbeta(\channel^1\channel^2) \geq \minbeta(\channel^2)$.
	The key observation is that the rows of $\channel^1\channel^2$ are convex combinations
	of those of $\channel^2$:
	\[
		(\channel^1\channel^2)_{s_1} = \sum_{s_2 \in \secretspace^2} \channel^1_{s_1, s_2} \channel^2_{s_2}
		\qquad \text{for all } s_1 \in \secretspace^1
	\]
	Denote by $\channel_\secretspace =  \{ C_{s}\ |\ s \in \secretspace \}$
	the set of $\channel$'s rows, we have:
	\begin{align*}
		(\channel^1\channel^2)_{\secretspace^1}
		&\Wide\subseteq \ConvHull{\channel^2_{\secretspace^2}}
			~,
			&\text{hence} \\
		\Diam{(\channel^1\channel^2)_{\secretspace^1}}
		&\Wide\le \Diam{\ConvHull{\channel^2_{\secretspace^2}}}
		~.
	\end{align*}
	Finally, from \autoref{lem:diameter-convex-hull}
	we get that
	\[
		\Diam{(\channel^1\channel^2)_{\secretspace^1}}
		\Wide\le \Diam{\ConvHull{\channel^2_{\secretspace^2}}}
		\Wide= \Diam{\channel^2_{\secretspace^2}}
		~.
	\]
	\autoref{thm:beta-star-as-diameter} concludes,
	since
	$
		\minbeta(\channel) = 1- \frac{1}{2} \Diam{\channel_\secretspace}
	$.
\end{proof}

\subsection{Proofs of \autoref{sec:other-notions}}
\label{appendix:proofs-other-notions}

\ThmBoundOnBayes*
\begin{proof}
	From \autoref{thm:minbeta} we know that for every \channel there exists a $\priors^*$  with a support containing only two secrets, and uniformly distributed on them, such that
	$\minbeta(\priors^*) = \min_\priors \beta(\priors,\channel)$.
	Given \channel, let us assume, without loss of generality,  that the two secrets are $s_1$ and $s_2$, and that for each $o$ in the first $k$ columns we have $\channel_{s_1,o}\geq \channel_{s_2,o}$, and in the last $m-k$ columns we have $\channel_{s_1,o}< \channel_{s_2,o}$.
	Then, if we define
	\begin{equation}
	\label{eqn:abdef}
	a \stackrel{\rm def}{=} \sum_{j=1}^{k}  \channel_{s_1,o_j} \quad \mbox{and} \quad b \stackrel{\rm def}{=}  \sum_{j=k+1}^{m}  \channel_{s_2,o_j},
	\end{equation}
	we have $1-b \leq a $ and $1-a < b$. From the constraints \eqref{eqn:LDP}, we also know that $a\leq \exp(\varepsilon)(1-b)$ and $b \leq \exp(\varepsilon) (1-a)$.
	Hence there exists $x,y$ with $1\leq x \leq \exp(\varepsilon) $ and $1< y \leq \exp(\varepsilon) $  such that $a = x(1-b)$ and $b = y(1-a)$.
	The figure below illustrates the situation  in the first two rows of the matrix:
	\begin{figure}[h]
		\begin{center}
			\begin{tabular}{|c|c|}
				\hline
				$\quad a \; = \;  x\,(1-b) \quad$ & $\quad 1- a\quad$ \\
				\hline
				$\quad 1-b \quad$  & $\quad  b \; = \;  y\,(1- a) \quad$ \\
				\hline
			\end{tabular}
		\end{center}
	\end{figure}

	From $a = x(1-b)$ and $b = y(1-a)$ we derive
	\begin{equation}\label{eqn:aandb}
	a = \frac{xy-x}{xy -1} \quad \mbox{and} \quad b = \frac{xy-y}{xy -1} ,
	\end{equation}
	from which we can compute the Bayes risk of $\channel$ in $\priors^*$, as a function of $x$ and $y$:
	\begin{eqnarray}
	f(x,y) &\stackrel{\rm def}{=} &\bayesrisk(\priors^*,\channel) \\
	&= &1 - \pi^*_{s_1} a - \pi^*_{s_2}b \\
	&= &1 - \frac{1}{2}(a+b) \\
	&= &\frac{\frac{1}{2}(x+y)-1}{xy-1}.\label{eqn:br}
	\end{eqnarray}
	In order to find the minimum of $f(x,y)$, we compute its partial derivatives:
	\begin{align*}
		\frac{\partial f}{\partial x}(x,y) &=
			\frac{-\frac{1}{2}y^2 + y -\frac{1}{2}}{(xy-1)^2} ~,
		&
		\frac{\partial f}{\partial y}(x,y)
		&= \frac{-\frac{1}{2}x^2 + x -\frac{1}{2}}{(xy-1)^2} ~.
	\end{align*}
	Since $1\leq x \leq \exp(\varepsilon) $ and $1< y \leq \exp(\varepsilon) $, we can easily see that  both partial derivatives are negative, hence the minimum value of $\bayesrisk(\priors^*, \channel)$ is obtained for the highest possible values of $x$ and $y$, namely $x=y=\exp(\epsilon)$.
	Therefore, using ~\eqref{eqn:br}:
	\begin{equation}\label{eqn:brr}
	\bayesrisk(\priors^*, \channel)  \;  \geq \; \frac{\frac{1}{2}(\exp(\epsilon)+\exp(\epsilon))-1}{\exp(\epsilon)\exp(\epsilon)-1}  \; = \; \frac{1}{\exp(\epsilon)-1}.
	\end{equation}\\
	Finally, since $G(\priors^*) = \nicefrac{1}{2}$, we conclude:
	$$
	\begin{array}{rclr}
	\beta(\priors,\channel) &= &\displaystyle\frac{\bayesrisk(\priors,\channel)}{G(\priors)} \\
	&\geq &\displaystyle\frac{\bayesrisk(\priors^*,\channel)}{G(\priors^*)} &\;\;\;\;\mbox{(by \autoref{thm:minbeta})}\\
	&\geq &\displaystyle\frac{2}{1+\exp(\varepsilon)}&\;\;\;\;\mbox{(by \eqref{eqn:brr})}.
	\end{array}
	$$

	\noindent
	\ref{LDP2}
	Consider the values $a,b$   defined   in \eqref{eqn:abdef}. Any $n\times m$ matrix $\minchannel$ for which these values are
	$$a \; = \;  b \; = \; \frac{\exp(\epsilon)}{\exp(\epsilon) +1 }$$
	achieves the above lower bound  for $\beta$:
	$$\minbeta({\minchannel}) \; = \;  \beta(\pi^*,\minchannel) \; = \;  \frac{2}{1+\exp(\varepsilon)}$$
	Hence the bound is a minimum.
	\autoref{fig:min channels} shows two examples of such matrices.
\end{proof}

\subsection{Generalization using gain/loss functions}
\label{appendix:gain-loss}

We describe here the generalizations of $\mulleakage$ and $\beta$,
parameterized by a \emph{gain function} $g$ (for vulnerability) or a
\emph{loss function $\ell$} (for risk),
discussed in \autoref{sec:other-notions}.

Let $\guessspace$ be the set of \emph{guesses} the adversary
can make \emph{about} the secret; a natural choice is $\guessspace=\secretspace$, but
other choices model a variety of adversaries, (e.g., guessing
a part or property of the secret, or making an approximate guess).
A \emph{gain} function $g(w,s)$ models the adversary's gain when guessing $w\in\guessspace$ and
the actual secret is $s\in\secretspace$. Prior and posterior $g$-vulnerability \cite{alvim2012measuring}
are the \emph{expected gain of an optimal guess}:
\begin{align*}
	V_g(\pi) &= \textstyle\max_w \sum_s \pi_s g(w,s) ~,
	&
	V_g(\pi, \channel) &= \textstyle\sum_o p(o) V_g(\delta^o) ~,
\end{align*}
where $\delta^o$ is the posterior
distribution on $\secretspace$ produced by the observation $o$.
Then $g$-leakage expresses how much vulnerability increases due to the channel:
$\mulleakage_g(\pi,\channel)=V_g(\pi, \channel) / V_g(\pi)$.

Similarly, we use a \emph{loss} function $\ell(w,s)$, modelling
how the adversary's loss in guessing $w$ when the secret is $s$.
Prior and posterior $\ell$-risk are the expected
loss of an optimal guess:
\begin{align*}
	R_\ell(\pi) &= \textstyle\min_w \sum_s \pi_s \ell(w,s) ~,
	&
	R_\ell(\pi, \channel) &= \textstyle\sum_o p(o) R_\ell(\delta^o) ~.
\end{align*}
Then $\beta_\ell$ can be defined by comparing prior and posterior
risk: $\beta_\ell(\pi,\channel) =R_\ell(\pi, \channel) / R_\ell(\pi)$.

Clearly, $\mulleakage = \mulleakage_g$ for the
\emph{identity} gain function, given by
$g(w,s) = 1$ iff $w=s$ and $0$ otherwise. Similarly, $\beta = \beta_\ell$
for the 0-1 loss function, $\ell(w,s) =0$ iff $w=s$ and $0$ otherwise.

\subsection{Proofs of \autoref{sec:case-study}}

We first prove results on the Bayes security of Gaussian mechanisms, which serves
as a template for the security derivation for the Laplace distribution.

\ThmGaussian*

Before determining the Bayes security of this mechanism, we prove a simple lemma:

\begin{lemma}[Total variation of two Gaussians]
	\label{thm:tv-gaussians}
	$$\TotalVar(\mathcal{N}(\mu_p, \sigma^2), \mathcal{N}(\mu_q, \sigma^2)) =
	\Phi(\alpha) - \Phi(-\alpha)$$
	where $\Phi$ is the CDF of $\mathcal{N}(0, 1)$
	and $\alpha = \frac{|\mu_p-\mu_q|}{2\sigma}$.
\end{lemma}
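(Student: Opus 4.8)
The plan is to invoke the standard characterisation of total variation in terms of the region where one density dominates the other, and to observe that for two Gaussians of equal variance this region is a half-line whose endpoint is the midpoint of the two means. Recall that for distributions $P,Q$ with densities $p,q$ one has $\TotalVar(P,Q) = \int_{A}(p(x)-q(x))\,dx = P(A) - Q(A)$, where $A = \{x : p(x) \ge q(x)\}$.

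First I would reduce to the case $\mu_p < \mu_q$: total variation is symmetric, the claimed expression depends on the means only through $|\mu_p-\mu_q|$, and when $\mu_p = \mu_q$ both sides are $0$; so this is without loss of generality. Then, comparing the two Gaussian densities, $p(x) \ge q(x)$ holds iff $(x-\mu_p)^2 \le (x-\mu_q)^2$, which after expanding and cancelling the $x^2$ terms is equivalent to $x \le m$ with $m \eqdef \frac{\mu_p+\mu_q}{2}$. Hence $A = (-\infty, m]$ and $\TotalVar(P,Q) = P((-\infty,m]) - Q((-\infty,m])$.

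Finally I would standardise: $P((-\infty,m]) = \Phi\!\left(\frac{m-\mu_p}{\sigma}\right)$ and $Q((-\infty,m]) = \Phi\!\left(\frac{m-\mu_q}{\sigma}\right)$, and since $m-\mu_p = \frac{\mu_q-\mu_p}{2}$ and $m-\mu_q = -\frac{\mu_q-\mu_p}{2}$, with $\alpha = \frac{\mu_q-\mu_p}{2\sigma}$ under our ordering (which is $\frac{|\mu_p-\mu_q|}{2\sigma}$ in general), this yields $\TotalVar(P,Q) = \Phi(\alpha) - \Phi(-\alpha)$.

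There is no genuinely hard step here: the only thing requiring care is the sign bookkeeping in the case split and the observation that the two densities cross at exactly one point, so that $|p-q|$ integrates cleanly without splitting the real line into more than one interval. One could equivalently write the answer as $2\Phi(\alpha)-1$, but that is cosmetic. This lemma then feeds directly into the proof of \autoref{thm:beta-gaussians} by combining it with \autoref{thm:beta-star-as-diameter}, since $\beta^*(\mathcal{G})$ is the complement of the total variation between the two maximally distant output rows of the mechanism.
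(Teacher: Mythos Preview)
Your proof is correct and follows essentially the same route as the paper's: identify the single crossing point of the two densities at the midpoint $m=\frac{\mu_p+\mu_q}{2}$, and evaluate the total variation as the difference of the two CDFs at $m$. The only cosmetic difference is that the paper first treats the case $\sigma=1$ and then appeals to scale-invariance of total variation, whereas you standardise directly for general $\sigma$; both arrive at $\Phi(\alpha)-\Phi(-\alpha)$.
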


\begin{proof}
	This proof follows closely the proof of Theorem 3.1 in \cite{li19}.
	Let us first calculate the total variation between the following two Gaussians:
	$P \sim \mathcal{N}(\mu_p, 1)$ and $Q \sim \mathcal{N}(\mu_q, 1)$.
	Without any loss of generality, let
	$\mu_p \leq \mu_q$;
	hence, $P(x) \geq Q(x)$ for
	$x \leq \frac{\mu_p+\mu_q}{2}$.

	\begin{align*}
		\TotalVar(P, Q)
		&=\int_{-\infty}^{\frac{\mu_p+\mu_q}{2}}
		P(x)-Q(x) dx\\
		&= \int_{-\frac{\mu_p-\mu_q}{2}}^{\frac{\mu_p-\mu_q}{2}} P(x)-Q(x) dx\\
		&=P_{A\sim\mathcal{N}(0, 1)}\left(A \in \left[-\frac{\mu_p-\mu_q}{2}, \frac{\mu_p-\mu_q}{2}\right]\right)\\
		&=\Phi\left(\frac{\mu_p-\mu_q}{2}\right) -
		\Phi\left(-\frac{\mu_p-\mu_q}{2}\right)
	\end{align*}

	Now, observe that the total variation is scale-independent.
	Hence we have: $\TotalVar(\mathcal{N}(\mu_p, \sigma^2), \mathcal{N}(\mu_q, \sigma^2)) = \TotalVar(\mathcal{N}(\nicefrac{\mu_p}{\sigma}, 1), \mathcal{N}(\nicefrac{\mu_q}{\sigma}, 1))$.
	Applying the result obtained above to the scaled means concludes the proof.
\end{proof}

\ThmLaplace*

\begin{proof}[Sketch proof]
	We first compute the total variation distance between
	$\Lambda(\mu_p, 1)$ and $\Lambda(\mu_q, 1)$.
	By applying similar arguments to the one used for \autoref{thm:tv-gaussians},
	we obtain:
	$$\TotalVar(\Lambda(\mu_p, 1), \Lambda(\mu_q, 1)) = F(\alpha) - F(-\alpha) \,,$$
	where $F$ is the CDF of $\mathcal{L}(0,1)$, and
	$\alpha = \frac{|\mu_p-\mu_q|}{2}$.

	Via explicit computation of $F$:
	$$\TotalVar(\Lambda(\mu_p, 1), \Lambda(\mu_q, 1)) = 1-\exp(-\alpha) \,.$$

	We apply the above result to compute the total variation between
	$\Lambda(\nicefrac{\mu_p}{\lambda}, 1)$ and $\Lambda(\nicefrac{\mu_q}{\lambda}, 1)$
	(which is equal to the total variation between
	$\Lambda(\mu_p, \lambda)$ and $\Lambda(\mu_q, \lambda)$),
	and use the equivalence between total variation and Bayes security
	metric to conclude the proof.
\end{proof}

\subsection{Empirical evaluation of Randomized Response}
\label{appendix:rr}

\parabf{Dataset}
The US 1990 Census dataset (\census) comprises 2,458,285 records with a number of attributes
for each record. As Murakami and Kawamoto~\cite{Murakami019},
we reduced the attributes to those we judged potentially sensitive: age (8 values), income (5 values),
marital status (5 values), and sex (2 values).
Overall, the number of values that can be taken by the
vectors describing an individual is $8 \times 5 \times 5 \times 2 = 400$.
This is the size of both the secret and output spaces for RR.

\parabf{Methodology}
First, we use RR to obfuscate the \census dataset to
guarantee different levels of $\varepsilon$-LDP,
and measure the resulting utility by computing
the total variation distance between the empirical estimation $\hat{p}$
and the true distribution $p$.

Second, we compute the Bayes security for RR,
both analytically using \autoref{eq:minbetaRR} for different number of secrets $\nsecrets = |\secretspace|$,
and empirically using \texttt{fbleau}~\cite{cherubin2019fbleau}.
Because the Bayes security between any pair
of secrets obfuscated by RR is identical, one computation
for one arbitrary pair suffices to obtain \minbeta.

\begin{figure}
	\centering
		\includegraphics[width=0.45\columnwidth]{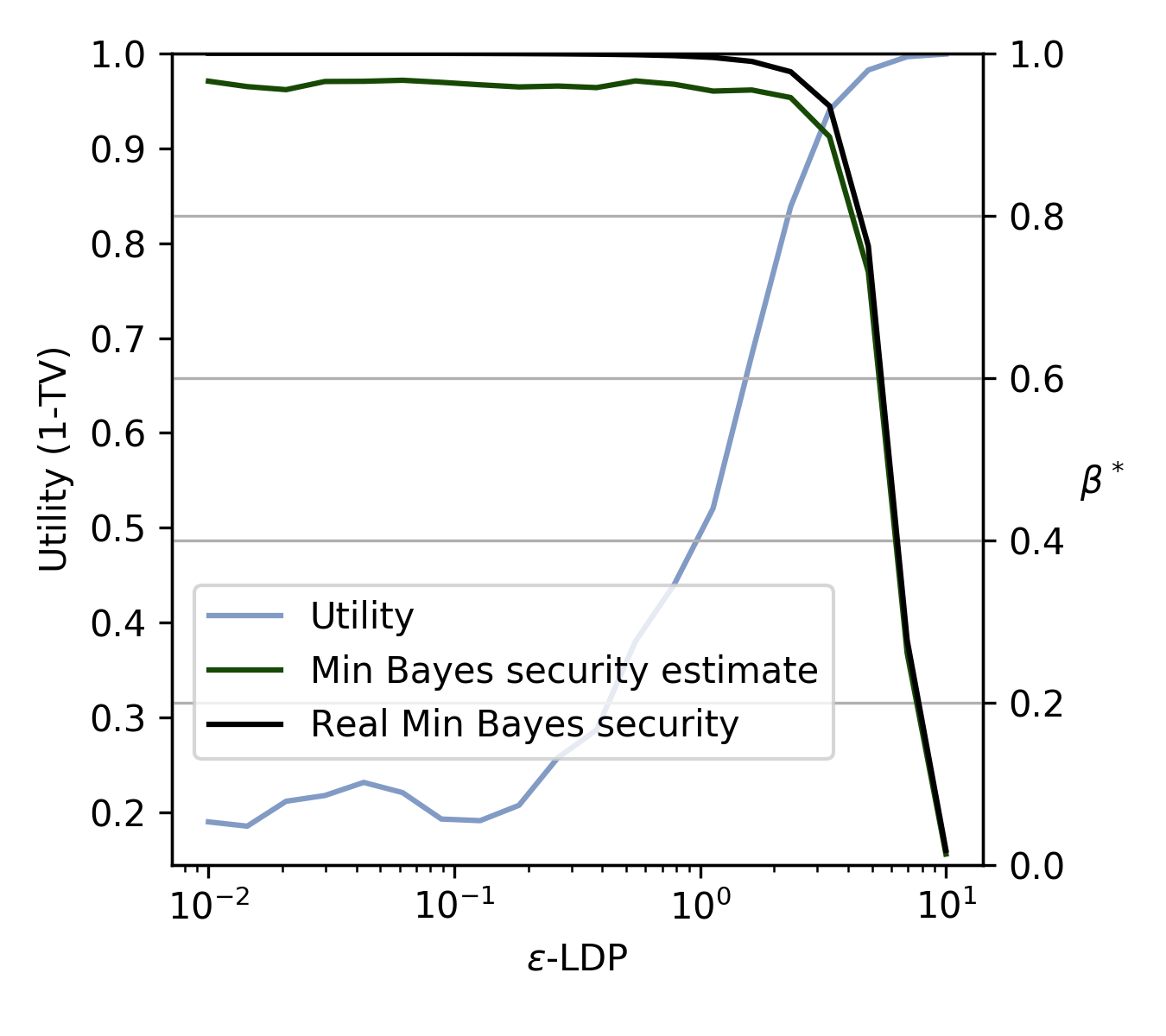}
	\caption{Randomized Response obfuscation mechanism on the US \census dataset.
	Utility (blue line) vs security measured in Min Bayes security (green line) and DP (x-axis).
	}
	\label{fig:rr}
\end{figure}

\parabf{Results} We show in \autoref{fig:rr}
the security of the empirical estimation $\hat{p}$
(Bayes security, empirical and estimation, in green, and DP in the x-axis),
and the utility after applying RR to obtain $\varepsilon$-LDP for
$\varepsilon \in [0.01, 10]$. We observe that
utility is low for values $\varepsilon < 2$.
Concretely, utility reaches 95\% for $\varepsilon = 3.3$.
While this is weak protection in terms of differential privacy,
we obtain $\minbeta=0.96$ \texttt{fbleau} estimate
($\minbeta=0.99999$ from \autoref{eq:minbetaRR}),
which means that the adversary's probability
of success is small. Even for $\varepsilon = 4.8$, which
yields utility of $99\%$, \minbeta is above $0.9$
($\minbeta=0.99995$ analytic value).

\subsection{Bayes security approximation via \autoref{prop:beta-star-approx}}
\label{appendix:beta-approximation}

\begin{figure}[ht!]
	\centering
	\includegraphics[width=0.75\linewidth]{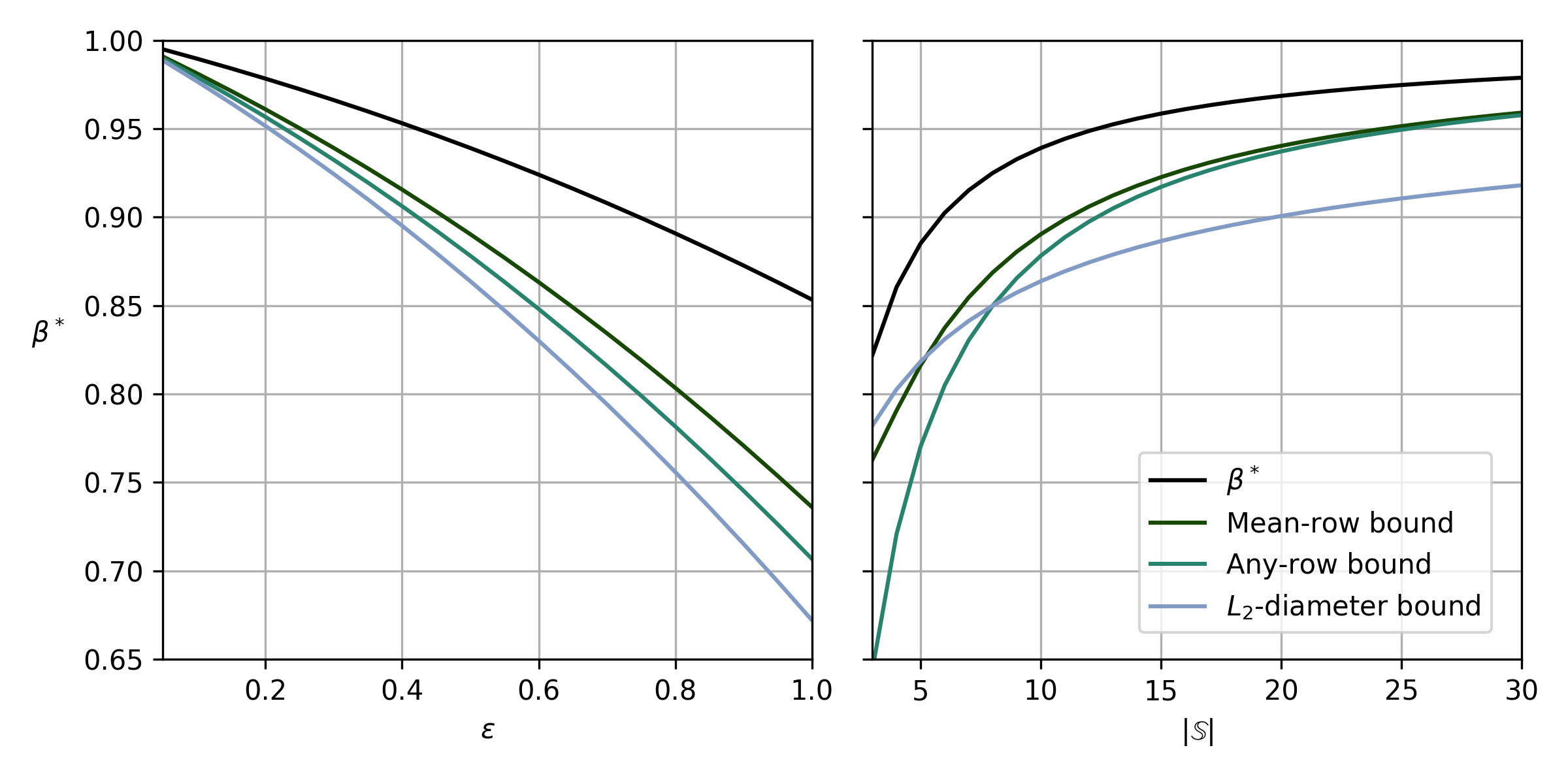}
	\caption{$\minbeta$ and bounds for Rand. Response. Left: $|\secretspace|=|\objectspace|=10$ while
		varying $\varepsilon$. Right: $\varepsilon = 0.5$ while varying $|\secretspace|=|\objectspace|$.}
	\label{fig:rr-bounds}
\end{figure}

In \autoref{fig:rr-bounds}, we show $\minbeta$ and various lower
bounds for the Randomized Response mechanism (RR) (\autoref{sec:case-study}). Two of the bounds are obtained via
\autoref{prop:beta-star-approx}, by setting $q$ to be the mean row
(the uniform distribution for RR)
and any row of $\channel$ (all rows are equal for RR). The third bound is obtained
via the $\lnorm{2}$-diameter by using the trivial embedding.
The mean-row bound is the most accurate
in this case.

\end{document}